\definecolor{darkred}{rgb}{0.8,0.1,0.1}
\theoremstyle{plain}
\newtheorem{theo}{Theorem}[section]
\newtheorem{lem}[theo]{Lemma}
\newtheorem{propo}[theo]{Proposition}
\newtheorem{cor}[theo]{Corollary}
\theoremstyle{definition}
\newtheorem{defi}[theo]{Definition}
\newenvironment{ex}
  {\pushQED{\qed}\exx}
  {\popQED\endexx}
\newenvironment{rem}
  {\pushQED{\qed}\remm}
  {\popQED\endremm}
\numberwithin{equation}{section}
\def\nn{\nonumber}
\def\bbR{\mathbb{R}}
\def\bbC{\mathbb{C}}
\def\bbZ{\mathbb{Z}}
\def\Hom{\mathrm{Hom}}
\def\id{\mathrm{id}}
\def\supp{\mathrm{supp}}
\def\vol{\mathrm{vol}}
\def\cc{\mathrm{c}}
\def\1{I}
\def\oone{\mathbbm{1}}
\def\Cauchy{\mathcal{C}}
\def\Reg{\mathcal{R}}
\def\Lan{\operatorname{Lan}}
\def\Alg{\mathbf{Alg}}
\def\Vec{\mathbf{Vec}}
\def\QFT{\mathbf{QFT}}
\def\qft{\mathbf{qft}}
\def\IQFT{\mathbf{IQFT}}
\def\CC{\mathbf{C}}
\def\DD{\mathbf{D}}
\def\ext{\operatorname{ext}}
\def\res{\operatorname{res}}
\def\AAA{\mathfrak{A}}
\def\BBB{\mathfrak{B}}
\def\III{\mathfrak{I}}
\def\KKK{\mathfrak{K}}
\newcommand\und[1]{\underline{#1}}
\DeclareMathOperator*{\interior}{\mathrm{int}}
\def\sk{\vspace{2mm}}
\let\@fnsymbol\@alph
\title{%
Algebraic quantum field theory on \\
spacetimes with timelike boundary
}
\author{%
Marco Benini$^{1,a}$,\ 
Claudio Dappiaggi$^{2,b}$\ and\
Alexander Schenkel$^{3,c}$\vspace{4mm}\\
{\small ${}^1$ Fachbereich Mathematik, Universit\"at Hamburg,}\\
{\small Bundesstr.~55, 20146 Hamburg, Germany.}\vspace{3mm}\\
{\small ${}^2$ Dipartimento di Fisica, Universit\`a di Pavia 
\& INFN, Sezione di Pavia,}\\
{\small Via Bassi~6, 27100 Pavia, Italy.}\vspace{3mm}\\
{\small ${}^3$ School of Mathematical Sciences, University of Nottingham,}\\
{\small University Park, Nottingham NG7 2RD, United Kingdom.}\vspace{5mm}\\
{\small \begin{tabular}{ll}
Email: & ${}^a$~\texttt{marco.benini@uni-hamburg.de}\\
& ${}^b$~\texttt{claudio.dappiaggi@unipv.it}\\
& ${}^c$~\texttt{alexander.schenkel@nottingham.ac.uk}\vspace{3mm}
\end{tabular}
}
}
\date{May 2018}
\begin{document}

\maketitle

\begin{center}
{\em Dedicated to Klaus Fredenhagen on the occasion of his 70th birthday}\vspace{2mm}
\end{center}

\begin{abstract}
We analyze quantum field theories on spacetimes $M$ with timelike boundary from a model-independent perspective. We construct an adjunction which describes a universal extension to the whole spacetime $M$ of theories defined only on the interior $\mathrm{int}M$. The unit of this adjunction is a natural isomorphism, which implies that our universal extension satisfies Kay's F-locality property. Our main result is the following characterization theorem: Every quantum field theory on $M$ that is additive from the interior (i.e.\ generated by observables localized in the interior) admits a presentation by a quantum field theory on the interior $\mathrm{int}M$ and an ideal of its universal extension that is trivial on the interior. We shall illustrate our constructions by applying them to the free Klein-Gordon field.
\end{abstract}

\vspace{2mm}

\paragraph*{Report no.:} ZMP-HH/17-31, Hamburger Beitr\"age zur Mathematik Nr.\ 714

\paragraph*{Keywords:} Algebraic quantum field theory, spacetimes with timelike boundary, universal constructions, F-locality, boundary conditions 

\paragraph*{MSC 2010:} 81Txx

\newpage 

%{\baselineskip=12pt
%%\setcounter{tocdepth}{2}
\tableofcontents
%}

%%%%%%%%%%%%%%%%%%%%%%%%%%%%%%%%%%%%%%%%%%%%%%%%
%%%%%%%%%%%%%%%%%%%%%%%%%%%%%%%%%%%%%%%%%%%%%%%%

\section{\label{sec:intro}Introduction and summary}
Algebraic quantum field theory is a powerful and far developed 
framework to address model-independent aspects of quantum field 
theories on Minkowski spacetime \cite{Haag} 
and more generally on globally hyperbolic spacetimes \cite{Brunetti}.
In addition to establishing the axiomatic foundations for quantum field theory,
the algebraic approach has provided a variety of 
mathematically rigorous constructions of non-interacting
models, see e.g.\ the reviews \cite{BD,BDH,BGP}, and more interestingly 
also perturbatively interacting quantum field theories, see e.g.\ the recent 
monograph \cite{Rejzner}. It is worth emphasizing that many of the techniques
involved in such constructions, e.g.\ existence and uniqueness of Green's operators and
the singular structure of propagators, crucially rely on the hypothesis that
the spacetime is globally hyperbolic and has empty boundary.
\sk

Even though globally hyperbolic spacetimes have plenty of applications
to physics, there exist also important and interesting situations
which require non-globally hyperbolic spacetimes, possibly with a non-trivial boundary. 
On the one hand, recent developments in high energy physics and string theory are 
strongly focused on anti-de Sitter spacetime,
which is not globally hyperbolic and has a (conformal) timelike boundary. On the other hand,
experimental setups for studying the Casimir effect confine quantum field theories
between several metal plates (or other shapes), which may be modeled theoretically by introducing 
timelike boundaries to the system. This immediately prompts the question whether the rigorous 
framework of algebraic quantum field theory admits a generalization to cover such scenarios.
\sk

Most existing works on algebraic quantum field theory on spacetimes
with a timelike boundary focus on the construction of concrete examples,
such as the free Klein-Gordon field on simple classes of spacetimes. 
The basic strategy employed in such constructions is to analyze the initial
value problem on a given spacetime with timelike boundary, which has
to be supplemented by suitable boundary conditions. Different choices
of boundary conditions lead to different Green's operators for the equation
of motion, which is in sharp contrast to the well-known existence and 
uniqueness results on globally hyperbolic spacetimes with empty boundary.
Recent works addressing this problem are \cite{Zahn:2015due} and \cite{Ishibashi:2003jd,Ishibashi:2004wx}, 
the latter extending the analysis of \cite{Wald:1980jn}. 
For specific choices of boundary conditions, there exist successful constructions
of algebraic quantum field theories on spacetimes with timelike boundary, see e.g.\
\cite{Casimir,Dappiaggi2,Dappiaggi3,Dappiaggi1}.
The main message of these works is that the algebraic approach 
is versatile enough to account also for these models, although some key structures, such as 
for example the notion of Hadamard states \cite{Dappiaggi3,Wrochna:2016ruq}, 
should be modified accordingly.
\sk

Unfortunately, model-independent results on algebraic quantum field theory
on spacetimes with timelike boundary are more scarce. There are, however,
some notable and very interesting works in this direction: On the one hand, Rehren's proposal for 
algebraic holography \cite{Rehren} initiated the rigorous study of quantum field
theories on the anti-de Sitter spacetime. This has been further elaborated in
\cite{Duetsch:2002hc} and extended to asymptotically AdS spacetimes in \cite{Ribeiro}.
On the other hand, inspired by Fredenhagen's universal algebra \cite{Fre1,Fre2,Fre3}, 
a very interesting construction and analysis of \textit{global} algebras of observables 
on spacetimes with timelike boundaries has been performed in \cite{Sommer}. The most 
notable outcome is the existence  of a relationship between maximal ideals of this algebra and 
boundary conditions, a result which has been of inspiration for this work.
\sk

In the present paper we shall analyze quantum field theories on spacetimes 
with timelike boundary from a model-independent perspective. 
We are mainly interested in understanding and proving structural results for
whole categories of quantum field theories, in contrast to focusing on
particular theories. Such questions can be naturally addressed by using 
techniques from the recently developed operadic approach 
to algebraic quantum field theory \cite{operad}.
Let us describe rather informally the
basic idea of our construction and its implications: 
Given a spacetime $M$ with timelike
boundary, an algebraic quantum field theory on $M$ is a functor
$\BBB: \Reg_M\to\Alg$ assigning algebras of observables to suitable regions $U\subseteq M$
(possibly intersecting the boundary), which satisfies the causality and time-slice axioms.
We denote by $\QFT(M)$ the category of algebraic quantum field theories on $M$.
Denoting the full subcategory of regions in the interior of $M$ by $\Reg_{\interior{M}} \subseteq \Reg_M$,
we may restrict any theory $\BBB\in\QFT(M)$ to a theory $\res\BBB\in \QFT(\interior{M})$ defined 
only on the interior regions. Notice that it is in practice much easier to analyze and construct
theories on $\interior{M}$ as opposed to theories on the whole spacetime $M$. 
This is because the former are postulated to be insensitive to the boundary by 
Kay's F-locality principle \cite{Flocality}. As a first result we shall construct a left adjoint of the restriction functor
$\res : \QFT(M)\to \QFT(\interior{M})$, which we call the universal extension
functor $\ext : \QFT(\interior{M})\to\QFT(M)$. This means that given any theory
$\AAA \in\QFT(\interior{M})$ that is defined only on the interior regions in $M$,
we obtain a universal extension $\ext \AAA\in\QFT(M)$ to all regions in $M$,
including those that intersect the boundary. It is worth to emphasize that
the adjective \textit{universal} above refers to the categorical concept of
universal properties. Below we explain in which sense $\ext$ is also ``universal''
in a more physical meaning of the word.
\sk

It is crucial to emphasize that our universal
extension $\ext\AAA\in\QFT(M)$ is always a bona fide algebraic quantum field theory in the sense
that it satisfies the causality and time-slice axioms. 
This is granted by the operadic approach 
to algebraic quantum field theory of \cite{operad}.  
In particular, the $\ext \dashv \res$ adjunction investigated in the present paper 
is one concrete instance of a whole family 
of adjunctions between categories of algebraic quantum field theories 
that naturally arise within the theory of colored operads and algebras over them.
\sk

A far reaching implication of the above mentioned $\ext \dashv \res$ adjunction is 
a characterization theorem that we shall establish
for quantum field theories on spacetimes with timelike boundary. Given any theory $\BBB\in\QFT(M)$
on a spacetime $M$ with timelike boundary, we can restrict and universally extend
to obtain another such theory $\ext\res \BBB\in\QFT(M)$. The adjunction also provides
us with a natural comparison map between these theories, namely
the counit $\epsilon_{\BBB}: \ext\res \BBB\to \BBB$ of the adjunction. Our result in 
Theorem \ref{theo:quotientVSadditivity} and Corollary \ref{cor:quotientVSadditivity}
is that $\epsilon_\BBB$ induces an isomorphism $\ext\res \BBB /\ker\epsilon_\BBB \cong \BBB$
of quantum field theories if and only if $\BBB$ is \textit{additive 
from the interior} as formalized in Definition \ref{def:additive}. 
The latter condition axiomatises the heuristic idea that 
the theory $\BBB$ has no degrees of freedom that are localized
on the boundary of $M$, i.e.\ all its observables may be generated
by observables supported in the interior of $M$.
Notice that the results in Theorem \ref{theo:quotientVSadditivity} 
and Corollary \ref{cor:quotientVSadditivity} give the adjective \textit{universal} 
also a physical meaning in the sense that the extensions are sufficiently
large such that any additive theory can be recovered by a quotient.
We strengthen this result in Theorem \ref{theo:IQFTisQFT} by constructing
an equivalence between the category of additive quantum field theories on $M$ 
and a category of pairs $(\AAA,\III)$ consisting of a theory $\AAA\in\QFT(\interior{M})$ on the interior
and an ideal $\III\subseteq \ext\AAA$ of the universal extension that is trivial on the interior.
More concretely, this means that every additive theory $\BBB\in\QFT(M)$ may be naturally 
decomposed into two distinct pieces of data: (1)~A theory $\AAA\in\QFT(\interior{M})$ on the interior,
which is insensitive to the boundary as postulated by F-locality, and (2)~an ideal $\III\subseteq \ext\AAA$ of its universal
extension that is trivial on the interior, i.e.\ that is only sensitive to the boundary. Specific examples
of such ideals arise from imposing boundary conditions. We shall illustrate
this fact by using the free Klein-Gordon theory as an example. Thus, our results also provide a bridge
between the ideas of  \cite{Sommer} and the concrete constructions in \cite{Casimir,Dappiaggi2,Dappiaggi3,Dappiaggi1}.
\sk

The remainder of this paper is structured as follows:
In Section \ref{sec:geometry} we recall some basic definitions and results
about the causal structure of spacetimes with timelike boundaries, see also \cite{CGS,Solis}.
In Section \ref{sec:categories} we provide a precise definition of the 
categories $\QFT(M)$ and $\QFT(\interior{M})$ by using the ideas of \cite{operad}.
Our universal boundary extension is developed in Section \ref{sec:bdyext},
where we also provide an explicit model in terms of left Kan extension.
Our main results on the characterization of additive quantum field theories
on $M$ are proven in Section \ref{sec:characterization}. Section \ref{sec:KG}
illustrates our construction by focusing on the simple example of the free Klein-Gordon
theory, where more explicit formulas can be developed. 
It is in this context that we provide examples of ideals 
implementing boundary conditions and relate to analytic results, 
e.g.\ \cite{Casimir}. We included Appendix
\ref{app:cattheory} to state some basic definitions and results of category theory 
which will be used in our work.

%%%%%%%%%%%%%%%%%%%%%%%%%%%%%%%%%%%%%%%%%%%%%%%%
%%%%%%%%%%%%%%%%%%%%%%%%%%%%%%%%%%%%%%%%%%%%%%%%

\section{\label{sec:geometry}Spacetimes with timelike boundary}
We collect some basic facts about spacetimes with timelike boundary,
following \cite[Section 3.1]{Solis} and \cite[Section 2.2]{CGS}. 
For a general introduction to Lorentzian geometry we refer to \cite{BEE, ONeill},
see also \cite[Sections 1.3 and A.5]{BGP} for a concise presentation. 
\sk

We use the term \textit{manifold with boundary} to refer to a Hausdorff, 
second countable, $m$-dimensional smooth manifold  $M$ with boundary,
see e.g.\ \cite{Lee}. This definition subsumes ordinary manifolds 
as manifolds with empty boundary $\partial M =\emptyset$. 
We denote by $\interior{M}\subseteq M$ the submanifold without the boundary.
Every open subset $U\subseteq M$ carries the structure of a manifold with 
(possibly empty) boundary and one has $\interior{U} = U\cap \interior{M}$.
\begin{defi}
A \textit{Lorentzian manifold with boundary} is a 
manifold with boundary that is equipped with a Lorentzian metric. 
\end{defi}
\begin{defi}\label{def:cauchydevelopment}
Let $M$ be a time-oriented Lorentzian manifold with boundary.
The \textit{Cauchy development} $D(S)\subseteq M$ of a subset $S\subseteq M$ 
is the set of points $p\in M$ such that every inextensible (piecewise smooth) 
future directed causal curve stemming from $p$ meets $S$.
\end{defi}

The following properties follow easily from the definition of Cauchy development.
\begin{propo}\label{propo:Cauchy1}
Let $S, S^\prime\subseteq M$ be subsets of a time-oriented Lorentzian manifold $M$
with boundary. Then the following holds true:
\begin{enumerate}
\item[(a)] $S\subseteq S^\prime$ implies $D(S) \subseteq D(S^\prime)$;
\item[(b)] $S \subseteq D(S) = D(D(S))$;
\item[(c)] $D(D(S) \cap D(S^\prime)) = D(S) \cap D(S^\prime)$. 
\end{enumerate}
\end{propo}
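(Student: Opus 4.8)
The plan is to notice that parts (a) and (b) carry all of the content, while (c) will follow from them by a purely formal set-theoretic manipulation. Part (a) I would dispatch immediately from the definition: if $p\in D(S)$ then every inextensible future directed causal curve stemming from $p$ meets $S$, and since $S\subseteq S^\prime$ any such curve then also meets $S^\prime$, so $p\in D(S^\prime)$.

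For (b) I would split the statement into three pieces. The inclusion $S\subseteq D(S)$ is immediate, because a future directed causal curve stemming from a point $p\in S$ already meets $S$ at its initial point $p$; hence $p\in D(S)$. Feeding this into the monotonicity of (a) with $S^\prime=D(S)$ gives the inclusion $D(S)\subseteq D(D(S))$ for free. The genuine work is the reverse inclusion $D(D(S))\subseteq D(S)$. Here I would fix $p\in D(D(S))$ and an arbitrary inextensible future directed causal curve $\gamma$ stemming from $p$, with the aim of showing that $\gamma$ meets $S$. By definition of $D(D(S))$ the curve $\gamma$ meets $D(S)$ at some point $q$, and I would then pass to the sub-curve $\gamma^\prime$ obtained by restricting $\gamma$ to the parameter interval running from $q$ to the future endpoint of $\gamma$. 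The point is that $\gamma^\prime$ is again an inextensible future directed causal curve, now stemming from $q\in D(S)$, which therefore meets $S$; since $\gamma^\prime$ is a portion of $\gamma$, so does $\gamma$. As $\gamma$ was arbitrary this gives $p\in D(S)$, completing the idempotency $D(S)=D(D(S))$.

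Part (c) would then follow formally. The inclusion $D(S)\cap D(S^\prime)\subseteq D(D(S)\cap D(S^\prime))$ is just the instance $T\subseteq D(T)$ of (b) with $T=D(S)\cap D(S^\prime)$. For the opposite inclusion I would apply monotonicity (a) to $D(S)\cap D(S^\prime)\subseteq D(S)$ and then the idempotency $D(D(S))=D(S)$ from (b) to obtain $D(D(S)\cap D(S^\prime))\subseteq D(S)$, argue symmetrically with $S^\prime$, and intersect the two resulting inclusions to conclude.

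The main obstacle is the restriction step in the proof of $D(D(S))\subseteq D(S)$: everything hinges on the claim that truncating $\gamma$ to start at $q$ yields a curve that is still inextensible in the future. Intuitively this is clear, since inextensibility of a future directed curve is a condition on its future endpoint and is therefore unaffected by discarding an initial segment; but I expect this to be the one point deserving an explicit remark, especially as we work with piecewise smooth curves on a manifold with timelike boundary, where one should be sure that the truncation does not inadvertently produce a spurious future endpoint.
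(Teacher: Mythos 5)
Your proof is correct, and it fills in exactly what the paper leaves to the reader: the paper gives no proof of this proposition, asserting only that the properties ``follow easily from the definition of Cauchy development,'' and your argument---monotonicity (a) directly from the definition, idempotency in (b) via truncation of an inextensible curve at its first meeting with $D(S)$, and the purely formal derivation of (c) from (a) and (b)---is precisely that intended direct verification. The one subtlety you flag is also resolved correctly: future-inextensibility concerns only the future end of the curve, so discarding an initial segment cannot create a spurious future endpoint, boundary or no boundary.
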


We denote by $J_M^\pm(S)\subseteq M$
the \textit{causal future/past} of a subset $S\subseteq M$, 
i.e.\ the set of points that can be reached by a future/past directed 
causal curve stemming from $S$. Furthermore, we denote
by $I_M^\pm(S)\subseteq M$
the \textit{chronological future/past} of a subset $S\subseteq M$, 
i.e.\ the set of points that can be reached by a future/past directed 
timelike curve stemming from $S$. 
\begin{defi}\label{def:causallydisjoint}
Let $M$ be a time-oriented Lorentzian manifold with boundary.
We say that a subset $S\subseteq M$ 
is \textit{causally convex} in $M$ if $J_M^+(S) \cap J_M^-(S) \subseteq S$. 
We say that two subsets $S, S^\prime \subseteq M$ are 
\textit{causally disjoint} in $M$ if $(J_M^+(S) \cup J_M^-(S)) \cap S^\prime = \emptyset$. 
\end{defi}
The following properties are simple consequences of these definitions.
\begin{propo}\label{propo:Cauchy2}
Let $S, S^\prime \subseteq M$ be two subsets of a time-oriented Lorentzian 
manifold  $M$ with boundary. Then the following holds true:
\begin{enumerate}
\item[(a)] $D(S)$ and $D(S^\prime)$ are causally disjoint if 
and only if $S$ and $S^\prime$ are causally disjoint;
\item[(b)] Suppose $S$ and $S^\prime$ are causally disjoint. 
Then the disjoint union $S \sqcup S^\prime \subseteq M$ is causally convex 
if and only if both $S$ and $S^\prime$ are causally convex. 
\end{enumerate}
\end{propo}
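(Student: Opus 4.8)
The plan is to prove both statements by reducing them to the defining inclusions of causal disjointness (Definition \ref{def:causallydisjoint}) and causal convexity, exploiting two elementary facts. First, causal disjointness of $S,S'$ is symmetric: unwinding the definition, $(J_M^+(S)\cup J_M^-(S))\cap S'=\emptyset$ is equivalent to $(J_M^+(S')\cup J_M^-(S'))\cap S=\emptyset$, because a future (resp.\ past) directed causal curve connecting a point of $S$ to a point of $S'$ is simultaneously a past (resp.\ future) directed causal curve connecting $S'$ to $S$. Second, I will repeatedly use $S\subseteq D(S)$ from Proposition \ref{propo:Cauchy1}(b) together with the evident monotonicity $T\subseteq T'\Rightarrow J_M^\pm(T)\subseteq J_M^\pm(T')$.

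For part (a), one implication is immediate: if $D(S)$ and $D(S')$ are causally disjoint, then since $S\subseteq D(S)$ and $S'\subseteq D(S')$ we get $(J_M^+(S)\cup J_M^-(S))\cap S'\subseteq(J_M^+(D(S))\cup J_M^-(D(S)))\cap D(S')=\emptyset$, so $S,S'$ are causally disjoint. The converse is the crux of the whole proposition. Assuming $S,S'$ causally disjoint, I would argue by contradiction: if $D(S),D(S')$ are not causally disjoint there exist $p\in D(S)$ and $p'\in D(S')$ joined by a causal curve, and by the symmetry noted above I may assume $p'\in J_M^+(p)$, i.e.\ there is a future directed causal curve $\gamma$ from $p$ to $p'$. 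The idea is to extend $\gamma$ beyond $p'$ to an inextensible future directed causal curve $\tilde\gamma$ emanating from $p$. Applying the definition of $D(S)$ at $p$ forces $\tilde\gamma$ to meet $S$ at some point $s$, while applying the definition of $D(S')$ to the sub-curve of $\tilde\gamma$ starting at $p'$ (which is itself inextensible and future directed) forces it to meet $S'$ at some point $s'$. Since $s$ and $s'$ both lie on the single future directed curve $\tilde\gamma$, one lies in the causal future of the other, so in either case we produce a point in $J_M^+(S)\cap S'$ or in $J_M^+(S')\cap S$, contradicting causal disjointness. I expect this curve-concatenation step to be the main obstacle: the whole point is to convert the two separate conditions ``every inextensible curve meets $S$ (resp.\ $S'$)'' into one causal curve connecting $S$ to $S'$, and some care is needed to ensure that the extension past $p'$ is genuinely inextensible and future directed so that the defining property of the Cauchy development applies.

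For part (b), assume $S,S'$ causally disjoint and set $T=S\sqcup S'$. The key preliminary computation is that the ``mixed'' intersections vanish: if $q\in J_M^+(S)\cap J_M^-(S')$, then concatenating a future directed causal curve from $S$ to $q$ with one from $q$ to $S'$ yields a future directed causal curve from $S$ to $S'$, placing a point of $S'$ inside $J_M^+(S)$ and contradicting causal disjointness; hence $J_M^+(S)\cap J_M^-(S')=\emptyset$, and symmetrically $J_M^+(S')\cap J_M^-(S)=\emptyset$. Using $J_M^\pm(T)=J_M^\pm(S)\cup J_M^\pm(S')$ and distributing, these vanishing cross terms collapse $J_M^+(T)\cap J_M^-(T)$ to $(J_M^+(S)\cap J_M^-(S))\cup(J_M^+(S')\cap J_M^-(S'))$. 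From here both implications are formal: if $S$ and $S'$ are causally convex, each summand is contained in $S$ resp.\ $S'$, hence in $T$, so $T$ is causally convex; conversely, if $T$ is causally convex and $q\in J_M^+(S)\cap J_M^-(S)$, then $q\in T$, while $q\in S'$ is impossible (it would give $q\in J_M^+(S)\cap S'$), so $q\in S$, proving $S$ causally convex, and likewise for $S'$.
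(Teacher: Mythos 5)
Your proof is correct, and there is in fact no proof in the paper to compare it with: Proposition \ref{propo:Cauchy2} is asserted there as a ``simple consequence'' of Definitions \ref{def:cauchydevelopment} and \ref{def:causallydisjoint}, with no argument supplied. Your unwinding is precisely the intended one: monotonicity of $J_M^\pm$ together with $S\subseteq D(S)$ for the easy direction of (a); extension to an inextensible curve plus the defining property of the Cauchy development for the hard direction of (a); and the vanishing of the cross terms $J_M^+(S)\cap J_M^-(S^\prime)$ for (b). Part (b) is purely formal and complete as written.

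The one step you rightly flag in (a) --- extending $\gamma$ past $p^\prime$ to an inextensible future directed causal curve --- is indeed where all the content sits, because on a manifold with boundary a causal curve may be non-continuable at a boundary point. A maximal (hence inextensible) extension $\tilde\gamma$ always exists, so the only residual issue is the degenerate case where $\tilde\gamma$ already terminates at $p^\prime$, making the subcurve stemming from $p^\prime$ constant. If constant curves are admitted as causal curves, this case still closes: $p^\prime\in D(S^\prime)$ then forces $p^\prime\in S^\prime$, and since $\tilde\gamma$ meets $S$ at some point $s$ with $p^\prime\in J_M^+(s)$, causal disjointness of $S$ and $S^\prime$ is again contradicted. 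If instead one reads the definition so that points emitting no future directed causal curve lie vacuously in every Cauchy development, the statement itself becomes problematic for non-timelike boundaries, so your caution is warranted --- but this generality is never used. In the setting the paper actually works in, namely a timelike boundary as in Definition \ref{def:spacetime0}, the degenerate case cannot occur: a causal curve reaching $\partial M$ can always be continued by an inextensible future directed causal curve of the Lorentzian manifold $\partial M$, which is inextensible in $M$ because $\partial M$ is closed; this is exactly the device the paper uses in the proof of Proposition \ref{propo:spacetime}~(c).
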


The following two definitions play an essential role in our work.
\begin{defi}\label{def:spacetime0}
A \textit{spacetime with timelike boundary} is 
an oriented and time-oriented Lorentzian manifold $M$ with boundary,
such that the pullback of the Lorentzian metric along the boundary inclusion
$\partial M \hookrightarrow M$ defines a Lorentzian metric on the boundary $\partial M$.
\end{defi}

\begin{defi}\label{def:spacetime}
Let $M$ be a spacetime with timelike boundary.
\begin{itemize}
\item[(i)] $\Reg_M$ denotes the category whose objects are causally convex open subsets 
$U\subseteq M$ and whose morphisms $i: U \rightarrow U^\prime$ are inclusions $U \subseteq U^\prime \subseteq M$. 
We call  it the  \textit{category of regions} in $M$.

\item[(ii)] $\Cauchy_M \subseteq \mathrm{Mor}\,\Reg_M$ is the subset of \textit{Cauchy morphisms}
in $\Reg_M$, i.e.\ inclusions $i: U \rightarrow U^\prime$ such that $D(U) = D(U^\prime)$. 

\item[(iii)]  $\Reg_{\interior{M}} \subseteq \Reg_{M}$  is the full subcategory
whose objects are contained in the interior $\interior M$.
We denote by $\Cauchy_{\interior{M}}\subseteq \Cauchy_M$ the Cauchy morphisms 
between objects of $\Reg_{\interior{M}}$. 
\end{itemize}
\end{defi}

\begin{propo}\label{propo:spacetime}
Let $M$ be a spacetime with timelike boundary. For each subset 
$S \subseteq M$ and each object $U \in \Reg_{M}$, 
i.e.\ a causally convex open subset $U \subseteq M$, the following holds true:
\begin{enumerate}
\item[(a)] $I_M^\pm(S)$ is the largest open subset of $J_M^\pm(S)$;
\item[(b)] $J_M^\pm(I_M^\pm(S)) = I_M^\pm(S) = I_M^\pm(J_M^\pm(S))$;
\item[(c)] $S \subseteq \interior M$ implies $D(S) \subseteq \interior M$;
\item[(d)] $D(U) \subseteq M$ is causally convex and open, i.e.\ $D(U)\in\Reg_M$. 
\end{enumerate} 
\end{propo}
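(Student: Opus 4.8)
The plan is to handle the four statements separately, deriving (a) and (b) from the standard interplay of the chronological and causal structures and isolating the genuinely boundary-dependent content in (c) and (d). For (a) and (b) I would reduce everything to two facts that continue to hold in the presence of a timelike boundary: (i) the chronological futures and pasts $I_M^\pm(S)$ are open in $M$, and (ii) the \emph{push-up} property, namely that a concatenation of a future-directed causal and a future-directed timelike curve, in either order, can be deformed into a future-directed timelike curve with the same endpoints. Granting these, part (b) is immediate: $I_M^\pm(S)\subseteq J_M^\pm(I_M^\pm(S))$ and $I_M^\pm(S)\subseteq I_M^\pm(J_M^\pm(S))$ hold because $S\subseteq J_M^\pm(S)$, while the reverse inclusions follow by applying (ii) to a concatenated curve. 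For (a), openness of $I_M^\pm(S)$ and $I_M^\pm(S)\subseteq J_M^\pm(S)$ give $I_M^\pm(S)\subseteq\interior J_M^\pm(S)$; conversely, for $p$ in the interior of $J_M^\pm(S)$ I would pick a nearby $p'$ slightly to the past (resp.\ future) of $p$ still lying in $J_M^\pm(S)$ and connect it to $p$ by a timelike curve, so that (ii) upgrades the causal curve reaching $p'$ to a timelike curve reaching $p$. Facts (i) and (ii) are classical for empty boundary; here they follow by working in a boundary chart identifying a neighbourhood of a boundary point with a half-space in a boundaryless Lorentzian manifold, using precisely that the boundary is timelike (its normal is spacelike) to deform the relevant curves so as to remain inside $M$; see \cite{Solis,CGS}.

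Part (c) is where the timelike boundary enters decisively, and I would argue by contradiction. Suppose $S\subseteq\interior M$ and $p\in D(S)\cap\partial M$. Since the boundary is timelike, the induced metric on $\partial M$ is Lorentzian, and projecting a global future-directed timelike vector field of $M$ onto $T\partial M$ yields a future-directed timelike vector field there, the tangential part remaining timelike because the normal is spacelike; hence $\partial M$ is itself a time-oriented Lorentzian manifold. I would then take a future-directed timelike curve in $\partial M$ issuing from $p$ and extend it to be future-inextensible within $\partial M$. Because $\partial M$ is closed in $M$, any future endpoint in $M$ would lie in $\partial M$, so the curve is in fact future-inextensible in $M$; as it stays in $\partial M$ it never meets $S\subseteq\interior M$. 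This contradicts $p\in D(S)$, proving $D(S)\subseteq\interior M$.

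For part (d) I would prove causal convexity and openness of $D(U)$ independently. For causal convexity I must show $J_M^+(D(U))\cap J_M^-(D(U))\subseteq D(U)$: take $q$ in the left-hand side, so there are $a,b\in D(U)$ with $q\in J_M^+(a)$ and $q\in J_M^-(b)$, and let $\gamma$ be any future-inextensible future-directed causal curve from $q$. Prepending a causal curve from $a$ to $q$ gives a future-inextensible causal curve from $a$, which meets $U$ since $a\in D(U)$; if it meets $U$ along $\gamma$ we are done, and otherwise the meeting point $u\in U$ satisfies $u\in J_M^-(q)$. Since $b\in D(U)$, some future-inextensible causal curve from $b$ meets $U$ at a point $w$, whence $q\in J_M^-(b)\subseteq J_M^-(w)$ with $w\in U$. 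Then $q\in J_M^+(u)\cap J_M^-(w)\subseteq J_M^+(U)\cap J_M^-(U)\subseteq U\subseteq D(U)$ by causal convexity of $U$, so in particular $\gamma$ meets $U$ at $q$; either way $q\in D(U)$. For openness I would argue by contradiction: if $p_n\to p$ with $p_n\notin D(U)$, choose future-inextensible causal curves $\gamma_n$ from $p_n$ contained in the closed set $M\setminus U$; a limit curve theorem for spacetimes with timelike boundary produces a future-inextensible causal curve $\gamma$ from $p$ which, being a limit of curves in $M\setminus U$, still avoids the open set $U$, contradicting $p\in D(U)$.

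The main obstacle I anticipate is not the combinatorial bookkeeping above but the analytic inputs that must be secured near the boundary: the openness and push-up statements (i) and (ii) used in (a) and (b), and above all the limit curve theorem invoked for openness in (d). Since causal curves may accumulate on $\partial M$, these require the careful causal-theoretic groundwork for timelike boundaries, which I would import from \cite{Solis,CGS} or re-derive in a half-space boundary chart; the timelike character of the boundary is exactly the hypothesis that makes them go through.
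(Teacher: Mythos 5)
Your handling of (a), (b), (c) and of causal convexity in (d) is essentially the paper's own route, and where you add detail it is correct. For (a) and (b) the paper, like you, reduces to openness of $I_M^\pm$ and push-up and defers their boundary versions to Solis; for (c) your argument (a causal curve running inside $\partial M$, inextensible in $M$ because $\partial M$ is closed, hence meeting $S$ yet staying in the boundary) is exactly the paper's, with the added justification that $\partial M$ is a time-oriented Lorentzian manifold. For causal convexity of $D(U)$ the paper only asserts that it follows from the definitions; your case analysis with the two points $a,b\in D(U)$ and the causal convexity of $U$ is a correct and more explicit version of that step.

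The genuine gap is in the openness part of (d), at the sentence ``a limit curve theorem \ldots{} produces a future-inextensible causal curve $\gamma$ from $p$ which, being a limit of curves in $M\setminus U$, still avoids the open set $U$.'' That inheritance property is precisely what the standard tools do not give you, and it is where the paper does its real work. The paper uses O'Neill's quasi-limits: these are piecewise smooth broken geodesics, hence directly comparable with Definition \ref{def:cauchydevelopment} (which quantifies over piecewise smooth curves), but they are \emph{not} pointwise limits of the sequence $\{\alpha_n\}$ --- only their vertices are limit points, and the geodesic segments interpolating between consecutive vertices may perfectly well enter $U$ even though every $\alpha_n$ avoids $U$. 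This is why the paper fixes a convex cover refining $\{I_M^+(U), I_M^-(U)\}$ and argues: if the quasi-limit $\lambda$ meets $U$ at $q$, then $q$ lies on a geodesic segment joining two successive vertices $p_k,p_{k+1}$ inside one element of the cover, whence one of these vertices lands in $J^+(U)\cap J^-(U)\subseteq U$ by causal convexity of $U$; since the vertices \emph{are} limit points of the $\alpha_n$, this contradicts the fact that the $\alpha_n$ stay in the closed complement of $U$.

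If instead you invoke a $C^0$ (uniform-convergence) limit curve theorem, the limit curve does avoid $U$ --- but it is then merely a Lipschitz causal curve, while $D(U)$ in this paper is defined via piecewise smooth curves, so you obtain no direct contradiction with $p\in D(U)$. Closing that loop requires proving that the Cauchy development is unchanged when defined with continuous causal curves, or smoothing the limit curve by broken geodesics, which reintroduces exactly the difficulty above (the replacement segments may re-enter $U$). So the obstacle is not only, as you anticipate, establishing limit-curve machinery near a timelike boundary; it is that either variant of that machinery leaves a mismatch --- non-pointwise limit versus non-smooth limit --- which must be bridged by an additional argument such as the paper's convex-cover/causal-convexity trick. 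As written, your openness proof is missing that idea.
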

\begin{proof}
(a) \& (b): These are standard results in the case of empty boundary,
see e.g.\ \cite{BEE, ONeill, BGP}. The extension to spacetimes with
non-empty timelike boundary can be found in \cite[Section 3.1.1]{Solis}. 
\sk

(c): We show that if $D(S)$ contains a boundary point, so does $S$: 
Suppose $p \in D(S)$ belongs to the boundary of $M$. 
By Definition \ref{def:spacetime0}, the boundary $\partial M$ 
of $M$ can be regarded as a time-oriented Lorentzian manifold with empty boundary, 
hence we can consider a future directed inextensible causal curve $\gamma$ 
in $\partial M$ stemming from $p$. 
Since $\partial M$ is a closed subset of $M$, $\gamma$ must be inextensible 
also as a causal curve in $M$, hence $\gamma$ meets $S$ 
because it stems from $p \in D(S)$. 
Since $\gamma$ lies in $\partial M$ by construction, 
we conclude that $S$ contains a boundary point of $M$. 
\sk

(d): $D(U)\subseteq M$ is causally convex by the definition 
of Cauchy development and by the causal convexity of $U\subseteq M$. 
To check that $D(U)\subseteq M$ is open, we use quasi-limits as in
\cite[Definition 14.7 and Proposition 14.8]{ONeill}: 
First, observe that $I_M^\pm(U)\subseteq M$ is open by (a). Hence,
it is the same to check whether a subset of $I := I_M^+(U) \cup I_M^-(U)$ 
is open in $M$ or in $I$ (with the induced topology). 
Indeed, $U \subseteq D(U) \subseteq I$ because $U$ is open in $M$. 
From now on, in place of $M$, let us therefore consider $I$, equipped with the induced metric, orientation and 
time-orientation. 
By contradiction, assume that there exists $p \in D(U) \setminus U$ 
such that all of its neighborhoods intersect the complement of $D(U)$. 
Then there exists a sequence $\{\alpha_n\}$ of inextensible causal curves in $I$ 
never meeting $U$ such that $\{\alpha_n(0)\}$ converges to $p$. 
We fix a convex cover of $I$ refining the open cover $\{I_M^+(U), I_M^-(U)\}$. 
Relative to the fixed convex cover, the construction of 
quasi-limits allows us to obtain from $\{\alpha_n\}$ 
an inextensible causal curve $\lambda$ through $p \in D(U)$. 
Hence, $\lambda$ meets $U$, say in $q$. 
By the construction of a quasi-limit, $q$ lies on a causal geodesic segment 
between $p_k$ and $p_{k+1}$, two successive limit points 
for $\{\alpha_n\}$ contained in some element of the fixed convex cover. 
It follows that either $p_k$ or $p_{k+1}$ belongs to 
$J_I^+(U) \cap J_I^-(U)$, which is contained in $U$ by causal convexity. 
Hence, we found a subsequence $\{\alpha_{n_j}\}$ of $\{\alpha_n\}$ 
and a sequence of parameters $\{s_j\}$ such that $\{\alpha_{n_j}(s_j)\}$ 
converges to a point of $U$ (either $p_k$ or $p_{k+1}$). 
By construction the sequence $\{\alpha_{n_j}(s_j)\}$ 
is contained in $I \setminus U$, however its limit lies in $U$. 
This contradicts the hypothesis that $U$ is open in $I$. 
\end{proof}

The causal structure of a spacetime $M$ with timelike boundary 
can be affected by several pathologies,
such as the presence of closed future directed causal curves. 
It is crucial to avoid these issues in order to obtain concrete examples of 
our constructions in Section \ref{sec:KG}. 
The following definition is due to \cite[Section 2.2]{CGS} and \cite[Section 3.1.2]{Solis}. 
\begin{defi}\label{def:globhyp}
A spacetime $M$ with timelike boundary is called 
\textit{globally hyperbolic} if the following two properties hold true: 
\begin{itemize}
\item[(i)] {\it Strong causality:} 
Every open neighborhood of each point $p \in M$ 
contains a \textit{causally convex} open neighborhood of $p$. 
\item[(ii)] {\it Compact double-cones:} $J_M^+(p) \cap J_M^-(q)$ 
is compact for all $p,q \in M$. 
\end{itemize}
\end{defi}

\begin{rem}
In the case of empty boundary, this definition agrees with the usual one
in  \cite{BEE, ONeill, BGP}. 
Simple examples of globally hyperbolic spacetimes with non-empty timelike boundary 
are the half space $\{x^{m-1} \geq 0\} \subseteq \bbR^m$, 
the spatial slab $\{0 \leq x^{m-1} \leq 1\} \subseteq \bbR^m$ and the cylinder 
$\{(x^1)^2 + \ldots + (x^{m-1})^2 \leq 1\} \subseteq \bbR^m$ in Minkowski spacetime $\bbR^{m}$,
for $m\geq 2$, as well as all causally convex open subsets thereof. 
\end{rem}

The following results follow immediately
from Definition \ref{def:globhyp} and Proposition \ref{propo:spacetime}. 
\begin{propo}\label{propo:globhyp}
Let $M$ be a globally hyperbolic spacetime with timelike boundary. 
\begin{itemize}
\item[(a)] $M$ admits a cover by causally convex open subsets. 
\item[(b)] For each $U\in \Reg_M$, i.e.\ a causally convex open subset 
$U\subseteq M$, both $U$ and $D(U)$ are globally hyperbolic spacetimes with (possibly empty) 
boundary when equipped with the metric, orientation and time-orientation induced by $M$. 
If moreover $U \subseteq \interior{M}$ is contained in the interior, 
then both $U$ and $D(U)$ have empty boundary. 
\end{itemize}
\end{propo}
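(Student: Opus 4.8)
The plan is to treat the two claims separately, deriving everything from strong causality, the compact double-cone property of Definition \ref{def:globhyp}, and the structural results of Proposition \ref{propo:spacetime}. For part (a), I would simply invoke strong causality: by Definition \ref{def:globhyp}(i), taking $M$ itself as the ambient open neighborhood, every point $p \in M$ possesses a causally convex open neighborhood $U_p \ni p$. The family $\{U_p\}_{p \in M}$ is then a cover of $M$ by causally convex open subsets, which is precisely the assertion, and no further work is needed.

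For part (b), the first task is to observe that both $U$ and $D(U)$ are legitimate spacetimes with timelike boundary in the sense of Definition \ref{def:spacetime0}. Indeed $U$ is causally convex and open by hypothesis, while $D(U)$ is causally convex and open by Proposition \ref{propo:spacetime}(d); both inherit the metric, orientation and time-orientation from $M$, and since the manifold boundary of any open subset $V \subseteq M$ equals $V \cap \partial M$, which is open in $\partial M$, the pulled-back metric on this boundary remains Lorentzian. It therefore suffices to verify the two defining properties of global hyperbolicity for an arbitrary $V \in \Reg_M$ with the induced structure and then specialise to $V = U$ and $V = D(U)$. Strong causality of $V$ is inherited from $M$ directly: given $p \in V$ and an open neighborhood $O \subseteq V$ of $p$, strong causality of $M$ yields a neighborhood $W \subseteq O$ that is causally convex in $M$; since causal curves in $V$ are in particular causal curves in $M$ one has $J_V^\pm(W) \subseteq J_M^\pm(W)$, whence $J_V^+(W) \cap J_V^-(W) \subseteq J_M^+(W) \cap J_M^-(W) \subseteq W$, so $W$ is causally convex in $V$ as well.

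The main point, and the step I expect to require the most care, is the compact double-cone property. Here I would establish the identity
\[
J_V^+(p) \cap J_V^-(q) = J_M^+(p) \cap J_M^-(q)
\]
for all $p, q \in V$. The inclusion ``$\subseteq$'' is immediate from $J_V^\pm \subseteq J_M^\pm$. For ``$\supseteq$'', take $r \in J_M^+(p) \cap J_M^-(q)$; then $r \in J_M^+(V) \cap J_M^-(V) \subseteq V$ by causal convexity of $V$. Moreover, every point $s$ on a causal curve realising $r \in J_M^+(p)$ satisfies $s \in J_M^+(p) \subseteq J_M^+(V)$ and $s \in J_M^-(r) \subseteq J_M^-(q) \subseteq J_M^-(V)$, hence $s \in V$; so this curve lies entirely in $V$ and witnesses $r \in J_V^+(p)$. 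The symmetric argument, applied to a curve realising $r \in J_M^-(q)$, gives $r \in J_V^-(q)$. Granting this identity, compactness of $J_V^+(p) \cap J_V^-(q)$ follows at once from compactness of $J_M^+(p) \cap J_M^-(q)$, which holds by Definition \ref{def:globhyp}(ii).

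This shows that $U$ and $D(U)$ are globally hyperbolic. Finally, the empty-boundary statement is immediate: if $U \subseteq \interior{M}$, then $U$ is an open subset of the boundaryless manifold $\interior{M}$, and $D(U) \subseteq \interior{M}$ by Proposition \ref{propo:spacetime}(c), so both $U$ and $D(U)$ have empty boundary.
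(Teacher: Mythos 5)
Your proof is correct. The paper in fact offers no written proof of this proposition---it merely asserts that the results ``follow immediately from Definition \ref{def:globhyp} and Proposition \ref{propo:spacetime}''---and your argument is exactly the natural filling-in of that omitted reasoning: part (a) from strong causality applied with $M$ as ambient neighborhood, and part (b) via inheritance of strong causality by open subsets together with the identity $J_V^+(p)\cap J_V^-(q)=J_M^+(p)\cap J_M^-(q)$ for causally convex open $V$, a fact consistent with the relation $J_M^\pm(p)\cap V = J_V^\pm(p)$ that the paper itself invokes later in Section \ref{sec:KG}.
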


%%%%%%%%%%%%%%%%%%%%%%%%%%%%%%%%%%%%%%%%%%%%%%%%
%%%%%%%%%%%%%%%%%%%%%%%%%%%%%%%%%%%%%%%%%%%%%%%%

\section{\label{sec:categories}Categories of algebraic quantum field theories}
Let $M$ be a spacetime with timelike boundary. 
(In this section we do not have to assume that  $M$ is globally 
hyperbolic in the sense of Definition \ref{def:globhyp}.)
Recall the category $\Reg_M$ of open and causally convex regions in $M$ 
and the subset $\Cauchy_{M}$ of Cauchy morphisms (cf.\ Definition \ref{def:spacetime}). 
Together with our notion of causal disjointness from Definition \ref{def:causallydisjoint}, 
these data provide the geometrical input for 
the traditional definition of algebraic quantum field theories on $M$.
\begin{defi}\label{def:QFT}
An \textit{algebraic quantum field theory} on $M$ is a functor 
\begin{flalign}
\AAA: \Reg_M \longrightarrow \Alg
\end{flalign}
with values in the category $\Alg$ of associative and unital $\ast$-algebras over $\bbC$,
which satisfies the following properties: 
\begin{enumerate}
\item[(i)] \textit{Causality axiom:} For all causally disjoint inclusions 
$i_1: U_1 \rightarrow U \leftarrow U_2: i_2$, 
the induced commutator
\begin{flalign}
\big[\AAA(i_1)(-),\AAA(i_2)(-) \big]_{\AAA(U)}^{} \,:\, \AAA(U_1)\otimes \AAA(U_2)\longrightarrow \AAA(U)
\end{flalign}
is zero.

\item[(ii)] \textit{Time-slice axiom:} 
For all Cauchy morphisms $(i: U \to U^\prime)\in \Cauchy_M$, the map
\begin{flalign}
\AAA(i): \AAA(U) \longrightarrow \AAA(U^\prime)
\end{flalign}
is an $\Alg$-isomorphism.
\end{enumerate} 
We denote by $\qft(M)\subseteq \Alg^{\Reg_M}$ the full subcategory
of the category of functors from $\Reg_M$ to $\Alg$ whose
objects are all algebraic quantum field theories on $M$,
i.e.\ functors fulfilling the causality and time-slice axioms.
(Morphisms in this category are all natural transformations.)
\end{defi}

We shall now show that there exists an alternative, but equivalent, description of
the category $\qft(M)$ which will be more convenient for the technical constructions
in our paper. Following \cite[Section 4.1]{operad}, we observe that the time-slice
axiom in Definition \ref{def:QFT} (ii) is equivalent to considering functors
$\BBB : \Reg_M[\Cauchy_M^{-1}] \to \Alg$ that are defined on the localization
of the category $\Reg_M$ at the set of Cauchy morphisms $\Cauchy_M$.
See Definition \ref{def:localization} for the definition of localizations of categories.
By abstract arguments as in \cite[Section 4.6]{operad}, one observes that
the universal property of localizations implies that the category $\qft(M)$
is equivalent to the full subcategory of the functor category $\Alg^{\Reg_M[\Cauchy_M^{-1}]}$
whose objects are all functors $ \BBB : \Reg_M[\Cauchy_M^{-1}] \to \Alg$ that
satisfy the causality axiom for the pushforward orthogonality relation on 
$\Reg_M[\Cauchy_M^{-1}] $. Loosely speaking, this means that the 
time-slice axiom in Definition \ref{def:QFT} (ii) can be hard-coded by
working on the localized category $\Reg_M[\Cauchy_M^{-1}]$ instead of using the
usual category $\Reg_M$ of regions in $M$.
\sk

The aim of the remainder of this section is to provide an explicit model for
the localization functor $\Reg_{M} \to \Reg_{M}[\Cauchy_M^{-1}]$. 
With this model it will become particularly easy to verify the equivalence 
between the two alternative descriptions of the category $\qft(M)$.
Let us denote by
\begin{subequations}\label{eqn:localization}
\begin{flalign}
\Reg_M[\Cauchy_M^{-1}]\subseteq \Reg_M
\end{flalign}
the full subcategory of $\Reg_M$ whose objects $V \subseteq M$ are stable under Cauchy development,
i.e.\ $D(V) = V$ where $D(V)\subseteq M$ denotes the Cauchy development (cf.\ Definition \ref{def:cauchydevelopment}).
In the following we shall always use letters like
$U\subseteq M$ for generic regions in $\Reg_M$ and $V\subseteq M$ 
for regions that are stable under Cauchy development, i.e.\ objects in $\Reg_M[\Cauchy_M^{-1}]$. 
Recall from Definition \ref{def:spacetime} that each object $U \in \Reg_{M}$ 
is a causally convex open subset $U \subseteq M$, 
hence the Cauchy development $D(U)\subseteq M$ is a causally convex 
open subset by Proposition \ref{propo:spacetime}~(d), 
which is stable under Cauchy development by Proposition \ref{propo:Cauchy1}~(b). 
This shows that $D(U)$ is an object of $\Reg_M[\Cauchy_M^{-1}]$. 
Furthermore, a morphism $i: U \to U^\prime$ in $\Reg_M$ 
is an inclusion $U \subseteq U^\prime$, which induces an 
inclusion $D(U) \subseteq D(U^\prime)$ by Proposition \ref{propo:Cauchy1}~(a) 
and hence a morphism $D(i): D(U) \to D(U^\prime)$ 
in $\Reg_M[\Cauchy_M^{-1}]$. We define a functor
\begin{flalign}
D: \Reg_M \longrightarrow \Reg_M[\Cauchy_M^{-1}]
\end{flalign}
by setting on objects and morphisms
\begin{flalign}
U \longmapsto D(U)\quad,\qquad(i: U \to U^\prime) \longmapsto \big( D(i) : D(U) \to D(U^\prime) \big)\quad.
\end{flalign}
\end{subequations}
Furthermore, let us write
\begin{flalign}
I: \Reg_M[\Cauchy_M^{-1}] \longrightarrow \Reg_M
\end{flalign}
for the full subcategory embedding. 
\begin{lem}\label{lem:reflloc}
$D$ and $I$ form an adjunction  (cf.\ Definition \ref{def:adjunction}) 
\begin{flalign}
\xymatrix{
D: \Reg_M \ar@<0.5ex>[r] & \Reg_M[\Cauchy_M^{-1}]: I \ar@<0.5ex>[l] \quad,
}
\end{flalign}
whose counit is a natural isomorphism (in fact, the identity), 
hence $\Reg_M[\Cauchy_M^{-1}]$ 
is a full reflective subcategory of $\Reg_M$. 
Furthermore, the components of the unit are Cauchy morphisms. 
\end{lem}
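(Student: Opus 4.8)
The plan is to exploit that both $\Reg_M$ and $\Reg_M[\Cauchy_M^{-1}]$ are \emph{thin} categories: their objects are regions and there is at most one morphism (an inclusion) between any two of them. Consequently a functor is just a monotone map, naturality of any transformation between such functors holds automatically, and an adjunction $D\dashv I$ is precisely a Galois connection. This reduces the entire statement to verifying, for every $U\in\Reg_M$ and every Cauchy-stable $V\in\Reg_M[\Cauchy_M^{-1}]$ (so that $D(V)=V$), the logical equivalence
\[
D(U)\subseteq V \quad\Longleftrightarrow\quad U\subseteq V\,,
\]
together with an explicit identification of the resulting unit and counit. Since a hom-set is a singleton whenever it is nonempty, establishing this equivalence automatically yields the natural bijection $\Hom_{\Reg_M[\Cauchy_M^{-1}]}(D(U),V)\cong \Hom_{\Reg_M}(U,I(V))$ that defines the adjunction.

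The equivalence itself is immediate once one recognises that Proposition \ref{propo:Cauchy1} exhibits $D$ as a closure operator on the poset of regions. For the forward direction I would use extensivity $U\subseteq D(U)$ from part~(b), so that $D(U)\subseteq V$ forces $U\subseteq V$ by transitivity; for the reverse direction I would combine monotonicity (part~(a)) with stability, giving $D(U)\subseteq D(V)=V$. The only well-definedness points, namely that $D(U)$ is again a causally convex open region and is Cauchy-stable so that $D$ genuinely lands in $\Reg_M[\Cauchy_M^{-1}]$, have already been settled by Proposition \ref{propo:spacetime}~(d) and Proposition \ref{propo:Cauchy1}~(b) in the construction preceding the lemma.

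It then remains to read off the unit and counit and to check their asserted properties. The unit component $\eta_U\colon U\to I(D(U))$ is the canonical inclusion $U\subseteq D(U)$ supplied by Proposition \ref{propo:Cauchy1}~(b); this same part gives $D(D(U))=D(U)$, which is exactly the condition of Definition \ref{def:spacetime}~(ii) for $\eta_U$ to be a Cauchy morphism. The counit component $\epsilon_V\colon D(I(V))\to V$ is the inclusion $D(V)\subseteq V$, but stability $D(V)=V$ makes it the identity morphism, so $\epsilon$ is a natural isomorphism. Because $I$ is by construction the full subcategory embedding, the existence of its left adjoint $D$ with invertible counit is precisely the assertion that $\Reg_M[\Cauchy_M^{-1}]$ is a full reflective subcategory.

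I do not expect a genuine obstacle here: all the geometric content—monotonicity, extensivity and idempotence of the Cauchy development, and the fact that developments of regions are again regions—has already been packaged into Propositions \ref{propo:Cauchy1} and \ref{propo:spacetime}. The only thing demanding care is conceptual rather than computational, namely recognising that $D$ is a closure operator and that the Cauchy-stable regions are exactly its fixed points, so that the reflective structure is the standard one associated with any such operator on a poset.
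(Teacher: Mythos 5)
Your proposal is correct and rests on exactly the same ingredients as the paper's proof: Proposition \ref{propo:Cauchy1}~(a),(b) and Proposition \ref{propo:spacetime}~(d), with the unit given by the inclusion $U \subseteq D(U)$ (a Cauchy morphism since $D(U)=D(D(U))$) and the counit the identity on $D(V)=V$. The only difference is packaging: the paper verifies the unit--counit definition of adjunction directly and dismisses the triangle identities as trivial (which is precisely your thinness observation left implicit), whereas you verify the equivalent hom-set/Galois-connection characterization $D(U)\subseteq V \Leftrightarrow U\subseteq V$ and read off the unit and counit afterwards.
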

\begin{proof}
For $U \in \Reg_M$, the $U$-component of the unit 
\begin{flalign}
\eta: \id_{\Reg_M} \longrightarrow I\, D
\end{flalign}
is given by the inclusion $U \subseteq D(U)$ of $U$ 
into its Cauchy development, which is a Cauchy morphism, 
see Proposition \ref{propo:Cauchy1}~(b) and Definition \ref{def:spacetime}~(ii). 
For $V \in \Reg_M[\Cauchy_M^{-1}]$, the $V$-component of the counit 
\begin{flalign}
\epsilon: D\, I \longrightarrow \id_{\Reg_M[\Cauchy_M^{-1}]} 
\end{flalign}
is given by the identity of the object $D(V) = V$.
The triangle identities hold trivially.
\end{proof}

\begin{propo}\label{propo:locmodel}
The category $\Reg_M[\Cauchy_M^{-1}]$ and the functor 
$D: \Reg_M \to \Reg_M[\Cauchy_M^{-1}]$ defined in \eqref{eqn:localization} 
provide a model for the localization of $\Reg_M$ at $\Cauchy_M$.
\end{propo}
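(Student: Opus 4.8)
The plan is to verify the universal property of the localization directly, using the reflective adjunction $D\dashv I$ established in Lemma \ref{lem:reflloc}. Recall that to show $\Reg_M[\Cauchy_M^{-1}]$ together with $D$ is a model for the localization of $\Reg_M$ at $\Cauchy_M$, I must check two things: first, that $D$ sends every Cauchy morphism to an isomorphism in $\Reg_M[\Cauchy_M^{-1}]$; and second, that $D$ is universal with this property, i.e.\ for every functor $F: \Reg_M \to \mathbf{C}$ sending Cauchy morphisms to isomorphisms there is a functor $\widetilde{F}: \Reg_M[\Cauchy_M^{-1}] \to \mathbf{C}$, unique up to natural isomorphism, with $\widetilde{F}\circ D \cong F$. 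Equivalently, I will show that precomposition with $D$ induces an equivalence between functor categories $\mathbf{C}^{\Reg_M[\Cauchy_M^{-1}]}$ and the full subcategory of $\mathbf{C}^{\Reg_M}$ on those functors inverting $\Cauchy_M$.

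First I would check that $D$ inverts Cauchy morphisms. If $(i: U \to U')\in\Cauchy_M$, then by definition $D(U)=D(U')$, and $D(i)$ is the identity morphism on this common object (an inclusion between equal sets); in particular it is an isomorphism. Next I would address universality. The key structural observation is that, because the counit $\epsilon$ is the identity, the embedding $I$ is fully faithful and $D\circ I = \id$. This means $\Reg_M[\Cauchy_M^{-1}]$ is a genuine reflective localization, and for such localizations the universal property is almost automatic. Concretely, given $F: \Reg_M \to \mathbf{C}$ inverting Cauchy morphisms, I would define $\widetilde{F} := F\circ I$ and verify $\widetilde{F}\circ D = F\circ I\circ D = F\circ (ID)$. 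Since the unit $\eta: \id_{\Reg_M}\to ID$ has components that are Cauchy morphisms (Lemma \ref{lem:reflloc}), and $F$ inverts these, the natural transformation $F\eta: F \to F\circ ID = \widetilde{F}\circ D$ is a natural isomorphism. This exhibits the required factorization $\widetilde{F}\circ D\cong F$.

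For uniqueness, suppose $G: \Reg_M[\Cauchy_M^{-1}]\to\mathbf{C}$ also satisfies $G\circ D\cong F$. Precomposing with $I$ and using $D\circ I=\id$ gives $G\cong G\circ D\circ I\cong F\circ I=\widetilde{F}$, so the factorization is unique up to natural isomorphism. The same computation, carried out at the level of functor categories, shows that the precomposition functor $(-)\circ D$ and the restriction functor $(-)\circ I$ are mutually quasi-inverse equivalences between $\mathbf{C}^{\Reg_M[\Cauchy_M^{-1}]}$ and the subcategory of Cauchy-inverting functors in $\mathbf{C}^{\Reg_M}$, which is precisely the statement that $D$ models the localization.

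I expect the only genuine subtlety to be bookkeeping: one must confirm that the factorizing functor $\widetilde{F}=F\circ I$ really does land in $\mathbf{C}$ without further hypotheses and that all the natural transformations assemble coherently (e.g.\ that the equivalence of functor categories restricts correctly to the subcategory of Cauchy-inverting functors, using that $F\circ I\circ D\cong F$ inverts $\Cauchy_M$ precisely because $F$ does). The main conceptual point—rather than an obstacle—is simply to recognize that a reflective subcategory whose counit is an isomorphism automatically realizes the localization at the class of morphisms inverted by the reflector, which here is exactly $\Cauchy_M$ since the unit components are Cauchy morphisms and, conversely, any morphism inverted by $D$ becomes an isomorphism whose source and target have equal Cauchy developments. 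I would close by remarking that this makes the equivalence between the two descriptions of $\qft(M)$ transparent, as advertised before the statement.
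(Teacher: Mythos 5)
Your proposal is correct and follows essentially the same route as the paper: both arguments rest on the reflective adjunction $D \dashv I$ from Lemma \ref{lem:reflloc}, check that $D$ inverts Cauchy morphisms since $D(i)$ is an identity, take $F \circ I$ as the factorizing functor, and use $F\eta$ (invertible because the unit components are Cauchy morphisms) as the required natural isomorphism. The only difference is presentational: where the paper verifies condition (c) of Definition \ref{def:localization} by explicit injectivity and surjectivity diagram chases using the counit isomorphism and the triangle identities, you package the same underlying facts as the assertion that $(-)\circ D$ and $(-)\circ I$ are quasi-inverse equivalences between the functor categories (using $D\circ I = \mathrm{id}$ and $F\eta$ natural in $F$), which in particular yields the bijection on $\mathrm{Hom}$-sets of natural transformations demanded by (c).
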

\begin{proof}
We have to check all the requirements listed in Definition \ref{def:localization}.
\sk

\noindent (a)~By Definition \ref{def:spacetime}, 
for each Cauchy morphism $i: U \to U^\prime$ one has $D(U) = D(U^\prime)$ 
and hence $D(i) = \id_{D(U)}$ is an isomorphism in $\Reg_M[\Cauchy_M^{-1}]$. 
\sk

\noindent (b)~Let $F: \Reg_M \to \DD$ be any functor to a category $\DD$ that sends morphisms
in $\Cauchy_M$ to $\DD$-isomorphisms. Using Lemma \ref{lem:reflloc}, we 
define $F_W := F \, I : \Reg_M[\Cauchy_M^{-1}]\to \DD$ and consider the natural transformation
$F \eta :F \to F\,I\,D = F_W\,D $ obtained by the unit of the adjunction $D\dashv I$.
Because all components of $\eta$ are Cauchy morphisms (cf.\ Lemma \ref{lem:reflloc}),
$F\eta$ is a natural isomorphism.
\sk

\noindent (c)~Let $G,H : \Reg_M[\Cauchy_M^{-1}]\to \DD$ be two functors. We have to show that
the map
\begin{flalign}\label{eqn:homfullyfaithfulmap}
\Hom_{\DD^{\Reg_M[\Cauchy_M^{-1}]}}\big(G,H\big)\longrightarrow \Hom_{\DD^{\Reg_M}}\big(G D,H D\big)
\end{flalign}
is a bijection. Let us first prove injectivity: Let  $\xi, \widetilde{\xi}: G \to H$ 
be two natural transformations such that $\xi D = \widetilde{\xi} D$. 
Using Lemma \ref{lem:reflloc}, we obtain commutative diagrams
\begin{flalign}
\xymatrix{
\ar[d]_-{G\epsilon } G D I \ar[r]^-{\xi D I} & H D I\ar[d]^-{H\epsilon}\\
G\ar[r]_-{\xi}& H
}\qquad\qquad
\xymatrix{
\ar[d]_-{G\epsilon } G D I \ar[r]^-{\widetilde{\xi} D I} & H D I\ar[d]^-{H\epsilon}\\
G\ar[r]_-{\widetilde{\xi}}& H
}
\end{flalign}
where the vertical arrows are natural isomorphisms because the counit $\epsilon$ is an isomorphism.
Recalling that by hypothesis $\xi D = \widetilde{\xi} D$, it follows that $\xi = \widetilde{\xi}$.
Hence, the map \eqref{eqn:homfullyfaithfulmap} is injective.
\sk

It remains to prove that \eqref{eqn:homfullyfaithfulmap} is also surjective.
Let $\chi : G D\to H D$ be any natural transformation. Using Lemma
\ref{lem:reflloc}, we obtain a commutative diagram
\begin{flalign}
\xymatrix{
\ar[d]_-{G D\eta} G D \ar[r]^-{\chi} & H D\ar[d]^-{H D\eta}\\
GDID\ar[r]_-{\chi ID}& HDID
}
\end{flalign}
where the vertical arrows are natural isomorphisms because the components of the unit $\eta$ 
are Cauchy morphisms and $D$ assigns isomorphisms to them. Let us define a
natural transformation $\xi : G\to H$ by the commutative diagram
\begin{flalign}
\xymatrix{
\ar[d]_-{G\epsilon^{-1}} G \ar[r]^-{\xi} & H\\
GDI\ar[r]_-{\chi I}& HDI\ar[u]_-{H\epsilon}
}
\end{flalign}
where we use that $\epsilon$ is a natural isomorphism (cf.\ Lemma \ref{lem:reflloc}).
Combining the last two diagrams, one easily computes that $\xi D = \chi$ by using also the triangle identities
of the adjunction $D \dashv I$. Hence, the map \eqref{eqn:homfullyfaithfulmap} is surjective.
\end{proof}

We note that there exist two (a priori different) options to define
an orthogonality relation on the localized category $\Reg_M[\Cauchy_M^{-1}]$, 
both of which are provided by \cite[Lemma 4.29]{operad}: 
(1)~The pullback orthogonality relation along the full subcategory embedding 
$I : \Reg_M[\Cauchy_M^{-1}] \to \Reg_M$ and
(2)~the pushforward orthogonality relation along the localization functor 
$D: \Reg_M \to \Reg_M[\Cauchy_M^{-1}]$. 
In our present scenario, both constructions coincide and
one concludes that two $\Reg_M[\Cauchy_M^{-1}]$-morphisms are 
orthogonal precisely when they are orthogonal in $\Reg_M$. 
Summing up, we obtain
\begin{lem}\label{lem:orthloc}
We say that two morphisms in the full subcategory 
$\Reg_M[\Cauchy_M^{-1}] \subseteq \Reg_M$ 
are orthogonal precisely when they are orthogonal in $\Reg_M$ 
(i.e.\ causally disjoint, cf.\ Definition \ref{def:causallydisjoint}). 
Then both functors $D: \Reg_M \to \Reg_M[\Cauchy_M^{-1}]$ and 
$I: \Reg_M[\Cauchy_M^{-1}] \to \Reg_M$ 
preserve (and also detect) the orthogonality relations. 
\end{lem}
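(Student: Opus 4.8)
The orthogonality relation at stake is, by Definition~\ref{def:causallydisjoint}, a property of the \emph{source} regions of a pair of morphisms sharing a common target: a pair $i_1: U_1 \rightarrow U \leftarrow U_2: i_2$ in $\Reg_M$ is orthogonal precisely when $U_1$ and $U_2$ are causally disjoint in $M$, and likewise for a pair in $\Reg_M[\Cauchy_M^{-1}]$ with the convention fixed in the statement. The plan is therefore to reduce the whole claim to the behaviour of causal disjointness under the Cauchy development, which has already been recorded in Proposition~\ref{propo:Cauchy2}~(a). Since both $D$ and $I$ send a pair of morphisms with common target to a pair of morphisms with common target, it suffices to compare the causal disjointness of the source regions before and after applying each functor.

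For the embedding $I$ the assertion is immediate from the defining convention. A pair $j_1: V_1 \rightarrow V \leftarrow V_2: j_2$ in $\Reg_M[\Cauchy_M^{-1}]$ is declared orthogonal exactly when $I(j_1)$ and $I(j_2)$ are orthogonal in $\Reg_M$, so $I$ both preserves and detects orthogonality by fiat; this is precisely the pullback orthogonality relation along $I$ from \cite[Lemma 4.29]{operad}.

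For the localization functor $D$ I would argue as follows. Given a pair $i_1: U_1 \rightarrow U \leftarrow U_2: i_2$ in $\Reg_M$, applying $D$ produces the pair $D(i_1): D(U_1) \rightarrow D(U) \leftarrow D(U_2): D(i_2)$ in $\Reg_M[\Cauchy_M^{-1}]$, whose orthogonality amounts, by the convention above, to the causal disjointness of $D(U_1)$ and $D(U_2)$. By Proposition~\ref{propo:Cauchy2}~(a) the regions $D(U_1)$ and $D(U_2)$ are causally disjoint if and only if $U_1$ and $U_2$ are causally disjoint. The implication from $U_1,U_2$ to $D(U_1),D(U_2)$ yields that $D$ preserves orthogonality, while the reverse implication yields that $D$ detects it.

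I do not expect a genuine obstacle: the entire content has been isolated into Proposition~\ref{propo:Cauchy2}~(a), and the only point needing minimal care is the bookkeeping that orthogonality is a relation on pairs sharing a target and that it depends solely on the source regions, so that it is unaffected by the change of target from $U$ to $D(U)$ (respectively from $V$ to $D(V)=V$). As a byproduct this also verifies the remark preceding the statement that the pullback orthogonality along $I$ and the pushforward orthogonality along $D$ coincide, since both are now seen to agree with causal disjointness of the source regions.
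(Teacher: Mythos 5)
Your proposal is correct and follows essentially the same route as the paper: the case of $I$ is settled by the defining convention, and the case of $D$ is reduced to Proposition \ref{propo:Cauchy2}~(a), which is exactly the paper's one-line argument spelled out with the bookkeeping made explicit.
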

\begin{proof}
For $I$ this holds trivially, while for $D$ see Proposition \ref{propo:Cauchy2}~(a). 
\end{proof}

With these preparations we may now define our alternative description
of the category of algebraic quantum field theories.
\begin{defi}\label{def:QFT2}
We denote by $\QFT(M)\subseteq \Alg^{\Reg_M[\Cauchy_M^{-1}]}$ 
the full subcategory whose objects are all
functors $\BBB : \Reg_M[\Cauchy_M^{-1}] \to \Alg$ 
that satisfy the following version of the \textit{causality axiom}:
For all causally disjoint inclusions $i_1 : V_1\to V \leftarrow V_2 : i_2$
in $\Reg_M[\Cauchy_M^{-1}]$, i.e.\ $V$, $V_1$ and $V_2$ are stable under Cauchy development,
the induced commutator
\begin{flalign}
\big[\BBB(i_1)(-),\BBB(i_2)(-) \big]_{\BBB(V)}^{} \,:\, \BBB(V_1)\otimes \BBB(V_2)\longrightarrow \BBB(V)
\end{flalign}
is zero.
\end{defi}
\begin{theo}\label{theo:qftQFTequivalence}
By pullback, the adjunction $D \dashv I$ of Lemma \ref{lem:reflloc}
induces an adjoint equivalence (cf.\ Definition \ref{def:adjointequivalence})
\begin{flalign}\label{eqn:qftQFTequivalence}
\xymatrix{
I^\ast: \qft(M) \ar@<0.75ex>[r]_-{\sim} & \QFT(M): D^\ast \ar@<0.75ex>[l]
}\quad.
\end{flalign}
In particular, the two categories
$\qft(M)$ of Definition \ref{def:QFT} and $\QFT(M)$ of Definition \ref{def:QFT2}
are equivalent.
\end{theo}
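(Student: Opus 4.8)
The plan is to lift the reflective adjunction $D \dashv I$ of Lemma \ref{lem:reflloc} to an adjunction between the ambient functor categories by precomposition, to check that it restricts to the full subcategories $\qft(M) \subseteq \Alg^{\Reg_M}$ and $\QFT(M) \subseteq \Alg^{\Reg_M[\Cauchy_M^{-1}]}$, and finally to upgrade this restricted adjunction to an adjoint equivalence by showing that its unit and counit have isomorphic components.

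First, I would invoke the $2$-functoriality of precomposition: since $(-)^\ast$ is contravariant on functors and covariant on natural transformations, the adjunction $D \dashv I$, with unit $\eta : \id \to I D$ and counit $\epsilon : D I \to \id$, induces an adjunction $I^\ast \dashv D^\ast$. Its unit $\eta^\ast : \id \to D^\ast I^\ast$ and counit $\epsilon^\ast : I^\ast D^\ast \to \id$ are given by whiskering: for $\AAA \in \Alg^{\Reg_M}$ the $U$-component of $(\eta^\ast)_\AAA$ is $\AAA(\eta_U) : \AAA(U) \to \AAA(D(U))$, while for $\BBB \in \Alg^{\Reg_M[\Cauchy_M^{-1}]}$ the $V$-component of $(\epsilon^\ast)_\BBB$ is $\BBB(\epsilon_V)$. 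As $\epsilon$ is the identity (Lemma \ref{lem:reflloc}), $\epsilon^\ast$ is already a natural isomorphism on the whole of $\Alg^{\Reg_M[\Cauchy_M^{-1}]}$.

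Next, I would verify that both functors restrict to the relevant subcategories. For an object $\AAA \in \qft(M)$, the functor $I^\ast \AAA = \AAA \circ I$ satisfies the causality axiom of Definition \ref{def:QFT2} because $I$ preserves orthogonality (Lemma \ref{lem:orthloc}), and no time-slice condition is required of objects of $\QFT(M)$; thus $I^\ast$ maps $\qft(M)$ into $\QFT(M)$. For an object $\BBB \in \QFT(M)$, the functor $D^\ast \BBB = \BBB \circ D$ again satisfies causality because $D$ preserves orthogonality (Lemma \ref{lem:orthloc}), and it satisfies the time-slice axiom of Definition \ref{def:QFT} because $D$ sends every Cauchy morphism to an identity, so $(\BBB \circ D)(i) = \BBB(\id)$ is an isomorphism; thus $D^\ast$ maps $\QFT(M)$ into $\qft(M)$.

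The decisive step is the unit. Restricted to $\qft(M)$, the component $(\eta^\ast)_\AAA$ has $U$-component $\AAA(\eta_U)$ with $\eta_U : U \to D(U)$ a Cauchy morphism (Lemma \ref{lem:reflloc}); since $\AAA$ obeys the time-slice axiom, each $\AAA(\eta_U)$ is an isomorphism, so $(\eta^\ast)_\AAA$ is a natural isomorphism. Together with $\epsilon^\ast$ being a natural isomorphism, and using that $\qft(M)$ and $\QFT(M)$ are full subcategories so that these componentwise isomorphisms restrict, the adjunction $I^\ast \dashv D^\ast$ becomes an adjoint equivalence. I expect the conceptual heart of the argument to be exactly this recognition that the time-slice axiom is what promotes the unit to an isomorphism; by contrast, the $2$-functoriality and the axiom-preservation checks are routine, with the only mild subtlety being the careful bookkeeping of which functor is sent to which, and in what direction, under precomposition.
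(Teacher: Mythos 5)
Your proposal is correct and follows essentially the same route as the paper's proof: lift $D \dashv I$ to $I^\ast \dashv D^\ast$ by precomposition/whiskering, restrict using Lemma \ref{lem:orthloc} (causality) and the fact that $D$ inverts Cauchy morphisms (time-slice), then observe that the counit is an isomorphism since $\epsilon$ is the identity and the unit is an isomorphism since its components are of the form $\AAA(\eta_U)$ with $\eta_U$ a Cauchy morphism and $\AAA$ satisfying the time-slice axiom. The identification of the time-slice axiom as what promotes the unit to an isomorphism is exactly the paper's key observation.
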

\begin{proof}
It is trivial to check that the adjunction 
$D:  \Reg_M \rightleftarrows \Reg_M[\Cauchy_M^{-1}] : I$ 
induces an adjunction 
\begin{flalign}
\xymatrix{
I^\ast: \Alg^{\Reg_M} \ar@<0.5ex>[r] 
& \Alg^{\Reg_M[\Cauchy_M^{-1}]}: D^\ast \ar@<0.5ex>[l]
}
\end{flalign}
between functor categories. Explicitly, the unit 
$\widetilde{\eta} : \id_{\Alg^{\Reg_M}} \to D^\ast\, I^\ast$
has components
\begin{flalign}
\widetilde{\eta}_\AAA^{} := \AAA\eta \, :\,\AAA \longrightarrow D^\ast (I^\ast(\AAA)) = \AAA ID  \quad,
\end{flalign}
where $\AAA: \Reg_M \to \Alg$ is any functor 
and $\eta: \id_{\Reg_M} \to I\, D$ denotes the unit of $D \dashv I$.
The counit 
$\widetilde{\epsilon} :  I^\ast\, D^\ast \to \id_{\Alg^{\Reg_M[\Cauchy_M^{-1}]}}$
has components
\begin{flalign}
\widetilde{\epsilon}_\BBB^{} := \BBB\epsilon  \,:\, I^\ast(D^\ast(\BBB)) = \BBB DI  \longrightarrow \BBB \quad, 
\end{flalign}
where $\BBB : \Reg_M[\Cauchy_M^{-1}] \to \Alg$ is any functor 
and $\epsilon: D\, I \to \id_{\Reg_M[\Cauchy_M^{-1}]}$ 
denotes the counit of $D \dashv I$. 
The triangle identities for $I^\ast \dashv D^\ast$ 
follow directly from those of $D \dashv I$. 
\sk

Next, we have to prove that this adjunction restricts to  the claimed
source and target categories in \eqref{eqn:qftQFTequivalence}. 
Given $\AAA\in \qft(M)\subseteq \Alg^{\Reg_M}$,
the functor $I^\ast (\AAA) = \AAA \, I : \Reg_{M}[\Cauchy_M^{-1}]\to \Alg$
satisfies the causality axiom of Definition \ref{def:QFT2} because of Lemma \ref{lem:orthloc}.
Hence, $I^\ast (\AAA) \in\QFT(M)$. Vice versa, given $\BBB\in \QFT(M)\subseteq 
\Alg^{\Reg_M[\Cauchy_M^{-1}]}$, the functor $D^\ast (\BBB) = \BBB\,D :\Reg_M \to \Alg$
satisfies the causality axiom of Definition \ref{def:QFT2} because
of Lemma \ref{lem:orthloc} and the time-slice axiom of Definition \ref{def:QFT2}
because $D$ sends by construction morphisms in $\Cauchy_M$ to isomorphisms.
Hence, $D^\ast (\BBB)\in\qft(M)$.
\sk

Using Lemma \ref{lem:reflloc}, we obtain that the counit $\widetilde{\epsilon}$ 
of the restricted adjunction \eqref{eqn:qftQFTequivalence} is an isomorphism. 
Furthermore, all components of $\eta$ are Cauchy morphisms, 
hence $\widetilde{\eta}_\AAA^{} = \AAA\eta$ is an isomorphism for all 
$\AAA\in \qft(M)$, i.e.\ the unit $\widetilde{\eta}$ is an isomorphism. 
This completes the proof that \eqref{eqn:qftQFTequivalence}  is an adjoint equivalence.
\end{proof}

\begin{rem}\label{rem:qftQFTequivalence}
Theorem \ref{theo:qftQFTequivalence} provides us with a constructive
prescription of how to change between the two equivalent formulations of
algebraic quantum field theories given in Definitions \ref{def:QFT} and \ref{def:QFT2}.
Concretely, given any $\AAA \in \qft(M)$, i.e.\ a functor $\AAA : \Reg_M\to \Alg$ satisfying the 
causality and time-slice axioms as in Definition \ref{def:QFT},  the corresponding
quantum field theory $I^\ast(\AAA)\in\QFT(M)$ in the sense of Definition \ref{def:QFT2}
reads as follows: It is the functor $I^\ast(\AAA) = \AAA\,I : \Reg_M[\Cauchy_M^{-1}]\to\Alg$
on the category of regions $V\subseteq M$ that are stable under Cauchy development,
which assigns to $V\subseteq M$ the algebra $\AAA(V)\in\Alg$ and to an inclusion
$i : V\to V^\prime$ the algebra map $\AAA(i) : \AAA(V)\to \AAA(V^\prime)$.
More interestingly, given $\BBB\in \QFT(M)$, i.e.\ a functor $\BBB : \Reg_M[\Cauchy_M^{-1}]\to \Alg$
satisfying the causality axiom as in Definition \ref{def:QFT2}, the corresponding
quantum field theory $D^\ast(\BBB)\in\qft(M)$ in the sense of Definition \ref{def:QFT}
reads as follows: It is the functor $D^\ast(\BBB) = \BBB\,D : \Reg_M\to \Alg$ defined on the
category of (not necessarily Cauchy development stable) regions $U\subseteq M$,
which assigns to $U\subseteq M$ the algebra $\BBB(D(U))$ corresponding to the Cauchy development of $U$ 
and to an inclusion $i : U\to U^\prime$ the algebra map $\BBB(D(i)) : \BBB(D(U))\to \BBB(D(U^\prime))$
associated to the inclusion $D(i) :D(U)\to D(U^\prime)$ of Cauchy developments.
\end{rem}

\begin{rem}\label{rem:alsoint}
It is straightforward to check that 
the results of this section still hold true when one replaces 
$\Reg_{M}$ with its full subcategory $\Reg_{\interior{M}}$ 
of regions contained in the interior of $M$ and $\Cauchy_M$ with $\Cauchy_{\interior{M}}$
(cf.\ Definition \ref{def:spacetime}).
This follows from the observation that the Cauchy development 
of a subset of the interior of $M$ is also contained in $\interior{M}$, 
as shown in Proposition \ref{propo:spacetime}~(c). 
We denote by 
\begin{flalign}
\QFT(\interior{M}) \subseteq \Alg^{\Reg_{\interior{M}}[\Cauchy_{\interior{M}}^{-1}]}
\end{flalign}
the category of algebraic quantum field theories in the sense of Definition \ref{def:QFT2}
on the interior regions of $M$. Concretely, an object $\AAA \in \QFT(\interior{M})$
is a functor $\AAA : \Reg_{\interior{M}}[\Cauchy_{\interior{M}}^{-1}] \to \Alg$
that satisfies the causality axiom of Definition \ref{def:QFT2} for causally disjoint interior
regions.
\end{rem}

%%%%%%%%%%%%%%%%%%%%%%%%%%%%%%%%%%%%%%%%%%%%%%%%
%%%%%%%%%%%%%%%%%%%%%%%%%%%%%%%%%%%%%%%%%%%%%%%%

\section{\label{sec:bdyext}Universal boundary extension}
The goal of this section is to develop a universal 
construction to extend quantum field theories from the interior of 
a spacetime $M$ with timelike boundary to the whole spacetime. 
(Again, we do not have to assume that  $M$ is
globally hyperbolic in the sense of Definition \ref{def:globhyp}.)
Loosely speaking, our extended quantum field theory will have the
following pleasant properties: (1)~It describes precisely those observables 
that are generated from the original theory on the interior, (2)~it does not
require a choice of boundary conditions, (3)~specific choices of boundary 
conditions correspond to ideals of our extended quantum field theory.
We also refer to Section \ref{sec:characterization} for more details
on the properties~(1) and~(3).
\sk

The starting point for this construction  is
the full subcategory inclusion $\Reg_{\interior{M}}\subseteq \Reg_{M}$
defined by selecting only the regions of $\Reg_{M}$ 
that lie in the interior of $M$ (cf.\ Definition \ref{def:spacetime}). 
We denote the corresponding embedding functor by 
\begin{flalign}\label{eqn:j}
j: \Reg_{\interior{M}} \longrightarrow \Reg_M
\end{flalign}
and notice that $j$ preserves (and also detects) causally disjoint inclusions, 
i.e.\ $j$ is a full orthogonal subcategory embedding in the terminology 
of \cite{operad}. Making use of Proposition \ref{propo:locmodel},
Lemma \ref{lem:reflloc} and Remark \ref{rem:alsoint}, we define a functor 
$J: \Reg_{\interior{M}}[\Cauchy_{\interior{M}}^{-1}] 
\to \Reg_M[\Cauchy_M^{-1}]$ on the localized categories
via the commutative diagram
\begin{subequations}\label{eqn:J}
\begin{flalign}
\xymatrix{
\ar[d]_-{I}\Reg_{\interior{M}}[\Cauchy_{\interior{M}}^{-1}]  \ar[r]^-{J} & \Reg_M[\Cauchy_M^{-1}]\\
\Reg_{\interior{M}} \ar[r]_-{j}&\Reg_M\ar[u]_-{D}
}
\end{flalign}
Notice that $J$ is simply an embedding functor, which
acts on objects and morphisms as
\begin{flalign}
V\subseteq \interior{M} \longmapsto V\subseteq M\quad,\qquad
(i : V\to V^\prime) \longmapsto (i: V\to V^\prime)\quad.
\end{flalign}
\end{subequations}
From this explicit description it is clear that $J$ preserves (and also detects) causally disjoint inclusions,
i.e.\ it is a full orthogonal subcategory embedding. The constructions in 
\cite[Section 5.3]{operad} (see also \cite{involution} for details how to treat $\ast$-algebras)
then imply that $J$ induces an adjunction
\begin{flalign}\label{eqn:EXTRES}
\xymatrix{
\ext: \QFT(\interior{M}) \ar@<0.5ex>[r] 
& \QFT(M): \res \ar@<0.5ex>[l]
} 
\end{flalign}
between the category of quantum field theories on the interior $\interior{M}$  (cf.\ Remark \ref{rem:alsoint})
and the category of quantum field theories on the whole spacetime $M$.
The right adjoint $\res := J^\ast : \QFT(M) \to  \QFT(\interior{M}) $ is the pullback
along $J : \Reg_{\interior{M}}[\Cauchy_{\interior{M}}^{-1}]  \to  \Reg_M[\Cauchy_M^{-1}]$,
i.e.\ it restricts quantum field theories defined on $M$ to the interior $\interior{M}$.
The left adjoint $\ext : \QFT(\interior{M}) \to  \QFT(M) $ should be regarded as a universal extension
functor which extends quantum field theories on the interior $\interior{M}$ to the whole
spacetime $M$. The goal of this section is to analyze the properties of this extension functor
and to develop an explicit model that allows us to do computations in the sections below.
\sk

An important structural result, whose physical relevance is explained in Remark 
\ref{rem:Flocality} below, is the following proposition.
\begin{propo}\label{propo:EXTRESunit}
The unit 
\begin{flalign}
\eta: \id_{\QFT(\interior{M})} \longrightarrow \res\, \ext 
\end{flalign}
of the adjunction \eqref{eqn:EXTRES} is a natural isomorphism. 
\end{propo}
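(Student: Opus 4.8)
The plan is to exploit the operadic nature of the adjunction \eqref{eqn:EXTRES}: by the constructions of \cite[Section 5.3]{operad} (with the $\ast$-algebra refinements of \cite{involution}), the functor $J$ induces a morphism $\O_J$ of the colored operads whose algebras are the objects of $\QFT(\interior{M})$ and $\QFT(M)$; the right adjoint $\res = J^\ast$ is restriction along $\O_J$, and the left adjoint $\ext$ is the corresponding operadic left Kan extension. Writing $\O_{\interior{M}}$ and $\O_M$ for these two operads, the claim that $\eta$ is a natural isomorphism is, componentwise at an object $V\in\Reg_{\interior{M}}[\Cauchy_{\interior{M}}^{-1}]$, the statement that the canonical map $\AAA(V)\to(\res\,\ext\,\AAA)(V) = (\ext\,\AAA)(J V)$ into the operadic left Kan extension (induced by the unary unit operation $\id_V$) is an isomorphism.

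First I would reduce the whole proposition to a single fully-faithfulness statement about $\O_J$: that for every tuple $\underline{W}=(W_1,\dots,W_n)$ and every target $V$, all objects of $\Reg_{\interior{M}}[\Cauchy_{\interior{M}}^{-1}]$, the induced map of operation sets
\begin{flalign}
\O_{\interior{M}}\binom{V}{\underline{W}}\longrightarrow \O_M\binom{J V}{J\underline{W}}
\end{flalign}
is a bijection. This is where the geometry enters. An operation is a tuple of inclusions $W_i\subseteq V$ whose sources are pairwise causally disjoint; since $J$ is fully faithful as a functor, such inclusions in the interior correspond bijectively to the inclusions $J W_i\subseteq J V$ in $M$, and since $J$ preserves \emph{and detects} causal disjointness (it is a full orthogonal subcategory embedding, as noted right after \eqref{eqn:J}), the orthogonality constraints on the two sides match exactly. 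It is precisely the detection of orthogonality, not merely its preservation, that upgrades the evident injection to a bijection.

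Granting this bijection, I would conclude by the operadic co-Yoneda lemma. The object underlying $\ext\,\AAA$ is computed, schematically, by a coend over the operation sets (the precise construction being a bar/coequalizer construction as in \cite[Section 5.3]{operad}),
\begin{flalign}
(\ext\,\AAA)(J V)\;\cong\;\int^{\underline{W}}\O_M\binom{J V}{J\underline{W}}\cdot\AAA(\underline{W})\quad,
\end{flalign}
so that the fully-faithfulness bijection replaces each $\O_M\binom{JV}{J\underline{W}}$ by $\O_{\interior{M}}\binom{V}{\underline{W}}$, turning this into the analogous construction for $\O_{\interior{M}}$ evaluated at $V$; co-Yoneda then identifies it with $\AAA(V)$. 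Tracing through the definitions shows this identification is exactly the unit component $\eta_\AAA$ at $V$, and naturality in $\AAA$ and in $V$ is automatic from the functoriality of all constructions involved.

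The main obstacle I anticipate is not the geometry, which is entirely contained in the already-established fact that $J$ is a full orthogonal subcategory embedding, but the careful bookkeeping needed to run the co-Yoneda argument in the operadic, algebra-valued setting: the left Kan extension is a colimit of free products of $\ast$-algebras rather than a plain colimit in $\Vec$, so one must check that the bar construction indeed reduces as claimed once the operation sets are identified. Equivalently, one may phrase the collapse as the existence of a terminal datum $(\,\id_V : V\to V\,)$ in the indexing diagram of the colimit; the real work is verifying that the operadic structure contributes nothing beyond this terminal datum, which again hinges on the detection of orthogonality ensuring that no additional interior operations surject onto $V$.
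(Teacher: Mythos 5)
Your proposal is correct and takes essentially the same route as the paper: the paper's proof consists precisely of observing that $J$ is a full orthogonal subcategory embedding and then citing \cite[Proposition 5.6]{operad}, whose proof is exactly the operadic argument you reconstruct (bijectivity on operation sets from fullness plus detection of orthogonality, followed by the collapse of the bar construction computing the operadic left Kan extension). The only difference is that you unpack the cited result rather than invoking it as a black box, and your flagged caveat about the ``terminal datum'' bookkeeping is precisely the content handled there.
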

\begin{proof}
This is a direct consequence of the fact that the functor $J$ given in \eqref{eqn:J} 
is a full orthogonal subcategory embedding and the general result in \cite[Proposition 5.6]{operad}. 
\end{proof}

\begin{rem}\label{rem:Flocality}
The physical interpretation of this result is as follows: 
Let $\AAA \in \QFT(\interior{M})$ be a quantum field 
theory defined only on the interior $\interior{M}$ of $M$ 
and let $\BBB := \ext\, \AAA \in \QFT(M)$ denote its universal 
extension to the whole spacetime $M$. The $\AAA$-component 
\begin{flalign}
\eta_{\AAA}: \AAA \longrightarrow \res\, \ext\, \AAA
\end{flalign}
of the unit of the adjunction \eqref{eqn:EXTRES} 
allows us to compare $\AAA$ with the restriction $\res\, \BBB$ 
of its extension $\BBB = \ext\, \AAA$. 
Since $\eta_{\AAA}$ is an isomorphism by Proposition \ref{propo:EXTRESunit}, 
restricting the extension $\BBB$ recovers our original theory $\AAA$ up to isomorphism. 
This allows us to interpret the left adjoint 
$\ext : \QFT(\interior{M}) \to \QFT(M)$ as a genuine extension prescription. 
Notice that this also proves that the universal extension $\ext\, \AAA \in \QFT(M)$ 
of any theory $\AAA \in \QFT(\interior{M})$ on the interior satisfies F-locality 
\cite{Flocality}.
\end{rem}

We next address the question how to \textit{compute} the  extension functor 
$\ext : \QFT(\interior{M}) \to \QFT(M)$ explicitly. A crucial step towards
reaching this goal is to notice that $\ext$ may be computed by a left Kan extension.
\begin{propo}\label{propo:LanJ}
Consider the adjunction
\begin{flalign}
\xymatrix{
\Lan_J : \Alg^{\Reg_{\interior{M}}[\Cauchy_{\interior{M}}^{-1}]} \ar@<0.5ex>[rr] 
&& \Alg^{\Reg_M[\Cauchy_M^{-1}]}: J^\ast \ar@<0.5ex>[ll]
} \quad,
\end{flalign}
corresponding to left Kan extension along the functor $J : \Reg_{\interior{M}}[\Cauchy_{\interior{M}}^{-1}]
\to \Reg_M[\Cauchy_M^{-1}]$. Then the restriction
of $\Lan_J$ to $\QFT(\interior{M})$ induces a functor
\begin{flalign}\label{eqn:LanJ}
\Lan_J : \QFT(\interior{M})\longrightarrow \QFT(M)
\end{flalign}
that is left adjoint to the restriction functor $\res: \QFT(M)\to  \QFT(\interior{M})$ in \eqref{eqn:EXTRES}.
Due to uniqueness (up to unique natural isomorphism) of adjoint functors (cf.\ Proposition \ref{propo:adjointunique}), it follows
that $\ext\cong \Lan_J$, i.e.\ \eqref{eqn:LanJ} is a model for the extension functor $\ext$ in \eqref{eqn:EXTRES}.
\end{propo}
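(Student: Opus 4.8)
The plan is to exhibit (a suitable restriction of) $\Lan_J$ as a left adjoint of $\res=J^{\ast}$ between the two quantum field theory categories, and then to invoke uniqueness of adjoints (Proposition \ref{propo:adjointunique}) to obtain $\ext\cong\Lan_J$. At the level of the full functor categories we already have the standard left Kan extension adjunction $\Lan_J\dashv J^{\ast}$, and since $\Alg$ is cocomplete, $\Lan_J$ is computed pointwise by the colimit $(\Lan_J\AAA)(V)=\colim_{(W\subseteq V)}\AAA(W)$ over the comma category $J\downarrow V$, whose objects are the interior, Cauchy-development-stable regions $W\subseteq V$. Because $\res=J^{\ast}$ holds by definition of $\res$, the whole problem reduces to showing that $\Lan_J$ and $J^{\ast}$ carry the two full subcategories of causality-respecting functors into one another, and that the adjunction hom-set bijection then restricts.

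The direction for $J^{\ast}$ is immediate. Since $J$ is a full orthogonal subcategory embedding, it sends a pair of causally disjoint inclusions in $\Reg_{\interior{M}}[\Cauchy_{\interior{M}}^{-1}]$ to a causally disjoint pair in $\Reg_M[\Cauchy_M^{-1}]$, so the causality commutator for $J^{\ast}\BBB=\BBB\circ J$ is the pullback of that for $\BBB$ and hence vanishes whenever $\BBB\in\QFT(M)$; thus $J^{\ast}$ maps $\QFT(M)$ into $\QFT(\interior{M})$. Granting in addition that $\Lan_J$ maps $\QFT(\interior{M})$ into $\QFT(M)$, the bijection $\Hom(\Lan_J\AAA,\BBB)\cong\Hom(\AAA,J^{\ast}\BBB)$ restricts verbatim to $\AAA\in\QFT(\interior{M})$ and $\BBB\in\QFT(M)$, since both $\QFT(\interior{M})$ and $\QFT(M)$ are \emph{full} subcategories; naturality is inherited for free. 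This yields the restricted adjunction $\Lan_J\dashv\res$, and by uniqueness of left adjoints it gives $\ext\cong\Lan_J$, as claimed.

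The main obstacle is therefore the single remaining point: that $\Lan_J\AAA$ satisfies the causality axiom for every $\AAA\in\QFT(\interior{M})$. I would argue directly from the pointwise colimit formula. Fix causally disjoint inclusions $V_1,V_2\subseteq V$ in $\Reg_M[\Cauchy_M^{-1}]$. Since in $\Alg$ the structure maps of a colimit generate the algebra, the images of $\Lan_J\AAA(V_1)$ and $\Lan_J\AAA(V_2)$ in $\Lan_J\AAA(V)$ are generated by the images of $\AAA(W_1)$ and $\AAA(W_2)$ with $W_1\subseteq V_1$ and $W_2\subseteq V_2$ interior and Cauchy-stable; by bilinearity of the commutator it then suffices to make these commute. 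Now $W_1$ and $W_2$ are causally disjoint, as subsets of the causally disjoint $V_1,V_2$, so by Proposition \ref{propo:Cauchy2} the disjoint union $W_1\sqcup W_2$ is causally convex, and its Cauchy development $W:=D(W_1\sqcup W_2)$ is an interior, Cauchy-stable region by Proposition \ref{propo:spacetime}(c),(d) together with Proposition \ref{propo:Cauchy1}(b); moreover $W=D(W_1\sqcup W_2)\subseteq D(V)=V$ by monotonicity of $D$ (Proposition \ref{propo:Cauchy1}(a)) and stability of $V$. Thus $W$ is an object of the comma category $J\downarrow V$ containing both $W_1$ and $W_2$, and the two structure maps $\AAA(W_i)\to\Lan_J\AAA(V)$ factor through $\AAA(W)$. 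Inside $M$ the regions $W_1,W_2$ are still causally disjoint, so the causality axiom for $\AAA$ at the object $W$ forces the images of $\AAA(W_1)$ and $\AAA(W_2)$ to commute already in $\AAA(W)$, hence in $\Lan_J\AAA(V)$. This is precisely the step where the geometry of Section \ref{sec:geometry}---stability of interior regions under Cauchy development and the preservation of causal disjointness---is indispensable, and where a left Kan extension ignorant of the orthogonality relation would otherwise risk violating causality.
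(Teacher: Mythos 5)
Your proposal is correct, and its geometric core coincides exactly with the paper's: you factor a causally disjoint pair of interior regions through the Cauchy development of their disjoint union, citing the same chain of facts (Proposition \ref{propo:Cauchy2}~(b) for causal convexity of $W_1\sqcup W_2$, Proposition \ref{propo:spacetime}~(c),(d) for openness, causal convexity and interiority of $D(W_1\sqcup W_2)$, and Proposition \ref{propo:Cauchy1}~(a),(b) for monotonicity and stability, which also give $D(W_1\sqcup W_2)\subseteq D(V)=V$). The difference lies in how the categorical superstructure is handled. The paper does not verify the causality axiom by hand: it invokes \cite[Corollary 6.5]{operad}, which reduces the entire proposition --- that $\Lan_J$ preserves the QFT subcategories and that the adjunction restricts --- to showing that every $V\in\Reg_M[\Cauchy_M^{-1}]$ is $J$-closed in the sense of \cite[Definition 6.3]{operad}, i.e.\ that every causally disjoint pair $i_1:V_1\to V\leftarrow V_2:i_2$ with \emph{interior} sources factors through a common interior region; this is precisely the factorization you construct, applied with $V_1,V_2$ themselves interior. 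You instead re-prove that reduction directly: the pointwise colimit formula for $\Lan_J$ over the comma category $J\downarrow V$, the fact that colimit legs generate in $\Alg$, the derivation property of the commutator to pass from generators to the subalgebras they generate, and fullness of $\QFT(\interior{M})$ and $\QFT(M)$ in the respective functor categories to restrict the hom-set bijection. What your route buys is self-containedness: nothing is needed beyond ordinary Kan extension theory and the geometry of Section \ref{sec:geometry}. What the paper's route buys is brevity and generality: the cited operadic result applies uniformly to any full orthogonal subcategory embedding (and handles the $\ast$-algebra structure via \cite{involution}), so only the boundary-specific geometry has to be checked. In effect, your argument is a correct, elementary re-derivation of the relevant instance of \cite[Corollary 6.5]{operad}.
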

\begin{proof}
A general version of this problem has been addressed in \cite[Section 6]{operad}.
Using in particular \cite[Corollary 6.5]{operad}, we observe that we can prove this proposition
by showing that every object $V\in \Reg_M[\Cauchy_M^{-1}]$ is $J$-closed in the sense of
\cite[Definition 6.3]{operad}. In our present scenario, this amounts to proving that
for all causally disjoint inclusions $i_1 : V_1\to V \leftarrow V_2 : i_2$ 
with $V_1,V_2 \in \Reg_{\interior{M}}[\Cauchy_{\interior{M}}^{-1}]$ in the interior
and $V \in \Reg_M[\Cauchy_M^{-1}]$ not necessarily in the interior, there exists
a factorization of both $i_1$ and $i_2$ through a common interior region.
Let us consider the Cauchy development $D(V_1 \sqcup V_2)$ 
of the disjoint union and the canonical inclusions 
$j_1: V_1 \to D(V_1 \sqcup V_2) \leftarrow V_2: j_2$. 
As we explain below, $D(V_1 \sqcup V_2) \in \Reg_{\interior{M}}[\Cauchy_{\interior{M}}^{-1}]$ is an interior region 
and $j_1, j_2$ provide the desired factorization: 
Since the open set $V_1 \sqcup V_2 \subseteq \interior M$ 
is causally convex by Proposition \ref{propo:Cauchy2}~(b),
$D(V_1 \sqcup V_2)$ 
is causally convex, open and contained in the interior $\interior M$ 
by Proposition \ref{propo:spacetime}~(c-d). It is, moreover, 
stable under Cauchy development by Proposition \ref{propo:Cauchy1}~(b), 
which also provides the inclusion 
$V_k \subseteq V_1 \sqcup V_2 \subseteq D(V_1 \sqcup V_2)$ 
inducing $j_k$, for $k=1,2$. Consider now the chain of inclusions 
$V_k \subseteq V_1 \sqcup V_2 \subseteq V$ 
corresponding to $i_k$, for $k=1,2$. 
From the stability under Cauchy development of $V_1$, $V_2$ and $V$,
we obtain also the chain of inclusions 
$V_k \subseteq D(V_1 \sqcup V_2) \subseteq V$, for $k=1,2$, 
that exhibits the desired factorization
\begin{flalign}
\xymatrix{
&V&\\
\ar@/^1pc/[ru]^-{i_1} V_1 \ar@{-->}[r]_-{j_1} & D(V_1\sqcup V_2)  \ar@{-->}[u]^-{D(i_1\sqcup i_2)}& \ar@{-->}[l]^-{j_2} \ar@/_1pc/[lu]_-{i_2}V_2
}
\end{flalign}
which completes the proof.
\end{proof}

We shall now briefly review a concrete model for left Kan extension 
along full subcategory embeddings that was developed 
in \cite[Section 6]{operad}. This model is obtained by means of
abstract operadic techniques, but it admits an intuitive graphical interpretation
that we explain in Remark \ref{rem:graphical} below.
It allows us to compute quite explicitly the extension
$\ext \AAA = \Lan_J \AAA \in \QFT(M)$ of a quantum field theory
$\AAA \in \QFT(\interior{M})$ defined on the interior $\interior{M}$ 
to the whole spacetime $M$. The functor 
$\ext \AAA : \Reg_{M}[\Cauchy_M^{-1}]\to \Alg$ describing the 
extended quantum field theory reads as follows:
To $V\in \Reg_{M}[\Cauchy_M^{-1}]$ it assigns a quotient algebra
\begin{flalign}\label{eqn:extAAAV}
\ext\AAA(V) \, =\, \bigoplus_{\und{i}: \und{V}\to V}\!\!\AAA(\und{V}) \bigg/\!\!\!\sim
\end{flalign}
that we will describe now in detail. The direct sum (of vector spaces) in \eqref{eqn:extAAAV} runs over all tuples
$\und{i} :\und{V}\to V$ of morphisms in $\Reg_{M}[\Cauchy_M^{-1}]$,
i.e.\ $\und{i} = (i_1 : V_1 \to V, \cdots, i_n : V_n\to V)$ for some $n\in\bbZ_{\geq 0}$,
with the requirement that all sources $V_k\in \Reg_{\interior{M}}[\Cauchy_{\interior{M}}^{-1}]$
are interior regions. (Notice that the regions $V_k$ are not assumed to be causally disjoint and 
that the empty tuple, i.e.\ $n=0$, is also allowed.) The vector space
$\AAA(\und{V})$ is defined by the tensor product of vector spaces
\begin{flalign}\label{eqn:extAAAVelement}
\AAA(\und{V}) \,:=\,\bigotimes_{k=1}^{\vert\und{V}\vert} \AAA(V_k)\quad,
\end{flalign}
where $\vert \und{V}\vert$ is the length of the tuple. (For the empty tuple, we set $\AAA(\emptyset) = \bbC$.)
This means that the (homogeneous) elements
\begin{flalign}\label{eqn:extAAAVbeforesim}
(\und{i},\und{a}) \in \bigoplus_{\und{i}: \und{V}\to V}\!\!\AAA(\und{V})
\end{flalign}
are given by pairs consisting of a tuple of morphisms $\und{i} :\und{V}\to V$ (with all $V_k$ in the interior)
and an element $\und{a} \in\AAA(\und{V}) $ of the corresponding 
tensor product vector space \eqref{eqn:extAAAVelement}.
The product on \eqref{eqn:extAAAVbeforesim} is given on homogeneous elements by
\begin{flalign}\label{eqn:extAAAVproduct}
(\und{i},\und{a})\,(\und{i}^\prime,\und{a}^\prime) \,:=\, \big((\und{i},\und{i}^\prime), \und{a}\otimes\und{a}^\prime\big) \quad,
\end{flalign}
where $(\und{i},\und{i}^\prime) = (i_1,\dots,i_n , i^\prime_1,\dots,i^\prime_m)$ is the concatenation
of tuples. The unit element in \eqref{eqn:extAAAVbeforesim} is $\oone := (\emptyset,1)$, 
where $\emptyset$ is the empty tuple and $1\in\bbC$, and the $\ast$-involution is defined by
\begin{flalign}\label{eqn:extAAAVinvolution}
\big((i_1,\dots,i_n) , a_1\otimes \cdots\otimes a_n\big)^\ast \,:=\,
\big( (i_n,\dots,i_1), a_n^\ast \otimes\cdots\otimes a_1^\ast\big)
\end{flalign} 
and $\bbC$-antilinear extension.
Finally, the quotient in \eqref{eqn:extAAAV} is by the two-sided $\ast$-ideal
of the algebra \eqref{eqn:extAAAVbeforesim} that is generated by
\begin{flalign}\label{extAAAVrelations}
\Big(\und{i}\big(\und{i}_1,\dots,\und{i}_n\big) , \und{a}_1\otimes\cdots\otimes\und{a}_n\Big)
- \Big(\und{i}, \AAA(\und{i}_1)\big(\und{a}_1\big)\otimes\cdots\otimes\AAA(\und{i}_n)\big(\und{a}_n\big)\Big)
 \in  \bigoplus_{\und{i}: \und{V}\to V}\!\!\AAA(\und{V})\quad,
\end{flalign}
for all tuples $\und{i} : \und{V}\to V$ of length $\vert \und{V}\vert =n \geq 1$ (with all $V_k$ in the interior),
all tuples $\und{i}_k : \und{V}_k \to V_k$ of  $\Reg_{\interior{M}}[\Cauchy_{\interior{M}}^{-1}]$-morphisms 
(possibly of length zero), for $k=1,\dots, n$, and all $\und{a}_k\in \AAA(\und{V}_k)$, for $k=1,\dots,n$.
The tuple in the first term of \eqref{extAAAVrelations} is defined by composition
\begin{subequations}
\begin{flalign}
\und{i}\big(\und{i}_1,\dots,\und{i}_n\big)
\,:=\, \big(i_1\,i_{11},\dots, i_1\,i_{1\vert \und{V}_1\vert}, \dots, i_n\,i_{n1},\dots, i_n\,i_{n\vert\und{V}_n\vert}\big)
\end{flalign}
and the expressions $\AAA(\und{i})\big(\und{a}\big)$ in the second term are determined by
\begin{flalign}
 \AAA(\und{i}) \,:\, \AAA(\und{V})\longrightarrow \AAA(V)~,~~ a_1\otimes \cdots \otimes a_n \longmapsto
\AAA(i_1)\big(a_1\big)\,\cdots\,\AAA(i_n)\big(a_n\big)\quad,
\end{flalign}
\end{subequations}
where multiplication in $\AAA(V)$ is denoted by juxtaposition.
To a morphism $ i^\prime : V\to V^\prime $ in $\Reg_{M}[\Cauchy_M^{-1}]$,
the functor $\ext \AAA : \Reg_{M}[\Cauchy_M^{-1}]\to \Alg$ assigns
the algebra map
\begin{flalign}
\ext\AAA(i^\prime ) \,:\, \ext\AAA(V) \longrightarrow \ext\AAA(V^\prime) ~,~~
[\und{i},\und{a}] \longmapsto  \big[i^\prime(\und{i}) , \und{a}\big]\quad,
\end{flalign}
where we used square brackets to indicate equivalence classes in \eqref{eqn:extAAAV}.
\begin{rem}\label{rem:graphical}
The construction of the algebra $\ext\AAA(V)$ above admits an intuitive graphical interpretation:
We shall visualize the (homogeneous) elements $(\und{i},\und{a})$ in 
\eqref{eqn:extAAAVbeforesim} by decorated trees
\begin{flalign}
\begin{tikzpicture}[cir/.style={circle,draw=black,inner sep=0pt,minimum size=1.5mm},
        poin/.style={circle, inner sep=0pt,minimum size=0mm}]
\node[poin] (Mout) [label=above:{\small $V$}] at (0,1) {};
\node[cir] (Min1) [label=below:{\small $a_1$}] at (-0.5,0) {};
\node[cir] (Min2) [label=below:{\small $a_n$}] at (0.5,0) {};
\node[poin] (V)  at (0,0.6) {};
\draw[thick] (Min1) -- (V);
\draw[thick] (Min2) -- (V);
\draw[thick] (V) -- (Mout);
\node[poin] at (0,0) {{\small $\cdots$}};
\end{tikzpicture}
\end{flalign}
where $a_k \in \AAA(V_k)$ is an element of the algebra $\AAA(V_k)$ associated to
the interior region $V_k\subseteq V$, for all $k=1,\dots,n$. We interpret such a decorated 
tree as a formal product of the formal pushforward along $\und{i} : \und{V}\to V$
of the family of observables $a_k \in \AAA(V_k)$. The product \eqref{eqn:extAAAVproduct}
is given by concatenation of the inputs of the individual decorated trees, i.e.\
\begin{flalign}
\begin{tikzpicture}[cir/.style={circle,draw=black,inner sep=0pt,minimum size=1.5mm},
        poin/.style={circle, inner sep=0pt,minimum size=0mm}]
\node[poin] (Mout) [label=above:{\small $V$}] at (0,1) {};
\node[cir] (Min1) [label=below:{\small $a_1$}] at (-0.5,0) {};
\node[cir] (Min2) [label=below:{\small $a_n$}] at (0.5,0) {};
\node[poin] (V)  at (0,0.6) {};
\draw[thick] (Min1) -- (V);
\draw[thick] (Min2) -- (V);
\draw[thick] (V) -- (Mout);
\node[poin] at (0,0) {{\small $\cdots$}};
\node[poin] (MMout) [label=above:{\small $V$}] at (2,1) {};
\node[cir] (MMin1) [label=below:{\small $a^\prime_1$}] at (1.5,0) {};
\node[cir] (MMin2) [label=below:{\small $a^\prime_m$}] at (2.5,0) {};
\node[poin] (VV)  at (2,0.6) {};
\draw[thick] (MMin1) -- (VV);
\draw[thick] (MMin2) -- (VV);
\draw[thick] (VV) -- (MMout);
\node[poin] at (2,0) {{\small $\cdots$}};
\node[poin] (MMMout) [label=above:{\small $V$}] at (5,1) {};
\node[cir] (MMMin1) [label=below:{\small $a_1$}] at (4.5,0) {};
\node[cir] (MMMin2) [label=below:{\small $a^\prime_m$}] at (5.5,0) {};
\node[poin] (VVV)  at (5,0.6) {};
\draw[thick] (MMMin1) -- (VVV);
\draw[thick] (MMMin2) -- (VVV);
\draw[thick] (VVV) -- (MMMout);
\node[poin] at (5,0) {{\small $\cdots$}};
\node[poin] at (3.5,0.5) {{$=$}};
\node[poin] at (1,0.5) {{$\cdot$}};
\end{tikzpicture}
\end{flalign}
where the decorated tree on the right-hand side has $n+m$ inputs.
The $\ast$-involution \eqref{eqn:extAAAVinvolution} may be visualized
by reversing the input profile and applying $\ast$ to each element $a_k\in\AAA(V_k)$.
Finally, the $\ast$-ideal in \eqref{extAAAVrelations} implements the following relations: 
Assume that $(\und{i},\und{a})$ is such that the sub-family of embeddings
$(i_{k},i_{k+1},\dots, i_{l}) : (V_{k},V_{k+1},\dots, V_{l})\to V$ factorizes through some common
interior region, say $V^\prime \subseteq V$. Using the original functor $\AAA \in \QFT(\interior{M})$,
we may form the product $\AAA(i_{k}) (a_{k})\,\cdots\,\AAA(i_{l})(a_{l})$ in the algebra $\AAA(V^\prime)$,
which we denote for simplicity  by $a_{k}\cdots a_l\in  \AAA(V^\prime)$. We then have the relation
\begin{flalign}
\begin{tikzpicture}[cir/.style={circle,draw=black,inner sep=0pt,minimum size=1.5mm},
        poin/.style={circle, inner sep=0pt,minimum size=0mm}]
\node[poin] (Mout) [label=above:{\small $V$}] at (0,1.5) {};
\node[cir] (Min1) [label=below:{\small $a_1$}] at (-1.5,-0.5) {};
\node[cir] (Min2) [label=below:{\small $a_k$}] at (-0.5,-0.5) {};
\node[cir] (Min3) [label=below:{\small $a_l$}] at (0.5,-0.5) {};
\node[cir] (Min4) [label=below:{\small $a_n$}] at (1.5,-0.5) {};
\node[poin] (V)  at (0,0.6) {};
\draw[thick] (Min1) -- (V);
\draw[thick] (Min2) -- (V);
\draw[thick] (Min3) -- (V);
\draw[thick] (Min4) -- (V);
\draw[thick] (V) -- (Mout);
\node[poin] at (-1,-0.5) {{\small $\cdots$}};
\node[poin] at (0,-0.5) {{\small $\cdots$}};
\node[poin] at (1,-0.5) {{\small $\cdots$}};
\node[poin] (MMout) [label=above:{\small $V$}] at (5,1.5) {};
\node[cir] (MMin1) [label=below:{\small $a_1$}] at (3.5,-0.5) {};
\node[cir] (MMin2) [label=below:{\small $a_k\cdots a_l$}] at (5,-0.5) {};
\node[cir] (MMin3) [label=below:{\small $a_n$}] at (6.5,-0.5) {};
\node[poin] (VV)  at (5,0.6) {};
\draw[thick] (MMin1) -- (VV);
\draw[thick] (MMin2) -- (VV);
\draw[thick] (MMin3) -- (VV);
\draw[thick] (VV) -- (MMout);
\node[poin] at (4.25,-0.5) {{\small $\cdots$}};
\node[poin] at (5.75,-0.5) {{\small $\cdots$}};
\node[poin] at (2.5,0.5) {{$\sim$}};
\end{tikzpicture}
\end{flalign}
which we interpret as follows: Whenever $(i_k,i_{k+1},\dots,i_l) : (V_k, V_{k+1},\dots,V_l)\to V$ 
is a sub-family of embeddings that factorizes through a common interior region $V^\prime\subseteq V$,
then the formal product of the formal pushforward of observables is identified with the 
formal pushforward of the actual product of observables on $V^\prime$.
\end{rem}

%%%%%%%%%%%%%%%%%%%%%%%%%%%%%%%%%%%%%%%%%%%%%%%%
%%%%%%%%%%%%%%%%%%%%%%%%%%%%%%%%%%%%%%%%%%%%%%%%

\section{\label{sec:characterization}Characterization of boundary quantum field theories}
In the previous section we established a universal construction that allows
us to extend quantum field theories $\AAA\in\QFT(\interior{M})$ that are defined
only on the interior $\interior{M}$ of a spacetime $M$ with timelike boundary 
to the whole spacetime. The extension  $\ext\AAA\in\QFT(M)$ is characterized 
abstractly by the $\ext\dashv \res$ adjunction in
\eqref{eqn:EXTRES}. We now shall reverse the question and ask which quantum field theories
$\BBB\in \QFT(M)$ on $M$ admit a description in terms of (quotients of) our universal extensions.
\sk

Given any quantum field theory $\BBB\in\QFT(M)$ on the whole spacetime $M$,
we can use the right adjoint in \eqref{eqn:EXTRES} in order to restrict it to a theory
$\res\BBB \in\QFT(\interior{M})$ on the interior of $M$. Applying now the extension
functor, we obtain another quantum field theory 
$\ext\res\BBB \in \QFT(M)$ on the whole spacetime $M$, which we would like 
to compare to our original theory $\BBB\in\QFT(M)$. A natural
comparison map is given by the $\BBB$-component of the counit
$\epsilon : \ext\res \to\id_{\QFT(M)}$ of the adjunction \eqref{eqn:EXTRES}, 
i.e.\ the canonical $\QFT(M)$-morphism
\begin{subequations}\label{eqn:extrescounitB}
\begin{flalign}
\epsilon_{\BBB}^{} \,:\, \ext\res\BBB \longrightarrow \BBB\quad.
\end{flalign}
Using our model for the extension functor given in \eqref{eqn:extAAAV} (and the formulas
following this equation), the $\Reg_M[\Cauchy_M^{-1}]\ni V$-component of $\epsilon_{\BBB}^{}$
explicitly reads as
\begin{flalign}
(\epsilon_{\BBB}^{})_V^{} \,:\, \bigoplus_{\und{i}: \und{V}\to V}\!\!\BBB(\und{V}) \bigg/\!\!\!\sim  ~\longrightarrow~\BBB(V)~~,~~~~ [\und{i},\und{b}]\, \longmapsto \, \BBB(\und{i})\big(\und{b}\big)\quad.
\end{flalign}
\end{subequations}
In order to establish positive comparison results, we have to introduce the concept of ideals of quantum
field theories.
\begin{defi}\label{def:ideal}
An \textit{ideal} $\III\subseteq \BBB$ of a quantum field theory $\BBB\in\QFT(M)$
is a functor $\III : \Reg_{M}[\Cauchy_M^{-1}] \to \Vec$ to the 
category of complex vector spaces, which satisfies the following properties:
\begin{itemize}
\item[(i)] For all $V\in \Reg_{M}[\Cauchy_M^{-1}]$, $\III(V) \subseteq \BBB(V)$
is a two-sided $\ast$-ideal of the unital $\ast$-algebra $\BBB(V)$.

\item[(ii)] For all $\Reg_{M}[\Cauchy_M^{-1}]$-morphisms $i : V\to V^\prime$,
the linear map $\III (i):\III(V)\to \III(V^\prime)$ is the restriction of 
$\BBB(i) : \BBB(V)\to \BBB(V^\prime)$ to the two-sided $\ast$-ideals 
$\III(V^{(\prime)})\subseteq \BBB(V^{(\prime)})$.
\end{itemize}
\end{defi}
\begin{lem}
Let $\BBB\in\QFT(M)$ and $\III\subseteq \BBB$ any ideal. 
Let us define $\BBB/\III(V)  :=  \BBB(V)/\III(V)$ to be the quotient algebra, 
for all $V\in \Reg_{M}[\Cauchy_M^{-1}]$, and $\BBB/\III(i) : \BBB/\III(V)\to \BBB/\III(V^\prime)$ to be
the $\Alg$-morphism induced by $\BBB(i) : \BBB(V)\to \BBB(V^\prime)$, for all 
$\Reg_{M}[\Cauchy_M^{-1}]$-morphisms $i : V\to V^\prime$.
Then $\BBB/\III \in\QFT(M)$ is a quantum field theory on $M$ which we call
the quotient of $\BBB$ by $\III$.
\end{lem}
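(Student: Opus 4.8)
The plan is to verify that $\BBB/\III$ meets all the requirements for membership in $\QFT(M)$ as formulated in Definition \ref{def:QFT2}: that it assigns unital $\ast$-algebras to objects, that it is a functor on $\Reg_M[\Cauchy_M^{-1}]$, and that it obeys the causality axiom. Since the ambient category of Definition \ref{def:QFT2} already consists of functors on the \emph{localized} category, the time-slice axiom is hard-coded and need not be checked separately; only causality and the functor structure require attention. A convenient organizing tool throughout is the family of canonical quotient projections $\pi_V : \BBB(V) \to \BBB(V)/\III(V)$.

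First I would confirm that the object assignment lands in $\Alg$. By Definition \ref{def:ideal}~(i) each $\III(V)$ is a two-sided $\ast$-ideal of the unital $\ast$-algebra $\BBB(V)$, and the quotient of a unital $\ast$-algebra by such an ideal is again a unital $\ast$-algebra, with product, unit and involution induced from $\BBB(V)$; this is a standard algebraic fact. Next I would establish functoriality. For a morphism $i : V\to V'$, condition (ii) of Definition \ref{def:ideal} gives $\BBB(i)(\III(V))\subseteq \III(V')$, so $\BBB(i)$ descends to the stated $\ast$-algebra map $\BBB/\III(i)$ on the quotients; equivalently the projections satisfy $\BBB/\III(i)\circ\pi_V = \pi_{V'}\circ\BBB(i)$, i.e.\ $\pi : \BBB\to\BBB/\III$ is a natural transformation. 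Preservation of identities and composition by $\BBB/\III$ then follows from that of $\BBB$ together with the surjectivity of the $\pi_V$.

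It remains to check the causality axiom. Given causally disjoint inclusions $i_1 : V_1\to V\leftarrow V_2 : i_2$ in $\Reg_M[\Cauchy_M^{-1}]$, the fact that $\BBB\in\QFT(M)$ ensures that the commutator $[\BBB(i_1)(b_1),\BBB(i_2)(b_2)]_{\BBB(V)}$ vanishes for all $b_1\in\BBB(V_1)$ and $b_2\in\BBB(V_2)$. Since $\pi_V$ is an algebra homomorphism and satisfies $\pi_V\circ\BBB(i_k) = \BBB/\III(i_k)\circ\pi_{V_k}$, applying it turns this into the vanishing of $[\BBB/\III(i_1)(\pi_{V_1}(b_1)),\BBB/\III(i_2)(\pi_{V_2}(b_2))]_{\BBB/\III(V)}$, and surjectivity of $\pi_{V_1},\pi_{V_2}$ upgrades this to the vanishing of the full induced commutator. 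Hence $\BBB/\III\in\QFT(M)$.

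I do not expect a genuine obstacle here: the single point demanding care is the well-definedness of $\BBB/\III$ on morphisms, which is exactly what condition (ii) of Definition \ref{def:ideal} is engineered to guarantee, and the causality verification is immediate once one observes that algebra homomorphisms send vanishing commutators to vanishing commutators.
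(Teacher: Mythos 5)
Your proof is correct and follows the same route as the paper: functoriality of $\BBB/\III$ comes from condition (ii) of Definition \ref{def:ideal}, and the causality axiom is inherited from $\BBB$ by passing to quotients (the paper likewise only checks causality, the time-slice axiom being hard-coded in the localized category). The paper's own proof is a two-line compression of exactly these observations; your version just spells out the standard details (quotient $\ast$-algebras, naturality and surjectivity of the projections $\pi_V$) that the paper leaves implicit.
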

\begin{proof}
The requirements listed in Definition \ref{def:ideal} ensure that
$\BBB/\III : \Reg_{M}[\Cauchy_M^{-1}]\to \Alg$ is an $\Alg$-valued functor.
It satisfies the causality axiom of Definition \ref{def:QFT2} because this property is inherited from 
$\BBB\in\QFT(M)$ by taking quotients.
\end{proof}
\begin{lem}
Let $\kappa : \BBB\to\BBB^\prime$ be any $\QFT(M)$-morphism.
Define the vector space $\ker\kappa (V) := \ker\big(\kappa_V^{} : \BBB(V)\to \BBB^\prime(V)\big)
\subseteq \BBB(V)$, for all $V\in \Reg_{M}[\Cauchy_M^{-1}]$,
and $\ker\kappa(i) :\ker\kappa(V) \to \ker\kappa(V^\prime)$
to be the linear map induced by $\BBB(i):\BBB(V)\to\BBB(V^\prime)$,
for all $\Reg_{M}[\Cauchy_M^{-1}]$-morphisms $i : V\to V^\prime$.
Then $\ker\kappa : \Reg_{M}[\Cauchy_M^{-1}] \to \Vec$ is an ideal of $\BBB\in\QFT(M)$,
which we call the kernel of $\kappa$.
\end{lem}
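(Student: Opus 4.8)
The plan is to verify the two conditions of Definition \ref{def:ideal}, together with the well-definedness of $\ker\kappa$ as a functor; all of these follow from elementary properties of $\ast$-algebra homomorphisms and from the naturality of $\kappa$. The one structural input to keep in mind is that, since $\BBB,\BBB^\prime\in\QFT(M)$ are $\Alg$-valued functors and $\kappa$ is a $\QFT(M)$-morphism, each component $\kappa_V^{} : \BBB(V)\to\BBB^\prime(V)$ is a morphism of unital $\ast$-algebras over $\bbC$.

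First I would check that $\ker\kappa$ is a well-defined functor $\Reg_{M}[\Cauchy_M^{-1}]\to\Vec$, i.e.\ that for each morphism $i : V\to V^\prime$ the map $\BBB(i)$ indeed restricts to a linear map $\ker\kappa(V)\to\ker\kappa(V^\prime)$. This is precisely where the naturality of $\kappa$ enters: the naturality square gives $\kappa_{V^\prime}^{}\circ\BBB(i) = \BBB^\prime(i)\circ\kappa_V^{}$, so for any $a\in\ker\kappa(V)$ one computes $\kappa_{V^\prime}^{}\big(\BBB(i)(a)\big) = \BBB^\prime(i)\big(\kappa_V^{}(a)\big) = 0$, whence $\BBB(i)(a)\in\ker\kappa(V^\prime)$. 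Thus $\ker\kappa(i)$ is well-defined as the restriction of $\BBB(i)$, and functoriality ($\ker\kappa(\id) = \id$ and compatibility with composition) is inherited immediately from the functoriality of $\BBB$.

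For condition (i) of Definition \ref{def:ideal} I would fix $V\in\Reg_{M}[\Cauchy_M^{-1}]$ and verify that $\ker\kappa(V) = \ker\big(\kappa_V^{}\big)$ is a two-sided $\ast$-ideal of $\BBB(V)$. It is a linear subspace, being the kernel of a linear map. For the two-sided ideal property, given $a\in\ker\kappa(V)$ and $b\in\BBB(V)$, multiplicativity of $\kappa_V^{}$ yields $\kappa_V^{}(a\,b) = \kappa_V^{}(a)\,\kappa_V^{}(b) = 0$ and likewise $\kappa_V^{}(b\,a) = 0$, so $a\,b,\,b\,a\in\ker\kappa(V)$. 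For $\ast$-closedness, since $\kappa_V^{}$ intertwines the involutions one has $\kappa_V^{}(a^\ast) = \kappa_V^{}(a)^\ast = 0$, hence $a^\ast\in\ker\kappa(V)$. Condition (ii) then holds by construction, since $\ker\kappa(i)$ was defined precisely as the restriction of $\BBB(i)$ to the relevant kernels.

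I do not anticipate a genuine obstacle here; the content of the lemma is exactly the standard fact that kernels of $\ast$-homomorphisms are two-sided $\ast$-ideals, upgraded to a statement about subfunctors. The only point requiring care is the first step, where naturality of $\kappa$ is what guarantees that the structure maps $\BBB(i)$ preserve the pointwise kernels and hence assemble into an ideal in the sense of Definition \ref{def:ideal}; everything else is a direct pointwise computation.
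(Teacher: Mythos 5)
Your proof is correct and follows exactly the same route as the paper's: naturality of $\kappa$ gives well-definedness and functoriality of $\ker\kappa$, condition (ii) holds by construction, and condition (i) is the standard fact that kernels of unital $\ast$-algebra morphisms are two-sided $\ast$-ideals. You have merely spelled out the pointwise computations that the paper leaves implicit.
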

\begin{proof}
The fact that $\ker\kappa$ defines a functor follows from naturality of $\kappa$.
Property (ii) in Definition \ref{def:ideal} holds true by construction.
Property (i) is a consequence of the fact that kernels of unital $\ast$-algebra morphisms
$\kappa_V^{} : \BBB(V)\to \BBB(V^\prime)$ are two-sided $\ast$-ideals.
\end{proof}

\begin{rem}
Using the concept of ideals, we may canonically factorize \eqref{eqn:extrescounitB} according to the diagram
\begin{flalign}\label{eqn:lambdafactorization}
\xymatrix{
\ar[dr]_-{\pi_\BBB^{}} \ext\res\BBB \ar[rr]^-{\epsilon_\BBB^{}} ~&&~ \BBB\\
&\ext\res\BBB\big/ \ker\epsilon_\BBB^{}\ar[ru]_-{\lambda_\BBB^{}}&
}
\end{flalign}
where both the projection $\pi_\BBB^{}$ and the inclusion $\lambda_\BBB^{}$ are $\QFT(M)$-morphisms.
\end{rem}

As a last ingredient for our comparison result, we have to introduce a suitable notion of additivity 
for quantum field theories on spacetimes with timelike boundary. We refer to 
\cite[Definition 2.3]{additivity} for a notion of additivity on globally hyperbolic spacetimes.
\begin{defi}\label{def:additive}
A quantum field theory $\BBB\in\QFT(M)$ on a spacetime $M$ with timelike boundary 
is called \textit{additive (from the interior) at the object $V\in\Reg_M[\Cauchy_M^{-1}]$} if
the algebra $\BBB(V)$ is generated by the images of the $\Alg$-morphisms
$\BBB(i_{\mathrm{int}}) : \BBB(V_{\mathrm{int}}) \to  \BBB(V)$, for all $\Reg_{M}[\Cauchy_M^{-1}]$-morphism
$i_{\mathrm{int}} : V_{\mathrm{int}}\to V$ whose source $V_{\mathrm{int}}\in\Reg_{\interior{M}}[\Cauchy_{\interior{M}}^{-1}]$ 
is in the interior $\interior{M}$ of $M$. We call $\BBB\in\QFT(M)$ \textit{additive (from the interior)} if it is additive
at every object $V\in\Reg_M[\Cauchy_M^{-1}]$. The full subcategory of additive quantum field theories
on $M$ is denoted by $\QFT^{\mathrm{add}}(M)\subseteq \QFT(M)$.
\end{defi}

We can now prove our first characterization theorem for boundary quantum field theories.
\begin{theo}\label{theo:quotientVSadditivity}
Let $\BBB\in\QFT(M)$ be any quantum field theory on a (not necessarily globally hyperbolic) 
spacetime $M$ with timelike boundary and let $V\in \Reg_{M}[\Cauchy_M^{-1}]$. 
Then the following are equivalent:
\begin{itemize}
\item[(1)] The $V$-component
\begin{flalign}
(\lambda_\BBB^{})_V^{} \,:\, \ext\res\BBB(V)\big/ \ker\epsilon_\BBB^{}(V)~\longrightarrow~\BBB(V)
\end{flalign}
of the canonical inclusion in \eqref{eqn:lambdafactorization} is an $\Alg$-isomorphism.

\item[(2)] $\BBB$ is additive at the object  $V\in \Reg_{M}[\Cauchy_M^{-1}]$.
\end{itemize}
\end{theo}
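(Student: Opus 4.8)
The plan is to reduce the equivalence to a single statement about the image of the counit component $(\epsilon_{\BBB}^{})_V^{}$, exploiting the fact that $\lambda_\BBB^{}$ is by construction the monic part of the epi-mono factorization of $\epsilon_\BBB^{}$ through its kernel. First I would note that, by the definition of the factorization \eqref{eqn:lambdafactorization}, the $V$-component $(\lambda_\BBB^{})_V^{}$ is the $\Alg$-morphism induced by $(\epsilon_{\BBB}^{})_V^{}$ on the quotient of $\ext\res\BBB(V)$ by $\Ker\big((\epsilon_{\BBB}^{})_V^{}\big)$. Hence $(\lambda_\BBB^{})_V^{}$ is always injective, and since the projection $(\pi_\BBB^{})_V^{}$ is surjective it has the same image as $(\epsilon_{\BBB}^{})_V^{}$. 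Consequently $(\lambda_\BBB^{})_V^{}$ is an $\Alg$-isomorphism if and only if it is surjective, which in turn holds precisely when $(\epsilon_{\BBB}^{})_V^{}$ is surjective, i.e.\ when $\Imm\big((\epsilon_{\BBB}^{})_V^{}\big) = \BBB(V)$.

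The key computation is then to identify $\Imm\big((\epsilon_{\BBB}^{})_V^{}\big)$ explicitly. Using the formula \eqref{eqn:extrescounitB}, a homogeneous generator $[\und{i},\und{b}]$ with $\und{i} = (i_1 : V_1\to V,\dots,i_n : V_n\to V)$ (all sources $V_k$ interior) and $\und{b} = b_1\otimes\cdots\otimes b_n$ is sent to the product $\BBB(i_1)(b_1)\cdots\BBB(i_n)(b_n)$ in $\BBB(V)$. Since $(\epsilon_{\BBB}^{})_V^{}$ is a unital $\ast$-algebra morphism, its image is a unital $\ast$-subalgebra of $\BBB(V)$; taking tuples of length one shows that this image contains every $\BBB(i_{\mathrm{int}})\big(\BBB(V_{\mathrm{int}})\big)$, hence contains the unital $\ast$-subalgebra they generate. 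Conversely, each generator maps to a finite product of such elements, so the image is contained in (and therefore equals) the unital $\ast$-subalgebra of $\BBB(V)$ generated by the images of all $\BBB(i_{\mathrm{int}}) : \BBB(V_{\mathrm{int}})\to\BBB(V)$ with interior source $V_{\mathrm{int}}\in\Reg_{\interior{M}}[\Cauchy_{\interior{M}}^{-1}]$.

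Combining the two steps yields the claim: by Definition \ref{def:additive}, $\BBB$ is additive at $V$ exactly when this generated subalgebra is all of $\BBB(V)$, i.e.\ when $\Imm\big((\epsilon_{\BBB}^{})_V^{}\big) = \BBB(V)$, which by the first step is equivalent to $(\lambda_\BBB^{})_V^{}$ being an isomorphism. The only point requiring a little care, and thus the main (though mild) obstacle, is verifying that the image of $(\epsilon_{\BBB}^{})_V^{}$ is exactly the generated $\ast$-subalgebra rather than merely the linear span of products of pushed-forward observables: here one uses that each $b_k$ ranges over all of $\BBB(V_k)$ and that the $\BBB(i_k)$ are $\ast$-morphisms, so that adjoints of the generators again lie in the image, together with the presence of the empty tuple to supply the unit. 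Note that no global hyperbolicity is invoked, consistent with the hypotheses of the theorem.
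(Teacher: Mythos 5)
Your proposal is correct and follows essentially the same route as the paper: both reduce the claim to surjectivity of $(\epsilon_{\BBB}^{})_V^{}$ (using that $(\lambda_\BBB^{})_V^{}$ is injective by construction of the factorization \eqref{eqn:lambdafactorization}), and both identify the image of $(\epsilon_{\BBB}^{})_V^{}$ with the unital $\ast$-subalgebra of $\BBB(V)$ generated by the images of the morphisms $\BBB(i_{\mathrm{int}})$ with interior source, via the explicit model \eqref{eqn:extAAAV}--\eqref{eqn:extrescounitB}. The paper packages the second step through the commutative diagram \eqref{eqn:comporule} together with the observation that the elements $[i_{\mathrm{int}},b]$ generate $\ext\res\BBB(V)$, which is exactly the content of your direct image computation.
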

\begin{proof}
Let $i_{\mathrm{int}} : V_{\mathrm{int}} \to V$ be any  $\Reg_{M}[\Cauchy_M^{-1}]$-morphism
whose source $V_{\mathrm{int}}\in\Reg_{\interior{M}}[\Cauchy_{\interior{M}}^{-1}]$ 
is in the interior $\interior{M}$ of $M$. Using our model for the extension 
functor given in \eqref{eqn:extAAAV} (and the formulas following this equation),
we obtain an $\Alg$-morphism
\begin{flalign}\label{eqn:iintinclusion}
[i_{\mathrm{int}} , - ] \,:\, \BBB(V_{\mathrm{int}}) \longrightarrow \ext\res\BBB(V)~,~~
b\longmapsto [i_{\mathrm{int}},b]\quad.
\end{flalign}
Composing this morphism with the $V$-component of $\epsilon_\BBB^{}$ given in \eqref{eqn:extrescounitB}, 
we obtain a commutative diagram
\begin{flalign}\label{eqn:comporule}
\xymatrix{
\ext\res\BBB(V) \ar[rr]^-{(\epsilon_\BBB^{})_V^{}} && \BBB(V)\\
&\ar[ul]^-{ [i_{\mathrm{int}} , - ]~~~}\BBB(V_{\mathrm{int}}) \ar[ru]_-{~~~\BBB(i_{\mathrm{int}})}&
}
\end{flalign}
for all $i_{\mathrm{int}} : V_{\mathrm{int}} \to V$ with $V_{\mathrm{int}}$ in the interior.
\sk

Next we observe that the images of the $\Alg$-morphisms \eqref{eqn:iintinclusion}, for all  
$i_{\mathrm{int}} : V_{\mathrm{int}} \to V$ with $V_{\mathrm{int}}$ in the interior, 
generate $\ext\res\BBB(V)$. Combining this property with \eqref{eqn:comporule}, we conclude that
$(\epsilon_\BBB^{})_V^{}$ is a surjective map if and only if  $\BBB$ is additive at $V$.
Hence, $(\lambda_\BBB^{})_V^{} $ given by \eqref{eqn:lambdafactorization}, which is injective by construction, 
is an $\Alg$-isomorphism if and only if  $\BBB$ is additive at $V$.
\end{proof}
\begin{cor}\label{cor:quotientVSadditivity}
$\lambda_\BBB^{} : \ext\res\BBB\big/\ker\epsilon_\BBB^{} \to \BBB$ given by \eqref{eqn:lambdafactorization}
is a $\QFT(M)$-isomorphism if and only if $\BBB\in \QFT^{\mathrm{add}}(M)\subseteq\QFT(M)$ 
is additive in the sense of Definition \ref{def:additive}.
\end{cor}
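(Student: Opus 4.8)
The plan is to deduce the corollary from Theorem \ref{theo:quotientVSadditivity} by a purely formal globalization argument, upgrading the family of objectwise statements already established to a statement about the whole $\QFT(M)$-morphism $\lambda_\BBB^{}$.

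First I would recall the elementary fact about functor categories that does all the work: a morphism in $\Alg^{\Reg_M[\Cauchy_M^{-1}]}$, i.e.\ a natural transformation, is an isomorphism precisely when each of its components is an $\Alg$-isomorphism. Indeed, given objectwise inverses $((\lambda_\BBB^{})_V^{})^{-1}$ in $\Alg$, the naturality of the candidate inverse transformation follows automatically from the naturality squares of $\lambda_\BBB^{}$ by pre- and post-composing with the relevant inverses. Since $\QFT(M)\subseteq \Alg^{\Reg_M[\Cauchy_M^{-1}]}$ is a \emph{full} subcategory and both $\ext\res\BBB/\ker\epsilon_\BBB^{}$ and $\BBB$ are objects of $\QFT(M)$, a $\QFT(M)$-morphism between them is an isomorphism in $\QFT(M)$ if and only if it is an isomorphism in the ambient functor category. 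Combining these two observations, $\lambda_\BBB^{}$ is a $\QFT(M)$-isomorphism if and only if its component $(\lambda_\BBB^{})_V^{}$ is an $\Alg$-isomorphism for every object $V\in\Reg_M[\Cauchy_M^{-1}]$.

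Next I would invoke Theorem \ref{theo:quotientVSadditivity}, which asserts precisely that $(\lambda_\BBB^{})_V^{}$ is an $\Alg$-isomorphism if and only if $\BBB$ is additive at the object $V$. Quantifying over all $V$ and feeding this into the previous step gives that $\lambda_\BBB^{}$ is a $\QFT(M)$-isomorphism if and only if $\BBB$ is additive at every object $V\in\Reg_M[\Cauchy_M^{-1}]$. By Definition \ref{def:additive}, this last condition is exactly the definition of $\BBB$ being additive (from the interior), i.e.\ $\BBB\in\QFT^{\mathrm{add}}(M)$, which completes the argument.

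Since the content is entirely objectwise and already packaged in Theorem \ref{theo:quotientVSadditivity}, I do not anticipate any genuine obstacle here. The only point requiring a moment's care is the standard lemma that a pointwise-invertible natural transformation is invertible in the functor category (with naturality of the inverse being automatic), which is what licenses the passage from ``isomorphism at every $V$'' to ``isomorphism of quantum field theories''; everything else is a direct quantification of the theorem together with the definition of additivity.
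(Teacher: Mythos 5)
Your proposal is correct and is precisely the argument the paper intends: the corollary is stated without proof because it follows from Theorem \ref{theo:quotientVSadditivity} exactly by the globalization you describe, namely that a natural transformation between objects of the full subcategory $\QFT(M)\subseteq\Alg^{\Reg_M[\Cauchy_M^{-1}]}$ is an isomorphism if and only if all of its components are $\Alg$-isomorphisms, combined with the definition of additivity as additivity at every object. Your care in spelling out the pointwise-invertibility lemma and the fullness of the subcategory is exactly the routine content the paper suppresses.
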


We shall now refine this characterization theorem by showing that
$\QFT^{\mathrm{add}}(M)$ is equivalent, as a category, to a category
describing quantum field theories on the interior of $M$ together with 
suitable ideals of their universal extensions. The precise definitions are as follows.
\begin{defi}\label{defi:trivinteriorideal}
Let $\BBB\in\QFT(M)$. An ideal $\III\subseteq \BBB$ is called \textit{trivial on the interior}
if its restriction to $\Reg_{\interior{M}}[\Cauchy_{\interior{M}}^{-1}]$ is the functor assigning
zero vector spaces, i.e.\ $\III(V_{\mathrm{int}}) = 0$ for all $V_{\mathrm{int}}\in 
\Reg_{\interior{M}}[\Cauchy_{\interior{M}}^{-1}]$ in the interior $\interior{M}$ of $M$.
\end{defi}
\begin{defi}
Let $M$ be a spacetime with timelike boundary. 
We define the category $\IQFT(M)$ as follows:
\begin{itemize}
\item Objects are pairs $(\AAA,\III)$ consisting of a quantum field theory 
$\AAA\in\QFT(\interior{M})$ on the interior $\interior{M}$ of $M$ and an
ideal $\III\subseteq \ext\AAA$ of its universal extension $\ext\AAA\in\QFT(M)$ that
is trivial on the interior.

\item Morphisms $\kappa : (\AAA,\III)\to (\AAA^\prime,\III^\prime)$
are $\QFT(M)$-morphisms $\kappa : \ext\AAA\to \ext\AAA^{\prime}$
between the universal extensions that preserve the ideals,
i.e.\ $\kappa$ restricts to a natural transformation 
from $\III\subseteq \ext\AAA$ to $\III^\prime\subseteq \ext\AAA^\prime$.
\end{itemize}
\end{defi}

\noindent There exists an obvious functor
\begin{subequations}\label{eqn:Qfunctor}
\begin{flalign}
Q \,:\, \IQFT(M)\longrightarrow \QFT^{\mathrm{add}}(M)\quad,
\end{flalign}
which assigns to $(\AAA,\III)\in\IQFT(M)$ the quotient
\begin{flalign}
Q(\AAA,\III)\,:=\, \ext\AAA\big/\III \in\QFT^{\mathrm{add}}(M)\quad.
\end{flalign}
\end{subequations}
Notice that additivity of $\ext\AAA\big/\III$ follows from that of the universal extension
$\ext\AAA$ (cf.\ the arguments in the proof of Theorem \ref{theo:quotientVSadditivity})
and the fact that quotients preserve the additivity property. There exists also a functor 
\begin{subequations}\label{eqn:Sfunctor}
\begin{flalign}
S\,:\,  \QFT^{\mathrm{add}}(M)\longrightarrow \IQFT(M)\quad,
\end{flalign}
which `extracts' from a quantum field theory on $M$ the relevant ideal. Explicitly, it
assigns to $\BBB\in\QFT^{\mathrm{add}}(M)$ the pair
\begin{flalign}
S\BBB \,:=\, \big(\res\BBB , \ker\epsilon_\BBB^{}\big)\quad.
\end{flalign}
\end{subequations}
Notice that the ideal $\ker\epsilon_\BBB^{} \subseteq \ext\res\BBB$ is trivial on the interior: Applying
the restriction functor \eqref{eqn:EXTRES} to $\epsilon_\BBB^{}$ we obtain
a $\QFT(\interior{M})$-morphism
\begin{flalign}\label{eqn:resepsilonB}
\res\epsilon_{\BBB}^{} \,:\,\res \ext\res\BBB\longrightarrow \res\BBB\quad.
\end{flalign}
Proposition \ref{propo:EXTRESunit} together with the 
triangle identities for the adjunction $\ext\dashv\res$ in \eqref{eqn:EXTRES}
then imply that \eqref{eqn:resepsilonB} is an isomorphism with inverse
given by $\eta_{\res\BBB}^{} :  \res\BBB\to \res \ext\res\BBB$.
In particular, $\res\epsilon_{\BBB}^{}$ has a trivial kernel and hence
$\ker\epsilon_\BBB^{} \subseteq \ext\res\BBB$ is trivial on the interior (cf.\ Definition \ref{defi:trivinteriorideal}).
Our refined characterization theorem for boundary quantum field theories
is as follows.
\begin{theo}\label{theo:IQFTisQFT}
The functors $Q$ and $S$ defined in \eqref{eqn:Qfunctor} and \eqref{eqn:Sfunctor}
exhibit an equivalence of categories
\begin{flalign}
 \QFT^{\mathrm{add}}(M)\,\cong\,  \IQFT(M)\quad.
\end{flalign}
\end{theo}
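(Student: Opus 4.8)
The plan is to show that $Q$ and $S$ are quasi-inverse by exhibiting two natural isomorphisms $Q\circ S\cong\id_{\QFT^{\mathrm{add}}(M)}$ and $S\circ Q\cong\id_{\IQFT(M)}$. The first is essentially a restatement of Corollary \ref{cor:quotientVSadditivity}: for $\BBB\in\QFT^{\mathrm{add}}(M)$ one has $Q(S\BBB)=Q(\res\BBB,\ker\epsilon_\BBB)=\ext\res\BBB/\ker\epsilon_\BBB$, and the canonical inclusion $\lambda_\BBB:\ext\res\BBB/\ker\epsilon_\BBB\to\BBB$ of \eqref{eqn:lambdafactorization} is a $\QFT(M)$-isomorphism precisely because $\BBB$ is additive. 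I would then note that naturality of $\{\lambda_\BBB\}$ in $\BBB$ follows from the naturality of the counit $\epsilon$ together with the naturality of the factorization $\epsilon_\BBB=\lambda_\BBB\circ\pi_\BBB$, so that these components assemble into a natural isomorphism $Q\circ S\cong\id_{\QFT^{\mathrm{add}}(M)}$.

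For the other composite I would fix $(\AAA,\III)\in\IQFT(M)$, write $\BBB:=\ext\AAA/\III$ and let $\pi:\ext\AAA\to\BBB$ be the quotient map. Since $\III$ is trivial on the interior, on interior regions $\pi$ is the quotient by the zero ideal; hence $\res\BBB=\res\ext\AAA$, the restriction $\res\pi=\id_{\res\ext\AAA}$ is the identity, and therefore $\ext\res\pi=\id$. Combining the naturality square of $\epsilon$ for the morphism $\pi$, namely $\epsilon_\BBB\circ\ext\res\pi=\pi\circ\epsilon_{\ext\AAA}$, with the triangle identity $\epsilon_{\ext\AAA}\circ\ext\eta_\AAA=\id_{\ext\AAA}$ for the adjunction $\ext\dashv\res$, I obtain the key identity
\begin{flalign}
\epsilon_\BBB\circ\ext\eta_\AAA=\pi\,:\,\ext\AAA\longrightarrow\BBB\quad.
\end{flalign}
Because $\eta_\AAA$ is an isomorphism by Proposition \ref{propo:EXTRESunit}, so is $\ext\eta_\AAA:\ext\AAA\to\ext\res\BBB$, and this identity then forces $\ker\epsilon_\BBB=\ext\eta_\AAA(\III)$, i.e.\ $\ext\eta_\AAA$ carries the ideal $\III\subseteq\ext\AAA$ isomorphically onto $\ker\epsilon_\BBB\subseteq\ext\res\BBB$. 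Consequently $\ext\eta_\AAA$ is an isomorphism $(\AAA,\III)\to(\res\BBB,\ker\epsilon_\BBB)=S(Q(\AAA,\III))$ in $\IQFT(M)$. This step, the identification of $\ker\epsilon_{\ext\AAA/\III}$ with $\III$, is the crux of the theorem.

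It remains to verify that $\{\ext\eta_\AAA\}$ is natural, and here I expect the main obstacle to lie. The decisive structural input is that, since the unit $\eta$ is a natural isomorphism (Proposition \ref{propo:EXTRESunit}), the left adjoint $\ext$ is fully faithful; therefore every $\IQFT(M)$-morphism $\kappa:\ext\AAA\to\ext\AAA'$ is of the form $\kappa=\ext g$ for a unique $\QFT(\interior{M})$-morphism $g:\AAA\to\AAA'$. Spelling out $S$ and $Q$ on morphisms and using once more that the ideals are trivial on the interior gives $SQ\kappa=\ext\res\kappa$, so the required naturality square reduces, upon writing $\kappa=\ext g$, to applying the functor $\ext$ to the naturality square of $\eta$ for $g$, yielding $\ext\res\kappa\circ\ext\eta_\AAA=\ext\eta_{\AAA'}\circ\kappa$. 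The delicate bookkeeping is thus twofold: carefully tracking that restriction to the interior trivialises the relevant quotients and ideals (so that $\res\pi$ and $SQ\kappa$ take the simple forms above), and invoking full faithfulness of $\ext$ to lift the extension-level morphisms of $\IQFT(M)$ back to the interior, where the naturality of $\eta$ can be applied.
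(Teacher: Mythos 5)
Your proof is correct, and its skeleton coincides with the paper's: both directions of the equivalence are witnessed by the same natural isomorphisms, namely $\lambda : Q\,S \to \id_{\QFT^{\mathrm{add}}(M)}$ obtained from Corollary \ref{cor:quotientVSadditivity} and $\ext\eta : \id_{\IQFT(M)} \to S\,Q$. The genuine difference lies in how the crucial identity $\epsilon_{\ext\AAA/\III}^{}\circ \ext\eta_\AAA^{} = \pi$ is established. The paper verifies it by an explicit computation in the left-Kan-extension model of Section \ref{sec:bdyext}, inserting the concrete formulas for the components of $\epsilon_{\BBB}^{}$ and $\eta_\AAA^{}$ into the defining diagram \eqref{eqn:projdiagramtmp}. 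You instead derive it model-independently: triviality of $\III$ on the interior gives $\res\pi = \id$, hence $\ext\res\pi=\id$; the naturality square of $\epsilon$ at $\pi$ then yields $\epsilon_{\ext\AAA/\III}^{} = \pi\circ\epsilon_{\ext\AAA}^{}$, and the triangle identity for $\ext\dashv\res$ finishes the computation. This argument would apply verbatim to any adjunction whose unit is an isomorphism and whose ideals are trivial on the subcategory where the unit is inverted, independently of the particular model for $\ext$. Moreover, you make explicit two points the paper leaves implicit: that the key identity forces $\ext\eta_\AAA^{}(\III)=\ker\epsilon_{\ext\AAA/\III}^{}$, so that $\ext\eta_\AAA^{}$ \emph{and its inverse} both preserve the ideals (which is what makes it an $\IQFT(M)$-isomorphism rather than merely a morphism), and the naturality of the family $\{\ext\eta_\AAA^{}\}$ with respect to arbitrary $\IQFT(M)$-morphisms, which you obtain by invoking full faithfulness of $\ext$ (equivalent to Proposition \ref{propo:EXTRESunit}) to write every such morphism $\kappa$ as $\ext g$ and then applying $\ext$ to the naturality square of $\eta$ at $g$; this last step is needed precisely because $\IQFT(M)$-morphisms are a priori maps between extensions, not images of interior morphisms. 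Both routes are sound; the paper's is concrete and self-contained given its explicit model, while yours trades that concreteness for abstraction and supplies the naturality bookkeeping that the paper only asserts.
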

\begin{proof}
We first consider the composition of functors $Q\,S :\QFT^{\mathrm{add}}(M)\to \QFT^{\mathrm{add}}(M)$.
To $\BBB\in \QFT^{\mathrm{add}}(M)$, it assigns
\begin{flalign}
QS\BBB = \ext\res\BBB\big/\ker\epsilon_\BBB^{}\quad.
\end{flalign}
The $\QFT(M)$-morphisms $\lambda_\BBB^{} : \ext\res\BBB/\ker\epsilon_\BBB^{}\to \BBB$
given by \eqref{eqn:lambdafactorization} define a natural transformation
$\lambda : Q\,S \to \id_{\QFT^{\mathrm{add}}(M)}$, which
is a natural isomorphism  due to Corollary \ref{cor:quotientVSadditivity}.
\sk

Let us now consider the composition of functors $S\,Q : \IQFT(M)\to \IQFT(M)$. To 
$(\AAA,\III)\in \IQFT(M)$, it assigns
\begin{flalign}
SQ(\AAA,\III) = \Big(\res \big(\ext\AAA\big/ \III\big),\ker\epsilon_{\ext\AAA/ \III}^{}\Big)
= \Big(\res \ext\AAA,\ker\epsilon_{\ext\AAA/ \III}^{}\Big)\quad,
\end{flalign}
where we also used that $\res \big(\ext\AAA\big/ \III\big) = \res\ext\AAA$
because $\III$ is by hypothesis trivial on the interior. 
Using further the $\QFT(\interior{M})$-isomorphism $\eta_{\AAA} : \AAA\to \res\ext\AAA$
from Proposition \ref{propo:EXTRESunit}, we define a $\QFT(M)$-morphism $q_{(\AAA,\III)}$
via the diagram
\begin{flalign}\label{eqn:projdiagramtmp}
\xymatrix{
\ar[d]^-{\cong}_-{\ext\eta_\AAA^{}}\ext\AAA \ar[rr]^-{q_{(\AAA,\III)}} && \ext\AAA\big/\III\\
\ext\res \ext\AAA \ar@{=}[rr]&& \ext \res \big(\ext\AAA\big/ \III\big)\ar[u]_-{\epsilon_{\ext\AAA/\III}^{}}
}
\end{flalign}
Using the explicit expression for $\epsilon_{\ext\AAA/\III}^{}$
given in \eqref{eqn:extrescounitB} and the 
explicit expression for $\eta_\AAA^{}$ given by
\begin{flalign}
(\eta_{\AAA}^{})_{V_{\mathrm{int}}}^{} \,:\,\AAA(V_{\mathrm{int}}) \longrightarrow \res\ext\AAA(V_{\mathrm{int}}) = 
\bigoplus_{\und{i}: \und{V}\to V_{\mathrm{int}}}\!\!\AAA(\und{V}) \bigg/\!\!\!\sim 
~~,~~~~a\longmapsto [\id_{V_{\mathrm{int}}},a]\quad,
\end{flalign}
for all $V_{\mathrm{int}}\in \Reg_{\interior{M}}[\Cauchy_{\interior{M}}^{-1}]$,
one computes from the diagram \eqref{eqn:projdiagramtmp} that $q_{(\AAA,\III)}$ is the canonical
projection $\pi : \ext \AAA\to \ext\AAA/\III$. Hence, the $\QFT(M)$-isomorphisms 
$\ext\eta_{\AAA}^{} : \ext\AAA \to\ext\res\ext\AAA$ induce $\IQFT(M)$-isomorphisms
\begin{flalign}
\ext\eta_{\AAA}^{} : (\AAA,\III)\longrightarrow SQ(\AAA,\III)\quad,
\end{flalign}
which are natural in $(\AAA,\III)$, i.e.\ they define a natural isomorphism $\ext\eta : \id_{\IQFT(M)}\to S\,Q$.
\end{proof}

\begin{rem}\label{rem:bdyphysics}
The physical interpretation of this result is as follows:
Every additive quantum field theory $\BBB\in\QFT^{\mathrm{add}}(M)$ on a (not necessarily globally hyperbolic) 
spacetime $M$ with timelike boundary admits an equivalent description in terms of a pair $(\AAA,\III)\in\IQFT(M)$. 
Notice that the roles of $\AAA$ and $\III$ are completely different:
On the one hand, $\AAA\in\QFT(\interior{M})$ is a quantum field theory
on the interior $\interior{M}$ of $M$ and as such it is independent of
the detailed aspects of the boundary. On the other hand, $\III\subseteq \ext\AAA$
is an ideal of the universal extension of $\AAA$ that is trivial on the interior,
i.e.\ it only captures the physics that happens directly at the boundary. 
Examples of such ideals $\III$ arise 
by imposing specific boundary conditions 
on the universal extension $\ext\AAA\in \QFT(M)$, i.e.\ the quotient
$\ext\AAA/\III$ describes a quantum field theory on $M$ that satisfies specific 
boundary conditions encoded in $\III$. We shall illustrate this assertion in 
Section \ref{sec:KG} below using the explicit example given by the free Klein-Gordon field.
\sk

Let us also note that there is a reason why our universal extension 
captures only the class of additive quantum field theories on $M$. Recall
that $\ext\AAA\in\QFT(M)$ takes as an input a quantum field theory 
$\AAA\in\QFT(\interior{M})$ on the interior $\interior{M}$ of $M$. As a consequence, 
the extension $\ext\AAA$ can only have knowledge  of 
the `degrees of freedom' that are generated in some way out of the interior regions. Additive theories in 
the sense of Definition \ref{def:additive} are precisely the theories 
whose `degrees of freedom' are generated out of those localized in the interior regions.
\end{rem}

\section{\label{sec:KG}Example: Free Klein-Gordon theory}
In order to illustrate and make more explicit our abstract constructions 
developed in the previous sections, we shall consider the simple example
given by the free Klein-Gordon field. From now on $M$ 
will be a \textit{globally hyperbolic} spacetime with timelike boundary, 
see Definition \ref{def:globhyp}. This assumption implies that 
all interior regions $\Reg_{\interior{M}}$ are
globally hyperbolic spacetimes with empty boundary, see Proposition \ref{propo:globhyp}. 
This allows us to use the standard techniques of \cite[Section 3]{BGP} 
on such regions. 

\subsection*{Definition on $\Reg_{\interior{M}}[\Cauchy_{\interior{M}}^{-1}]$:}
Let $M$ be a globally hyperbolic spacetime with timelike boundary, 
see Definition \ref{def:globhyp}. The free Klein-Gordon
theory on $\Reg_{\interior{M}}[\Cauchy_{\interior{M}}^{-1}]$ is given by the following 
standard construction, see e.g.\ \cite{BD,BDH} for expository reviews.
On the interior $\interior{M}$, we consider the Klein-Gordon operator
\begin{flalign}\label{eqn:PintM}
P \,:= \, \square + m^2 ~:~C^\infty(\interior{M})\longrightarrow C^\infty(\interior{M})\quad,
\end{flalign} 
where $\square$ is the d'Alembert operator and $m\geq 0$ is a mass parameter.
When restricting $P$ to regions $V\in \Reg_{\interior{M}}[\Cauchy_{\interior{M}}^{-1}]$, we shall write
\begin{flalign}\label{eqn:PV}
P_{V}^{} ~:~C^\infty(V)\longrightarrow C^\infty(V)\quad.
\end{flalign}
It follows from \cite{BGP} that there exists a unique retarded/advanced Green's operator 
\begin{flalign}
G_{V}^\pm ~:~  C_\cc^\infty(V)\longrightarrow C^\infty(V)
\end{flalign}
for $P_V$ because every $V\in \Reg_{\interior{M}}[\Cauchy_{\interior{M}}^{-1}]$ is a 
globally hyperbolic spacetime with empty boundary, cf.\ Proposition \ref{propo:globhyp}.
\sk

The Klein-Gordon theory $\KKK \in \QFT(\interior{M})$
is the functor $\KKK : \Reg_{\interior{M}}[\Cauchy_{\interior{M}}^{-1}] \to\Alg$ 
given by the following assignment: To any $V\in \Reg_{\interior{M}}[\Cauchy_{\interior{M}}^{-1}]$ 
it assigns the associative and unital $\ast$-algebra $\KKK(V)$ that is freely generated by $\Phi_V(f)$, 
for all $f\in C^\infty_\cc(V)$, modulo the two-sided $\ast$-ideal generated by the following relations:
\begin{itemize}
\item \textit{Linearity:} $\Phi_V(\alpha\,f + \beta \,g) = \alpha\,\Phi_V(f) + \beta \,\Phi_V(g)$,
for all $\alpha,\beta\in\bbR$ and $f,g\in C^\infty_\cc(V)$;
\item \textit{Hermiticity:} $\Phi_V(f)^\ast = \Phi_V(f)$, for all $f\in C^\infty_\cc(V)$;
\item \textit{Equation of motion:} $\Phi_V(P_V f) = 0$, for all $f\in C^\infty_\cc(V)$;
\item \textit{Canonical commutation relations (CCR):} $\Phi_V(f) \, \Phi_V(g) - \Phi_V(g) \,\Phi_V(f) 
= \mathrm{i} \,\tau_V(f,g)~\oone$, for all $f,g\in C^\infty_\cc(V)$, where
\begin{flalign}
\tau_V(f,g) \,:=\, \int_V f \,G_V (g)~\vol_V
\end{flalign}
with $G_V := G_V^+ - G_V^-$ the causal propagator and $\vol_V$ the canonical volume form on $V$. 
(Note that $\tau_V$ is antisymmetric, see e.g.\ \cite[Lemma 4.3.5]{BGP}.) 
\end{itemize}
To a morphism $i : V\to V^\prime$ in $\Reg_{\interior{M}}[\Cauchy_{\interior{M}}^{-1}]$, 
the functor  $\KKK :\Reg_{\interior{M}}[\Cauchy_{\interior{M}}^{-1}] \to\Alg$
assigns the algebra map that is specified on the generators by pushforward along $i$ (which we shall suppress)
\begin{flalign}
\KKK(i) \,:\, \KKK(V)\longrightarrow \KKK(V^\prime)~,~~\Phi_V(f) \longmapsto \Phi_{V^\prime}(f)\quad.
\end{flalign}
The naturality of $\tau$ 
(i.e.\ naturality of the causal propagator, cf.\ e.g.\ \cite[Section 4.3]{BGP}) 
entails that the assignment $\KKK$ defines a quantum field theory 
in the sense of Definition \ref{def:QFT2}.

\subsection*{Universal extension:}
Using the techniques developed in Section \ref{sec:bdyext}, we may now extend 
the Klein-Gordon theory $\KKK\in\QFT(\interior{M})$ from the interior $\interior{M}$
to the whole spacetime $M$. In particular, using \eqref{eqn:extAAAV} (and the 
formulas following this equation), one could directly compute the universal extension 
$\ext\KKK \in\QFT(M)$. The resulting expressions, however, can be considerably simplified. We therefore prefer to provide
a more convenient model for the universal extension $\ext\KKK \in\QFT(M)$
by adopting the following strategy: We first make an `educated guess' for a
theory ${\KKK}^{\ext} \in \QFT(M)$ which we expect to be the universal extension
of $\KKK\in \QFT(\interior{M})$. (This was inspired by partially simplifying the 
direct computation of the universal extension.) After this we shall prove that
${\KKK}^{\ext} \in \QFT(M)$ satisfies the universal property that characterizes
$\ext\KKK \in\QFT(M)$. Hence, there exists a (unique) isomorphism
$\ext\KKK \cong {\KKK}^{\ext}$ in $\QFT(M)$, which means that our
${\KKK}^{\ext} \in\QFT(M)$ is a model for the universal extension $\ext\KKK$.
\sk

Let us define the functor ${\KKK}^{\ext} : \Reg_{M}[\Cauchy_M^{-1}]\to\Alg$ by the following
assignment: To any region $V\in \Reg_M[\Cauchy_M^{-1}]$, which may intersect the boundary,
we assign the associative and unital $\ast$-algebra ${\KKK}^{\ext}(V)$ that is freely generated
by $\Phi_V(f)$, for all $f\in C^\infty_\cc(\interior{V})$ in the interior $\interior{V}$ of $V$,
modulo the two-sided ideal generated by the following relations:
\begin{itemize}
\item \textit{Linearity:} $\Phi_V(\alpha\,f + \beta \,g) = \alpha\,\Phi_V(f) + \beta \,\Phi_V(g)$,
for all $\alpha,\beta\in\bbR$ and $f,g\in C^\infty_\cc(\interior{V})$;
\item \textit{Hermiticity:} $\Phi_V(f)^\ast = \Phi_V(f)$, for all $f\in C^\infty_\cc(\interior{V})$;
\item \textit{Equation of motion:} $\Phi_V(P_{\interior{V}} f) = 0$, for all $f\in C^\infty_\cc(\interior{V})$;
\item \textit{Partially-defined CCR:} $\Phi_V(f) \, \Phi_V(g) - \Phi_V(g) \,\Phi_V(f) = \mathrm{i} \,
\tau_{V_{\mathrm{int}} }(f,g)~\oone$, for all interior regions 
$V_{\mathrm{int}} \in \Reg_{\interior{M}}[\Cauchy_{\interior{M}}^{-1}]$
with $V_{\mathrm{int}}\subseteq  \interior{V} $ and  $f,g\in C^\infty_\cc(\interior{V})$ 
with $\supp(f)\cup\supp(g)\subseteq V_{\mathrm{int}}$.
\end{itemize}
\begin{rem}
We note that our partially-defined CCR are consistent in the following sense:
Consider $V_{\mathrm{int}}, V_{\mathrm{int}}^\prime \in \Reg_{\interior{M}}[\Cauchy_{\interior{M}}^{-1}]$
with $V_{\mathrm{int}}^{(\prime)}\subseteq  \interior{V} $  and $f,g\in C^\infty_\cc(\interior{V})$ 
with the property that $\supp(f) \cup \supp(g) \subseteq V_{\mathrm{int}}^{(\prime)}$.
Using the partially-defined CCR for both $V_{\mathrm{int}}$ and $V_{\mathrm{int}}^\prime$, we obtain the equality
$\mathrm{i} \,\tau_{V_{\mathrm{int}}}(f,g)~\oone = \mathrm{i} \,\tau_{V_{\mathrm{int}}^\prime}(f,g)~\oone$ 
in ${\KKK}^{\ext}(V)$. To ensure that ${\KKK}^{\ext}(V)$ is not the zero-algebra, we have to show
that $\tau_{V_{\mathrm{int}}}(f,g)=\tau_{V_{\mathrm{int}}^\prime}(f,g)$. This holds true due to the following
argument: Consider the subset $V_{\mathrm{int}}\cap V_{\mathrm{int}}^{\prime} \subseteq \interior M$. 
This is open, causally convex 
and by Proposition \ref{propo:Cauchy1}~(c) also stable under Cauchy development, 
hence $V_{\mathrm{int}}\cap V_{\mathrm{int}}^{\prime} \in 
\Reg_{\interior{M}}[\Cauchy_{\interior{M}}^{-1}]$. Furthermore, the inclusions 
$V_{\mathrm{int}}\cap V_{\mathrm{int}}^{\prime}\to V_{\mathrm{int}}^{(\prime)}$
are morphisms in $\Reg_{\interior{M}}[\Cauchy_{\interior{M}}^{-1}] $. It follows by construction that 
$\supp(f) \cup \supp(g) \subseteq V_{\mathrm{int}}\cap V_{\mathrm{int}}^\prime$ and hence 
due to naturality of the $\tau$'s we obtain
\begin{flalign}
\tau_{V_{\mathrm{int}}}(f,g)= \tau_{V_{\mathrm{int}}\cap V_{\mathrm{int}}^\prime}(f,g) =
 \tau_{V_{\mathrm{int}}^\prime}(f,g)\quad.
\end{flalign}
Hence, for any fixed pair $f,g\in C^\infty_\cc(\interior{V})$, the partially-defined CCR are independent
of the choice of $V_{\mathrm{int}}$ (if one exists).
\end{rem}

To a morphism $i : V\to V^\prime$ in $\Reg_{M}[\Cauchy_{M}^{-1}]$, the functor
${\KKK}^{\ext} : \Reg_{M}[\Cauchy_M^{-1}]\to\Alg$ assigns the algebra map that is specified
on the generators by the pushforward  along $i$ (which we shall suppress)
\begin{flalign}\label{eqn:KKKextmaps}
{\KKK}^{\ext}(i) \, :\,  {\KKK}^{\ext}(V) \longrightarrow {\KKK}^{\ext}(V^\prime)~,~~
\Phi_V(f) \longmapsto \Phi_{V^\prime}(f)\quad.
\end{flalign}
Compatibility of the map \eqref{eqn:KKKextmaps} with the relations in ${\KKK}^{\ext}$ 
is a straightforward check.
\sk

Recalling the embedding functor $J : \Reg_{\interior{M}}[\Cauchy_{\interior{M}}^{-1}] \to
\Reg_{M}[\Cauchy_M^{-1}]$ given in \eqref{eqn:J}, we observe that
the diagram of functors
\begin{flalign}\label{eqn:LanJKKK}
\xymatrix{
\ar[dr]_-{J}\Reg_{\interior{M}}[\Cauchy_{\interior{M}}^{-1}]  \ar[rr]^-{\KKK} &\ar@{=>}[d]^-{\gamma}& \Alg\\
&\Reg_{M}[\Cauchy_M^{-1}] \ar[ur]_-{{\KKK}^{\ext}} &
}
\end{flalign}
commutes via the natural transformation $\gamma : \KKK \to {\KKK}^{\ext}\, J$ with components
specified on the generators by the identity maps
\begin{flalign}\label{eqn:LanJKKKunit}
\gamma_{V_{\mathrm{int}}}^{}\,:\,  \KKK(V_{\mathrm{int}}) \longrightarrow {\KKK}^{\ext}(V_{\mathrm{int}})~,~~
\Phi_{V_{\mathrm{int}}}(f) \longmapsto \Phi_{V_{\mathrm{int}}}(f)\quad,
\end{flalign}
for all $V_{\mathrm{int}} \in \Reg_{\interior{M}}[\Cauchy_{\interior{M}}^{-1}]$. Notice that
$\gamma$ is a natural isomorphism because $\interior{V_{\mathrm{int}}} = V_{\mathrm{int}}$ 
and the partially-defined CCR on any interior region $V_{\mathrm{int}}$ 
coincides with the CCR.
\begin{theo}
\eqref{eqn:LanJKKK} is a left Kan extension of $\KKK: \Reg_{\interior{M}}[\Cauchy_{\interior{M}}^{-1}] \to
\Alg $ along $J : \Reg_{\interior{M}}[\Cauchy_{\interior{M}}^{-1}] \to \Reg_{M}[\Cauchy_M^{-1}]$.
As a consequence of uniqueness (up to unique natural isomorphism) of left Kan extensions
and Proposition \ref{propo:LanJ}, it follows that ${\KKK}^{\ext} \cong \ext \KKK$, i.e.\
${\KKK}^{\ext} \in \QFT(M)$ is a model for our universal extension $\ext\KKK\in\QFT(M)$
of the Klein-Gordon theory $\KKK\in \QFT(\interior{M})$. 
\end{theo}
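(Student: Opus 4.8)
The plan is to prove directly that the pair $({\KKK}^{\ext},\gamma)$ satisfies the universal property of the left Kan extension of $\KKK$ along $J$; combined with Proposition \ref{propo:LanJ} and the essential uniqueness (up to unique natural isomorphism) of Kan extensions, this yields ${\KKK}^{\ext}\cong\Lan_J\KKK\cong\ext\KKK$ in $\QFT(M)$. Concretely, I must show that for every functor $F:\Reg_M[\Cauchy_M^{-1}]\to\Alg$ and every natural transformation $\sigma:\KKK\to F\,J$, there is a unique natural transformation $\widetilde{\sigma}:{\KKK}^{\ext}\to F$ with $(\widetilde{\sigma}\,J)\circ\gamma=\sigma$. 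Note that, since both sides are $\Alg$-valued, all components of $\sigma$ are automatically unital $\ast$-algebra homomorphisms, and so are the images $F(i)$ of morphisms; this will allow us to transport the algebraic relations.

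I would define $\widetilde{\sigma}_V$ on the generators of ${\KKK}^{\ext}(V)$ as follows. Given $f\in C^\infty_\cc(\interior{V})$, I choose an interior region $V_{\mathrm{int}}\in\Reg_{\interior{M}}[\Cauchy_{\interior{M}}^{-1}]$ with $\supp f\subseteq V_{\mathrm{int}}\subseteq\interior{V}$, with inclusion $\iota:V_{\mathrm{int}}\to V$ in $\Reg_M[\Cauchy_M^{-1}]$, and set
\[
\widetilde{\sigma}_V\big(\Phi_V(f)\big)\,:=\,F(\iota)\big(\sigma_{V_{\mathrm{int}}}(\Phi_{V_{\mathrm{int}}}(f))\big)\in F(V)\quad.
\]
Independence of the choice of $V_{\mathrm{int}}$ follows exactly as in the consistency argument recorded in the Remark following the definition of ${\KKK}^{\ext}$: any two admissible choices $V_{\mathrm{int}},V_{\mathrm{int}}^\prime$ satisfy $\supp f\subseteq V_{\mathrm{int}}\cap V_{\mathrm{int}}^\prime\in\Reg_{\interior{M}}[\Cauchy_{\interior{M}}^{-1}]$, and naturality of $\sigma$ along the inclusions $V_{\mathrm{int}}\cap V_{\mathrm{int}}^\prime\to V_{\mathrm{int}}^{(\prime)}$ together with functoriality of $F$ forces the two expressions to coincide. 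I then extend $\widetilde{\sigma}_V$ multiplicatively and check that it annihilates the defining ideal of ${\KKK}^{\ext}(V)$: choosing a common $V_{\mathrm{int}}$ containing the relevant supports, linearity, hermiticity and the equation of motion are simply the images under the unital $\ast$-homomorphism $F(\iota)\circ\sigma_{V_{\mathrm{int}}}$ of the corresponding relations already valid in $\KKK(V_{\mathrm{int}})$; and for $f,g$ with $\supp f\cup\supp g\subseteq V_{\mathrm{int}}$ the relation $[\Phi_{V_{\mathrm{int}}}(f),\Phi_{V_{\mathrm{int}}}(g)]=\mathrm{i}\,\tau_{V_{\mathrm{int}}}(f,g)\,\oone$ in $\KKK(V_{\mathrm{int}})$ maps to the partially-defined CCR in $F(V)$.

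It then remains to verify naturality, the unit condition, and uniqueness. For a morphism $i:V\to V^\prime$, both $F(i)\circ\widetilde{\sigma}_V$ and $\widetilde{\sigma}_{V^\prime}\circ{\KKK}^{\ext}(i)$ send $\Phi_V(f)$ to $F(i\,\iota)\big(\sigma_{V_{\mathrm{int}}}(\Phi_{V_{\mathrm{int}}}(f))\big)$, using the same $V_{\mathrm{int}}\subseteq\interior{V}\subseteq\interior{V^\prime}$, so $\widetilde{\sigma}$ is natural. The identity $(\widetilde{\sigma}\,J)\circ\gamma=\sigma$ holds because on an interior region one may take $V_{\mathrm{int}}$ itself with the identity inclusion, whence $\widetilde{\sigma}_{V_{\mathrm{int}}}\big(\gamma_{V_{\mathrm{int}}}(\Phi_{V_{\mathrm{int}}}(f))\big)=\sigma_{V_{\mathrm{int}}}(\Phi_{V_{\mathrm{int}}}(f))$. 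For uniqueness, any $\widetilde{\sigma}$ satisfying the unit condition is forced on the generator $\Phi_V(f)={\KKK}^{\ext}(\iota)\big(\gamma_{V_{\mathrm{int}}}(\Phi_{V_{\mathrm{int}}}(f))\big)$ to equal $F(\iota)\big(\sigma_{V_{\mathrm{int}}}(\Phi_{V_{\mathrm{int}}}(f))\big)$ by naturality, hence agrees with our construction on generators and therefore everywhere.

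The \emph{main obstacle}, and the step requiring genuine care, is the geometric existence input: for each $f\in C^\infty_\cc(\interior{V})$ one needs an interior region $V_{\mathrm{int}}\in\Reg_{\interior{M}}[\Cauchy_{\interior{M}}^{-1}]$ that is causally convex in $M$, stable under Cauchy development, contained in $\interior{V}$, and containing the compact set $\supp f$. This is exactly where the standing global hyperbolicity assumption of Section \ref{sec:KG} is used: since $\interior{V}$ is globally hyperbolic with empty boundary by Proposition \ref{propo:globhyp}, such a region can be produced, for instance as a past set of a Cauchy temporal function on $\interior{V}$ chosen so that $\supp f$ lies to its past, or as the Cauchy development of a causally convex neighbourhood of $\supp f$, with causal convexity in $M$ and Cauchy-stability checked via Propositions \ref{propo:Cauchy1} and \ref{propo:spacetime}. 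Granting this, the only remaining delicate point is the well-definedness of $\widetilde{\sigma}_V$ on generators, which reduces to the intersection argument already used in the consistency Remark; the other verifications are routine transport of the relations of $\KKK$ through the homomorphisms $F(\iota)\circ\sigma_{V_{\mathrm{int}}}$.
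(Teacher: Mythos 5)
Your proof has a genuine gap, and it sits exactly at the point you yourself flagged as the main obstacle: the assumption that for every $f\in C^\infty_\cc(\interior{V})$ there is a \emph{single} interior region $V_{\mathrm{int}}\in\Reg_{\interior{M}}[\Cauchy_{\interior{M}}^{-1}]$ with $\supp f\subseteq V_{\mathrm{int}}\subseteq\interior{V}$. This is false in general. Recall that objects of $\Reg_{\interior{M}}[\Cauchy_{\interior{M}}^{-1}]$ must be causally convex \emph{in $M$}, not merely in $\interior{M}$ or in $\interior{V}$ regarded as spacetimes in their own right. Take the strip $M=\bbR\times[0,\pi]\subseteq\bbR^2$, $V=M$, and $f$ whose support contains the two points $p=(0,\pi/2)$ and $q=(2\pi,\pi/2)$. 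A light ray from $p$ reaches the boundary at time $\pi/2$ and can then be continued as a causal curve running along the timelike boundary, so $J^+_M(p)\cap J^-_M(q)$ contains the boundary points $(t,0)$ with $\pi/2\leq t\leq 3\pi/2$. Any causally convex subset of $M$ containing $\supp f$ must contain $J^+_M(p)\cap J^-_M(q)$ and hence must meet $\partial M$; consequently \emph{no} interior region contains $\supp f$. Your two proposed constructions do not repair this: a past set of a Cauchy temporal function of $\interior{V}$, and the Cauchy development $D(N)$ of a neighbourhood $N$ of $\supp f$, are both contained in $\interior{M}$ (the latter by Proposition \ref{propo:spacetime}~(c)), so by the argument above neither can be causally convex in $M$ once $\supp f$ is ``tall'' enough. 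This is not a removable technicality of your write-up: the definition of $\widetilde{\sigma}_V$ on generators, the naturality check, and the uniqueness argument all invoke the single covering region, so all of them break down.

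The paper's proof avoids this by using only what global hyperbolicity actually provides (Proposition \ref{propo:globhyp}): a \emph{finite cover} $\{V_\alpha\subseteq\interior{V}\}$ of the compact set $\supp f$ by interior regions, together with a subordinate partition of unity $\{\chi_\alpha\}$. One then defines $\zeta_V(\Phi_V(f)):=\sum_\alpha \BBB(i_\alpha)\big(\zeta_{V_\alpha}(\Phi_{V_\alpha}(\chi_\alpha f))\big)$, exploiting the \emph{linearity} relation in ${\KKK}^{\ext}(V)$ to split the generator into pieces each of which is supported in a single interior region, and verifies independence of the choice of cover and partition of unity by the refinement/intersection argument you used for consistency. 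The same device underlies the remaining steps: the equation-of-motion relation is checked by passing to the common refinement of two covers, and uniqueness holds because $\Phi_V(f)=\sum_\alpha\Phi_V(\chi_\alpha f)$ shows that the images of the interior regions still generate ${\KKK}^{\ext}(V)$. So the partition-of-unity mechanism is not an optional refinement of your argument but the missing key idea; the parts of your outline that survive (well-definedness via intersections, transport of relations through $F(\iota)\circ\sigma_{V_{\mathrm{int}}}$, the unit condition on interior regions) are exactly those that the paper also performs, but they must be carried out for sums of locally supported pieces rather than for a single region.
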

\begin{proof}
We have to prove that \eqref{eqn:LanJKKK} satisfies the universal property of left Kan extensions:
Given any functor $\BBB : \Reg_{M}[\Cauchy_M^{-1}] \to \Alg$ and natural transformation $\rho: 
\KKK \to \BBB\,J$, we have to show that there exists a 
unique natural transformation $\zeta : {\KKK}^{\ext}\to \BBB$ such that the diagram
\begin{flalign}
\xymatrix{
\ar[dr]_-{\gamma} \KKK \ar[rr]^-{\rho} && \BBB\,J\\
&{\KKK}^{\ext}\,J \ar[ru]_-{\zeta J}&
}
\end{flalign}
commutes. Because $\gamma$ is a natural isomorphism, it immediately follows that
$\zeta J$ is uniquely fixed by this diagram. Concretely, this means that 
the components $\zeta_{V_{\mathrm{int}}}$ corresponding to interior regions
$V_{\mathrm{int}}\in\Reg_{\interior{M}}[\Cauchy_{\interior{M}}^{-1}]$ are uniquely
fixed by
\begin{flalign}\label{eqn:zetaVint}
\zeta_{V_{\mathrm{int}}}^{}:= \rho_{V_{\mathrm{int}}}^{}\, \gamma_{V_{\mathrm{int}}}^{-1}  \,:\, {\KKK}^{\ext}(V_{\mathrm{int}}) \longrightarrow \BBB(V_{\mathrm{int}})\quad.
\end{flalign}
It remains to determine the components 
\begin{flalign}
\zeta_{V} \,:\, {\KKK}^{\ext}(V) \longrightarrow \BBB(V)
\end{flalign}
for generic regions $V\in \Reg_{M}[\Cauchy_M^{-1}]$. Consider any generator
$\Phi_V(f)$ of  ${\KKK}^{\ext}(V)$, where $f\in C^\infty_\cc(\interior{V})$,
and choose a finite cover $\{V_\alpha\subseteq \interior{V}\}$ of $\supp(f)$ 
by interior regions $V_\alpha \in\Reg_{\interior{M}}[\Cauchy_{\interior{M}}^{-1}]$, together 
with a partition of unity $\{\chi_\alpha\}$ subordinate to this cover. 
(The existence of such a cover is guaranteed by the assumption 
that $M$ is a globally hyperbolic spacetime with timelike boundary, 
see Proposition \ref{propo:globhyp}.) We define
\begin{flalign}\label{eqn:zetaV}
\zeta_{V} \big(\Phi_V(f)\big) \,:=\, \sum_\alpha \BBB(i_\alpha)\big(\zeta_{V_\alpha}\big( \Phi_{V_\alpha}(\chi_\alpha f)\big)\big)\quad,
\end{flalign}
where $i_\alpha :V_\alpha \to V$ is the inclusion. Our definition \eqref{eqn:zetaV} is independent of the
choice of cover and partition of unity: For any other 
$\{V^\prime_\beta\subseteq \interior{V}\}$ and $\{\chi^\prime_\beta\}$, we obtain
\begin{flalign}
\nn \sum_\beta \BBB(i_\beta)\big(\zeta_{V_\beta^\prime}\big( \Phi_{V_\beta^\prime}(\chi_\beta^\prime f)\big)\big)
&= \sum_{\alpha, \beta} \BBB(i_{\alpha\beta})\big(\zeta_{V_\alpha\cap V_\beta^\prime}\big( \Phi_{V_\alpha\cap V_\beta^\prime}(\chi_\alpha \chi_\beta^\prime f)\big)\big)\\
&=\sum_\alpha \BBB(i_\alpha)\big(\zeta_{V_\alpha}\big( \Phi_{V_\alpha}(\chi_\alpha f)\big)\big)\quad,
\end{flalign}
where $i_\beta : V_\beta^\prime \to V$ and $i_{\alpha\beta} : V_\alpha\cap V_\beta^\prime\to V$
are the inclusions. In particular, this implies that \eqref{eqn:zetaV} coincides with \eqref{eqn:zetaVint}
on interior regions $V=V_{\mathrm{int}}$. (Hint: Choose the cover given by the single region 
$V_{\mathrm{int}}$ together with its partition of unity.)
\sk

We have to check that \eqref{eqn:zetaV} preserves the relations in ${\KKK}^{\ext}(V)$. 
Preservation of linearity and Hermiticity is obvious.
The equation of motion relations are preserved because
\begin{flalign}
\nn \zeta_{V} \big(\Phi_V(P_{\interior{V}}^{} f)\big) &= 
\sum_\alpha \BBB(i_\alpha)\big(\zeta_{V_\alpha}\big( \Phi_{V_\alpha}(\chi_\alpha P_{V_\alpha}^{} f)\big)\big)\\
\nn &=\sum_{\alpha, \beta} \BBB(i_{\alpha\beta})\big(\zeta_{V_\alpha\cap V_\beta}\big( \Phi_{V_\alpha\cap V_\beta}(\chi_\alpha P_{V_\alpha\cap V_\beta}^{} \chi_\beta f)\big)\big)\\
&=\sum_{\beta} \BBB(i_{\beta})\big(\zeta_{V_\beta}\big( \Phi_{V_\beta}(P_{V_\beta}^{} \chi_\beta f)\big)\big) =0\quad.
\end{flalign}
Regarding the partially-defined CCR,
let $V_{\mathrm{int}} \in \Reg_{\interior{M}}[\Cauchy_{\interior{M}}^{-1}]$
with $V_{\mathrm{int}}\subseteq \interior{V}$ and $f,g\in C^\infty_\cc(\interior{V})$ 
with $\supp(f)\cup \supp(g) \subseteq V_{\mathrm{int}}$. We may choose the cover given by the 
single region $i: V_{\mathrm{int}}\to V$ together with its partition of unity.  We obtain for the commutator
\begin{flalign}
\Big[\zeta_{V} \big(\Phi_V(f)\big), \zeta_{V} \big(\Phi_V(g)\big)\Big]
= \Big[\BBB(i)\big(\zeta_{V_{\mathrm{int}}}\big( \Phi_{V_{\mathrm{int}}}(f)\big)\big) , 
\BBB(i)\big(\zeta_{V_{\mathrm{int}}}\big( \Phi_{V_{\mathrm{int}}}(g)\big)\big) \Big]
= \mathrm{i}\, \tau_{V_{\mathrm{int}}}(f,g)\,\oone\quad,
\end{flalign}
which implies that the partially-defined CCR are preserved.
\sk

Naturality of the components \eqref{eqn:zetaV} is easily verified. Uniqueness of the resulting 
natural transformation $\zeta : {\KKK}^{\ext}\to \BBB$ is a consequence of uniqueness of $\zeta J$
and of the fact that the $\Alg$-morphisms ${\KKK}^{\ext}(i_{\mathrm{int}}) : {\KKK}^{\ext}(V_{\mathrm{int}}) \cong
\KKK(V_{\mathrm{int}}) \to {\KKK}^{\ext}(V)$, for all interior regions $i_{\mathrm{int}} : V_{\mathrm{int}}\to V$,
generate ${\KKK}^{\ext}(V)$, for all $V\in \Reg_{M}[\Cauchy_M^{-1}]$. This completes the proof.
\end{proof}

\subsection*{Ideals from Green's operator extensions:}
The Klein-Gordon theory $\KKK\in\QFT(\interior{M})$ on the interior 
$\interior{M}$ of the globally hyperbolic spacetime $M$ with timelike boundary 
and its universal extension ${\KKK}^{\ext}\in\QFT(M)$ depend on the \textit{local}
retarded and advanced Green's operators $G_{V_{\mathrm{int}}}^\pm : C^\infty_\cc(V_{\mathrm{int}})
\to C^\infty(V_{\mathrm{int}})$ on all interior regions $V_{\mathrm{int}}^{} 
\in \Reg_{\interior{M}}[\Cauchy_{\interior{M}}^{-1}]$ in $M$. 
For constructing concrete examples of quantum field theories on globally hyperbolic spacetimes 
with timelike boundary as in \cite{Casimir}, one typically imposes suitable boundary conditions
for the field equation in order to obtain also \textit{global} retarded
and advanced Green's operators on $M$. Inspired by such examples,
we shall now show that any choice of an \textit{adjoint-related pair} $(G^+,G^-)$ 
consisting of a retarded and an advanced Green's operator for $P$ 
on $M$ (see Definition \ref{defi:Green} below) defines an ideal 
$\III_{G^\pm} \subseteq {\KKK}^{\ext}\in\QFT(M)$ that is trivial on the 
interior (cf.\ Definition \ref{defi:trivinteriorideal}). The corresponding quotient
${\KKK}^{\ext}\big/ \III_{G^\pm}\in\QFT(M)$ then may be interpreted as the Klein-Gordon theory on $M$,
subject to a specific choice of boundary conditions that is encoded in $G^\pm$.
\begin{defi}\label{defi:Green}
A \textit{retarded/advanced Green's operator} for the Klein-Gordon operator 
$P$ on $M$ is a linear
map $G^\pm : C^\infty_\cc(\interior{M}) \to C^\infty(\interior{M})$  which satisfies the
following properties, for all $f\in C^\infty_\cc(\interior{M})$:
\begin{itemize}
\item[(i)] $P\, G^\pm (f) = f$, 
\item[(ii)] $ G^\pm  (P f) = f$, and
\item[(iii)] $\supp (G^\pm (f) ) \subseteq J^\pm_M(\supp(f))$.
\end{itemize}
A pair $(G^+,G^-)$ consisting of a retarded and an advanced Green's operator 
for $P$ on $M$ is called \textit{adjoint-related} if $G^+$ is the formal adjoint of $G^-$, i.e.\
\begin{flalign}
\int_M G^+ (f) \,g~\vol_M = \int_M f \,G^- (g)~\vol_M \quad,
\end{flalign}
for all $f,g \in C^\infty_\cc(\interior{M})$.
\end{defi}
\begin{rem}
In contrast to the situation where $M$ is a globally hyperbolic spacetime 
\textit{with empty boundary} \cite{BGP}, existence, uniqueness and adjoint-relatedness 
of retarded/advanced Green's operators for the Klein-Gordon operator $P$ 
is in general not to be expected on spacetimes with timelike boundary. 
Positive results seem to be more likely on globally hyperbolic spacetimes 
with non-empty timelike boundary, although the general theory 
has not been developed yet to the best of our knowledge. 
Simple examples of adjoint-related pairs of Green's operators
were constructed e.g.\ in \cite{Casimir}.
\end{rem}

Given any region $V\in  \Reg_{M}[\Cauchy_M^{-1}]$ in $M$, 
which may intersect the boundary, 
we use the canonical inclusion $i : V\to M$ to define local retarded/advanced Green's operators
\begin{flalign}\label{eqn:GVdiagram}
\xymatrix{
C^\infty_\cc(\interior{V})\ar[d]_-{i_\ast} \ar[rr]^-{G^\pm_V} && C^\infty(\interior{V})\\
C^\infty_\cc(\interior{M}) \ar[rr]_-{G^\pm}&&  C^\infty(\interior{M}) \ar[u]_-{i^\ast}
}
\end{flalign} 
where $i_\ast$ denotes the pushforward of compactly supported functions (i.e.\ extension by zero) 
and $i^\ast$ the pullback of functions (i.e.\ restriction). Since $V\subseteq M$ is causally convex, 
it follows that $J^\pm_M(p) \cap V  = J^\pm_V(p)$ for all $p\in V$. 
Therefore $G^\pm_V$ satisfies the axioms of a retarded/advanced Green's operator for $P_V^{}$ on $V$. 
(Here we regard $V$ as a globally hyperbolic spacetime with timelike boundary, 
see Proposition \ref{propo:globhyp}. $J^\pm_V(p)$ 
denotes the causal future/past of $p$ in the spacetime $V$.) 
In particular, for all interior regions $V_{\mathrm{int}} \in \Reg_{\interior{M}}[\Cauchy_{\interior{M}}^{-1}]$ in $M$, 
by combining Proposition \ref{propo:globhyp} and \cite[Corollary 3.4.3]{BGP}
we obtain that $G^\pm_{V_{\mathrm{int}}}$ as defined in \eqref{eqn:GVdiagram} 
is the unique retarded/advanced Green's operator 
for the restricted Klein-Gordon operator $P_{V_{\mathrm{int}}}^{}$ 
on the globally hyperbolic spacetime $V_{\mathrm{int}}$ with empty boundary.
\sk

Consider any adjoint-related pair $(G^+,G^-)$ of Green's operator for $P$ on $M$.
For all $V\in\Reg_{M}[\Cauchy_M^{-1}]$, we set
$\III_{G^\pm} (V) \subseteq {\KKK}^{\ext}(V)$ to be the two-sided $\ast$-ideal
generated by the following relations:
\begin{itemize}
\item \textit{$G^\pm$-CCR:}  $\Phi_V(f) \, \Phi_V(g) - \Phi_V(g) \,\Phi_V(f) 
= \mathrm{i} \,\tau_V(f,g)~\oone$, for all $f,g\in C^\infty_\cc(\interior V)$, where
\begin{flalign}\label{eqn:tauV}
\tau_V(f,g) \,:=\, \int_V f \,G_V (g)~\vol_V
\end{flalign}
with $G_V := G_V^+ - G_V^-$ the causal propagator and $\vol_V$ the canonical volume form on $V$. 
\end{itemize}
The fact that the pair $(G^+,G^-)$ is adjoint-related (cf.\ Definition \ref{defi:Green}) implies
that for all $V\in\Reg_{M}[\Cauchy_M^{-1}]$ the causal propagator $G_V $ is formally skew-adjoint, hence $\tau_V$ is antisymmetric.
\begin{propo}
$\III_{G^\pm}\subseteq {\KKK}^{\ext}$ is an ideal that is trivial on the interior (cf.\ Definition \ref{defi:trivinteriorideal}).
\end{propo}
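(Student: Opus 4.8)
The proof divides into verifying the two conditions of Definition \ref{def:ideal} and then checking triviality on the interior in the sense of Definition \ref{defi:trivinteriorideal}. Condition (i) holds essentially by construction, since $\III_{G^\pm}(V)$ was defined to be a two-sided $\ast$-ideal of ${\KKK}^{\ext}(V)$; the only point worth recording is that the generating $G^\pm$-CCR relation is $\ast$-invariant up to an overall sign, which follows from $\Phi_V(f)^\ast = \Phi_V(f)$ together with the antisymmetry of $\tau_V$ guaranteed by the adjoint-relatedness of $(G^+,G^-)$. The substantive work therefore lies in establishing the functoriality condition (ii) and the triviality on the interior, and I expect the latter to be the conceptual heart of the argument, since it is precisely where the uniqueness of Green's operators on globally hyperbolic interior regions is needed.

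For condition (ii), the plan is to show that for every morphism $i : V\to V^\prime$ the algebra map ${\KKK}^{\ext}(i)$ sends $\III_{G^\pm}(V)$ into $\III_{G^\pm}(V^\prime)$, so that $\III_{G^\pm}(i)$ may be defined as its restriction. Since ${\KKK}^{\ext}(i)$ is a unital $\ast$-algebra homomorphism and $\III_{G^\pm}(V)$ is generated as a two-sided $\ast$-ideal, it suffices to track the images of the generating relations. Under ${\KKK}^{\ext}(i)$, the $G^\pm$-CCR generator for $f,g\in C^\infty_\cc(\interior{V})$ is sent to $\Phi_{V^\prime}(f)\,\Phi_{V^\prime}(g) - \Phi_{V^\prime}(g)\,\Phi_{V^\prime}(f) - \mathrm{i}\,\tau_V(f,g)\,\oone$, while the corresponding generator of $\III_{G^\pm}(V^\prime)$ for the same $f,g$ (now regarded as elements of $C^\infty_\cc(\interior{V^\prime})$ via $\interior{V}\subseteq\interior{V^\prime}$) carries the scalar $\tau_{V^\prime}(f,g)$. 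The key point is thus the naturality identity $\tau_V(f,g) = \tau_{V^\prime}(f,g)$. This I would read off from the defining diagram \eqref{eqn:GVdiagram}: unwinding $G^\pm_V = i^\ast\,G^\pm\,i_\ast$ and using that $f,g$ have compact support in $\interior{V}$, both $\tau_V(f,g)$ and $\tau_{V^\prime}(f,g)$ equal the single global integral $\int_M f\,(G^+-G^-)(g)\,\vol_M$ computed with the extensions by zero of $f,g$ to $\interior{M}$. Hence each generator is mapped to a generator of $\III_{G^\pm}(V^\prime)$, the two-sided $\ast$-ideal it generates is mapped into $\III_{G^\pm}(V^\prime)$, and functoriality of $\III_{G^\pm}$ is inherited from that of ${\KKK}^{\ext}$.

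Finally, for triviality on the interior I would fix an interior region $V_{\mathrm{int}}\in\Reg_{\interior{M}}[\Cauchy_{\interior{M}}^{-1}]$ and show that every generator of $\III_{G^\pm}(V_{\mathrm{int}})$ already vanishes in ${\KKK}^{\ext}(V_{\mathrm{int}})$. Taking $V_{\mathrm{int}}$ itself as the witnessing interior region in the partially-defined CCR of ${\KKK}^{\ext}$, the commutator $\Phi_{V_{\mathrm{int}}}(f)\,\Phi_{V_{\mathrm{int}}}(g) - \Phi_{V_{\mathrm{int}}}(g)\,\Phi_{V_{\mathrm{int}}}(f)$ already equals $\mathrm{i}\,\tau_{V_{\mathrm{int}}}(f,g)\,\oone$ in ${\KKK}^{\ext}(V_{\mathrm{int}})$, where $\tau_{V_{\mathrm{int}}}$ is built from the \emph{intrinsic} retarded/advanced Green's operators of $P_{V_{\mathrm{int}}}$. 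The $G^\pm$-CCR generator for the same $f,g$ therefore reduces to the central element $\mathrm{i}\,\big(\tau^{G^\pm}_{V_{\mathrm{int}}} - \tau_{V_{\mathrm{int}}}\big)(f,g)\,\oone$, where $\tau^{G^\pm}_{V_{\mathrm{int}}}$ is instead built from the restricted global operators $G^\pm_{V_{\mathrm{int}}}$ of \eqref{eqn:GVdiagram}. It then remains to prove $\tau^{G^\pm}_{V_{\mathrm{int}}} = \tau_{V_{\mathrm{int}}}$, and here I would invoke Proposition \ref{propo:globhyp}, by which $V_{\mathrm{int}}$ is a globally hyperbolic spacetime with \emph{empty} boundary, together with the uniqueness of retarded/advanced Green's operators on such spacetimes \cite[Corollary 3.4.3]{BGP}: the restricted operators $G^\pm_{V_{\mathrm{int}}}$ satisfy the defining axioms for $P_{V_{\mathrm{int}}}$ and must therefore coincide with the intrinsic ones, whence the two forms agree and the generator vanishes. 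As all generators vanish, the two-sided $\ast$-ideal they generate is zero, so $\III_{G^\pm}(V_{\mathrm{int}}) = 0$, as required.
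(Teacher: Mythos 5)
Your proof is correct and takes essentially the same approach as the paper: functoriality is deduced from the defining diagram \eqref{eqn:GVdiagram} (the compatibility $\tau_V(f,g)=\tau_{V^\prime}(f,g)$ under inclusions), and triviality on the interior follows because, by Proposition \ref{propo:globhyp} and uniqueness of Green's operators on globally hyperbolic spacetimes with empty boundary, the restricted operators $G^\pm_{V_{\mathrm{int}}}$ coincide with the intrinsic ones, so the $G^\pm$-CCR generators already vanish in ${\KKK}^{\ext}(V_{\mathrm{int}})$ on account of the partially-defined CCR. You merely spell out the details (the $\ast$-closure of the generators and the generator-chasing for functoriality) that the paper's terser proof leaves implicit.
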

\begin{proof}
Functoriality of $\III_{G^\pm} : \Reg_{M}[\Cauchy_M^{-1}] \to\Vec$ is a consequence of
\eqref{eqn:GVdiagram}, hence $\III_{G^\pm}\subseteq {\KKK}^{\ext}$ is an ideal in the sense of
Definition \ref{def:ideal}. It is trivial on the interior because
for all interior regions $V_{\mathrm{int}}^{} \in \Reg_{\interior{M}}[\Cauchy_{\interior{M}}^{-1}]$,
the Green's operators defined by \eqref{eqn:GVdiagram} are the unique retarded/advanced Green's operators
for $P_{V_{\mathrm{int}}}^{}$ and hence the relations imposed by 
$\III_{G^\pm} (V_{\mathrm{int}}) $ automatically hold true in ${\KKK}^{\ext}(V_{\mathrm{int}})$ 
on account of the (partially-defined) CCR.
\end{proof}

\begin{rem}
We note that the results of this section still hold true
if we slightly weaken the hypotheses of Definition \ref{def:globhyp}
by assuming the strong causality and the compact double-cones 
property only for points in the interior $\interior M$ of $M$. 
In fact, $\interior M$ can still be covered 
by causally convex open subsets and 
any causally convex open subset $U \subseteq \interior M$ 
becomes a globally hyperbolic spacetime 
with empty boundary once equipped with the induced metric, 
orientation and time-orientation. 
\end{rem}

\begin{ex}
Consider the sub-spacetime $M := \bbR^{m-1} \times [0,\pi] \subseteq \bbR^m$ of $m$-dimensional
Minkowski spacetime, which has a timelike boundary $\partial M = \bbR^{m-1} \times \{0,\pi\}$.
The constructions in \cite{Casimir} define an adjoint-related pair $(G^+,G^-)$
of Green's operators for $P$ on $M$ that corresponds to Dirichlet boundary conditions.
Using this as an input for our construction above, we obtain a
quantum field theory ${\KKK}^{\ext}\big/\III_{G^\pm}\in\QFT(M)$ that may be interpreted
as the Klein-Gordon theory on $M$ with Dirichlet boundary conditions.
It is worth to emphasize that our theory in general \textit{does not} coincide with the one 
constructed in \cite{Casimir}. To provide a simple argument, let us focus
on the case of $m=2$ dimensions, i.e.\ $M = \bbR\times [0,\pi]$, and compare our 
global algebra $\AAA^{\mathrm{BDS}}(M) := {\KKK}^{\ext}\big/\III_{G^\pm} (M)$ with 
the global algebra $\AAA^{\mathrm{DNP}}(M)$ constructed in  \cite{Casimir}.
Both algebras are CCR-algebras, however the underlying symplectic 
vector spaces differ: The symplectic vector space underlying our global algebra 
$\AAA^{\mathrm{BDS}}(M)$ is $C^\infty_{\cc}(\interior{M})\big/ P C^\infty_\cc(\interior{M})$
with the symplectic structure \eqref{eqn:tauV}. Using that the spatial slices of $M = \bbR\times [0,\pi]$ 
are compact, we observe that the symplectic vector space underlying 
$\AAA^{\mathrm{DNP}}(M)$ is given by the space $\mathfrak{Sol}_{\mathrm{Dir}}(M)$ 
of {\em all} solutions with Dirichlet boundary condition on $M$ (equipped with the usual symplectic structure). 
The causal propagator defines a symplectic map $G : C^\infty_{\cc}(\interior{M})\big/ P C^\infty_\cc(\interior{M})
\to \mathfrak{Sol}_{\mathrm{Dir}}(M)$, which however is not surjective for the following reason:
Any $\varphi \in C^\infty_\cc(\interior{M})$ has by definition compact support in the interior of $M$,
hence the support of $G\varphi \in \mathfrak{Sol}_{\mathrm{Dir}}(M)$ is schematically as follows
\begin{flalign}
\begin{tikzpicture}
\draw[white,fill=gray!20] (-2,-2) -- (-2,2) -- (2,2) -- (2,-2) -- (-2,-2);
\node at (-2,-2.2) {{\small $x=0$}};
\node at (2,-2.2) {{\small $x=\pi$}};
\draw[black,fill=white] (-0.8,0) -- (-2,1.2) -- (-2,-1.2) -- (-0.8,0);
\draw[black,fill=white] (0.8,0) -- (2,1.2) -- (2,-1.2) -- (0.8,0);
\draw[line width=1mm] (-2,-2) -- (-2,2);
\draw[line width=1mm]  (2,-2) -- (2,2);
\draw[black,fill=gray!40] (0,0) ellipse (0.7 and 0.35);
\node at (0,0) {{\small $\mathrm{supp}\,\varphi$}};
\node at (0,1.5) {{\small $\mathrm{supp}\,G\varphi$}};
\end{tikzpicture}
\end{flalign}
The usual mode functions $\Phi_k(t,x) = \cos (\sqrt{k^2 + m^2}\,t)\, \sin (kx ) \in \mathfrak{Sol}_\mathrm{Dir}(M)$,
for $k\geq 1$, are clearly not of this form, hence $G : C^\infty_{\cc}(\interior{M})\big/ P C^\infty_\cc(\interior{M})
\to \mathfrak{Sol}_{\mathrm{Dir}}(M)$ cannot be surjective. As a consequence,
the models constructed in \cite{Casimir} are in general not additive from the interior
and our construction ${\KKK}^{\ext}\big/\III_{G^\pm}$ should be interpreted as
the maximal additive subtheory of these examples. It is interesting to note that 
there exists a case where both constructions coincide: Consider the sub-spacetime 
$M:=\mathbb{R}^{m-1}\times [0,\infty)\subseteq \bbR^m$ 
of Minkowski spacetime with $m \geq 4$ even and take a
massless real scalar field with Dirichlet boundary conditions. 
Using Huygens' principle and the support properties of the 
Green's operators one may show that our algebra $\AAA^{\mathrm{BDS}}(M)$ 
is isomorphic to the construction in \cite{Casimir}.
\end{ex}

%%%%%%%%%%%%%%%%%%%%%%%%%%%%%%%%%%%%%%%%%%%%%%%%
%%%%%%%%%%%%%%%%%%%%%%%%%%%%%%%%%%%%%%%%%%%%%%%%

\section*{Acknowledgments}
We would like to thank Jorma Louko for useful discussions
about boundaries in quantum field theory and also Didier~A.~Solis 
for sending us a copy of his PhD thesis \cite{Solis}.
The work of M.B.\ is supported by a research grant funded by 
the Deutsche Forschungsgemeinschaft (DFG, Germany). 
Furthermore, M.B.\ would like to thank 
Istituto Nazionale di Alta Matematica (INdAM, Italy) 
for the financial support provided during his visit 
to the School of Mathematical Sciences 
of the University of Nottingham, 
whose kind hospitality is gratefully acknowledged.
The work of C.D.\ is supported by
the University of Pavia. C.D.\ gratefully acknowledges the kind of hospitality of the 
School of Mathematical Sciences of the University of Nottingham, where part of 
this work has been done.
A.S.\ gratefully acknowledges the financial support of 
the Royal Society (UK) through a Royal Society University 
Research Fellowship, a Research Grant and an Enhancement Award. 
He also would like to thank the Research Training Group 1670 ``Mathematics 
inspired by String Theory and Quantum Field Theory'' for the invitation 
to Hamburg, where part of this work has been done.

%%%%%%%%%%%%%%%%%%%%%%%%%%%%%%%%%%%%%%%%%%%%%%%%
%%%%%%%%%%%%%%%%%%%%%%%%%%%%%%%%%%%%%%%%%%%%%%%%

\appendix 

\section{\label{app:cattheory}Some concepts from category theory}
\subsection*{Adjunctions:} This is a standard concept, 
which is treated in any category theory textbook, e.g.\ \cite{Riehl}.
\begin{defi}\label{def:adjunction}
An  \textit{adjunction} consists of a pair of functors
\begin{flalign}
\xymatrix{
F : \CC \ar@<0.5ex>[r] & \DD : G \ar@<0.5ex>[l] \quad,
}
\end{flalign}
together with natural transformations $\eta: \id_\CC \to GF$ (called \textit{unit})
and $\epsilon : FG\to \id_\DD$ (called \textit{counit}) that satisfy 
the \textit{triangle identities}
\begin{flalign}
\xymatrix{
\ar[dr]_-{\id_F} F \ar[r]^-{F\eta} & FGF\ar[d]^-{\epsilon F}\\
& F
}\qquad\qquad
\xymatrix{
\ar[dr]_-{\id_G}G\ar[r]^-{\eta G } & GFG\ar[d]^-{G \epsilon}\\
&G
}
\end{flalign}
We call $F$ the \textit{left adjoint} of $G$ and 
$G$ the \textit{right adjoint} of $F$, and write $F\dashv G$.
\end{defi}

\begin{defi}\label{def:adjointequivalence}
An  \textit{adjoint equivalence} is an adjunction
\begin{flalign}
\xymatrix{
F : \CC \ar@<0.75ex>[r]_-{\sim} & \DD : G \ar@<0.75ex>[l] 
}
\end{flalign}
for which both the unit $\eta$ and the counit $\epsilon$ are natural isomorphisms.
Existence of an adjoint equivalence in particular implies 
that $\CC\cong\DD$ are equivalent as categories.
\end{defi}

\begin{propo}\label{propo:adjointunique}
If a functor $G :\DD\to \CC$ admits a left adjoint $F :\CC\to\DD$, then $F$ is unique up to
a unique natural isomorphism. Vice versa, if a functor $F : \CC\to \DD$ admits
a right adjoint $G : \DD\to \CC$, then $G$ is unique up to a unique natural isomorphism.
\end{propo}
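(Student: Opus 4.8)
The plan is to prove the first statement with the Yoneda lemma and then deduce the second by duality. The essential point is that a left adjoint is determined by a universal property, so its values are representing objects of certain functors and are therefore rigid; Yoneda merely packages this rigidity as a uniqueness statement.

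First I would suppose that $F, F' : \CC \to \DD$ are both left adjoint to $G : \DD \to \CC$. Recalling that the unit/counit description of an adjunction in Definition \ref{def:adjunction} is equivalent to a hom-set description (see e.g.\ \cite{Riehl}), each of the two adjunctions supplies natural bijections
\[
\Hom_\DD(FC, D) \cong \Hom_\CC(C, GD), \qquad \Hom_\DD(F'C, D) \cong \Hom_\CC(C, GD),
\]
natural in both $C \in \CC$ and $D \in \DD$. For each fixed $C$, composing one bijection with the inverse of the other yields a natural isomorphism of corepresentable functors $\Hom_\DD(FC, -) \cong \Hom_\DD(F'C, -)$ from $\DD$ to $\Set$. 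By the Yoneda lemma this isomorphism is induced by a unique isomorphism $\theta_C : FC \to F'C$ in $\DD$, and the uniqueness of the representing morphism is precisely the ``unique natural isomorphism'' asserted in the statement (the isomorphism so obtained is automatically compatible with the adjunction data, since the hom-set bijections encode the units).

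It then remains to verify that the family $\theta = (\theta_C)_{C \in \CC}$ is natural in $C$, i.e.\ that $\theta_{C'} \circ Ff = F'f \circ \theta_C$ for every $\CC$-morphism $f : C \to C'$. I would establish this by transporting the desired commuting square through the hom-set bijections, where it reduces to the naturality of those bijections in the $C$-variable, which holds by hypothesis. I expect this naturality check to be the only genuine (if routine) computation; the rest is formal. Alternatively, one can write $\theta$ explicitly as $\theta = (\epsilon F') \circ (F \eta') : F \to F'$ in terms of the units and counits of the two adjunctions and verify directly, using the triangle identities, that it is a natural isomorphism with inverse $(\epsilon' F) \circ (F' \eta)$.

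Finally, for the converse I would argue by duality: if $G, G' : \DD \to \CC$ are both right adjoint to $F$, then $G^{\op}, (G')^{\op} : \DD^{\op} \to \CC^{\op}$ are both left adjoint to $F^{\op} : \CC^{\op} \to \DD^{\op}$, so the first part applied in the opposite categories produces a unique natural isomorphism $G^{\op} \cong (G')^{\op}$, equivalently a unique natural isomorphism $G \cong G'$.
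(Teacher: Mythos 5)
The paper offers no proof of this proposition to compare against: it appears in Appendix \ref{app:cattheory} as a standard category-theoretic fact, with only a pointer to the textbook literature (e.g.\ \cite{Riehl}). Judged on its own, your argument is correct and is the standard one: representability plus the Yoneda lemma gives the pointwise isomorphisms, naturality in $C$ follows by transporting the square through the hom-set bijections (using their naturality in $C$), and the statement for right adjoints follows by passing to opposite categories, where $F \dashv G$ becomes $G^{\op} \dashv F^{\op}$. Your explicit alternative $\theta = (\epsilon F')\circ(F\eta')$ with inverse $(\epsilon' F)\circ(F'\eta)$ is likewise the standard unit--counit form of the same comparison map. Two small remarks. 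First, a bookkeeping point: by Yoneda, a natural isomorphism $\Hom_\DD(FC,-)\cong\Hom_\DD(F'C,-)$ corresponds to an isomorphism $F'C \to FC$ (the Yoneda embedding into corepresentables is contravariant), so your $\theta_C : FC \to F'C$ is really the inverse of the Yoneda-supplied morphism; harmless, but worth stating cleanly. Second, and more substantively, you correctly flag the one point where the proposition as phrased needs interpretation: ``unique natural isomorphism'' can only mean unique among isomorphisms compatible with the adjunction data (equivalently, commuting with the hom-set bijections, or satisfying $(G\theta)\circ\eta = \eta'$), since without that constraint uniqueness fails in general --- any nontrivial natural automorphism of $F$ would produce another natural isomorphism $F \cong F'$. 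Your parenthetical observation that the Yoneda-produced isomorphism is automatically compatible with the units is exactly what makes the uniqueness assertion meaningful, and it is the detail most blind attempts at this statement omit.
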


\subsection*{Localizations:}
Localizations of categories are treated for example in \cite[Section 7.1]{KashiwaraSchapira}.
In our paper we restrict ourselves to small categories.
\begin{defi}\label{def:localization}
Let $\CC$ be a small category and $W\subseteq \mathrm{Mor}\,\CC$ a subset of the set of morphisms.
A \textit{localization of $\CC$ at $W$} is a small category $\CC[W^{-1}]$ together with 
a functor $L :\CC\to \CC[W^{-1}]$ satisfying the following properties:
\begin{itemize}
\item[(a)] For all $(f :c \to c^\prime)\in W$, $L(f) : L(c)\to L(c^\prime)$ is an isomorphism in $\CC[W^{-1}]$.

\item[(b)] For any category $\DD$ and any functor $F: \CC\to \DD$ that sends morphisms in $W$
to isomorphisms in $\DD$, there exists a functor $F_W : \CC[W^{-1}]\to \DD$ and a natural isomorphism
$F \,\cong\, F_W \, L$.

\item[(c)] For all objects $G,H\in \DD^{\CC[W^{-1}]}$ in the functor category, the map
\begin{flalign}
\Hom_{\DD^{\CC[W^{-1}]}}\big(G,H\big) \longrightarrow \Hom_{\DD^{\CC}}\big(G\, L ,H\, L\big)
\end{flalign}
is a bijection of $\Hom$-sets.
\end{itemize}
\end{defi}
\begin{propo}
If $\CC[W^{-1}]$ exists, it is unique up to equivalence of categories.
\end{propo}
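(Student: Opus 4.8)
The plan is to run the standard universal-property argument, showing that any two solutions of the same universal problem are equivalent. Suppose $(\mathbf{L}_1,L_1)$ and $(\mathbf{L}_2,L_2)$ are two localizations of $\CC$ at $W$, with functors $L_k:\CC\to\mathbf{L}_k$ for $k=1,2$. Since $L_1$ sends every morphism in $W$ to an isomorphism by property~(a), I would apply property~(b) of the second localization to the functor $L_1$, obtaining a functor $\Phi:\mathbf{L}_2\to\mathbf{L}_1$ together with a natural isomorphism $L_1\cong\Phi\,L_2$. Symmetrically, property~(b) of the first localization applied to $L_2$ yields a functor $\Psi:\mathbf{L}_1\to\mathbf{L}_2$ with $L_2\cong\Psi\,L_1$. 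The goal is then to show that $\Phi$ and $\Psi$ are mutually quasi-inverse, which exhibits the equivalence $\mathbf{L}_1\cong\mathbf{L}_2$.

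First I would check that the two composites restrict correctly along the localization functors. Whiskering the isomorphisms above gives
\begin{flalign}
(\Phi\,\Psi)\,L_1 \,=\, \Phi\,(\Psi\,L_1) \,\cong\, \Phi\,L_2 \,\cong\, L_1 \,=\, \id_{\mathbf{L}_1}\,L_1\quad,
\end{flalign}
and in the same way $(\Psi\,\Phi)\,L_2\cong\id_{\mathbf{L}_2}\,L_2$. Thus $\Phi\,\Psi$ and $\id_{\mathbf{L}_1}$ become naturally isomorphic after precomposition with $L_1$, and likewise for the other composite.

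The crucial step, where the full strength of the universal property is needed, is to promote these restricted isomorphisms to genuine natural isomorphisms $\Phi\,\Psi\cong\id_{\mathbf{L}_1}$ and $\Psi\,\Phi\cong\id_{\mathbf{L}_2}$. Here I would invoke property~(c) with $\DD=\mathbf{L}_1$, which asserts that the whiskering map
\begin{flalign}
(-)\,L_1 \,:\, \Hom_{\mathbf{L}_1^{\mathbf{L}_1}}\big(G,H\big) \longrightarrow \Hom_{\mathbf{L}_1^{\CC}}\big(G\,L_1,H\,L_1\big)
\end{flalign}
is a bijection for all functors $G,H:\mathbf{L}_1\to\mathbf{L}_1$. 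Writing $\alpha:(\Phi\,\Psi)\,L_1\to\id_{\mathbf{L}_1}\,L_1$ for the isomorphism of the previous paragraph, I would take $\tilde\alpha$ and $\tilde\beta$ to be the unique preimages of $\alpha$ and $\alpha^{-1}$ under this bijection. Since the whiskering map is the morphism part of a functor, it respects vertical composition, so the images of $\tilde\beta\,\tilde\alpha$ and $\tilde\alpha\,\tilde\beta$ are $\alpha^{-1}\alpha$ and $\alpha\,\alpha^{-1}$, i.e.\ identities; by injectivity $\tilde\beta\,\tilde\alpha$ and $\tilde\alpha\,\tilde\beta$ are themselves identities, so $\tilde\alpha:\Phi\,\Psi\to\id_{\mathbf{L}_1}$ is a natural isomorphism. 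The identical argument with $\DD=\mathbf{L}_2$ yields $\Psi\,\Phi\cong\id_{\mathbf{L}_2}$, completing the proof.

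The only genuinely delicate point is this last step: property~(c) is stated merely as a bijection of $\Hom$-sets, so the main obstacle is to make precise that this bijection is induced by the whiskering functor $(-)\,L_1$ and therefore reflects isomorphisms. Everything else is formal bookkeeping with whiskering and the composition of natural isomorphisms.
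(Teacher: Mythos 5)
Your proof is correct. The paper itself gives no proof of this proposition --- it is stated as a standard fact, with localizations referenced to \cite[Section 7.1]{KashiwaraSchapira} --- so there is no in-paper argument to compare against; yours is precisely the canonical universal-property proof: property~(b) of each localization produces the comparison functors $\Phi$ and $\Psi$ together with the isomorphisms $L_1 \cong \Phi\, L_2$ and $L_2 \cong \Psi\, L_1$, and property~(c), read as full faithfulness of the precomposition functor $(-)\,L_k$, upgrades the whiskered isomorphisms to genuine natural isomorphisms $\Phi\,\Psi \cong \id$ and $\Psi\,\Phi \cong \id$. You also correctly identified and discharged the one delicate point: the bijection of property~(c) is the action on $\Hom$-sets of the precomposition functor, so it respects vertical composition (for the various pairs $(G,H)$ involved) and therefore reflects isomorphisms.
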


%%%%%%%%%%%%%%%%%%%%%%%%

\end{document}